\newtheorem{proposition}{Proposition}
\newtheorem*{proposition*}{Proposition}
\newtheorem*{theorem*}{Theorem}
\newtheorem{lemma}{Lemma}
\newtheorem{corollary}{Corollary}
\theoremstyle{definition}
\newtheorem{remark}{Remark}
\newtheorem{example}{Example}
\newtheorem{definition}{Definition}
\newtheorem*{definition*}{Definition}
\newcommand{\complex}{\mathbb C} 
\newcommand{\nat}{\mathbb N} 
\newcommand{\hi}{\mathcal{H}} 
\newcommand{\hik}{\mathcal{K}} 
\newcommand{\hv}{\mathcal{V}} 
\newcommand{\lh}{\mathcal{L(H)}} 
\newcommand{\lsh}{\mathcal{L}_s(\mathcal{H})} 
\newcommand{\lk}{\mathcal{L(K)}} 
\newcommand{\sh}{\mathcal{S(H)}} 
\newcommand{\shik}{\mathcal{S(K)}} 
\newcommand{\shv}{\mathcal{S(V)}} 
\newcommand{\eh}{\mathcal{E(H)}} 
\newcommand{\ip}[2]{\left\langle\,#1\,|\,#2\,\right\rangle} 
\newcommand{\ket}[1]{|#1\rangle} 
\newcommand{\bra}[1]{\langle#1|} 
\newcommand{\kb}[2]{|#1\rangle\langle#2|} 
\newcommand{\tr}[1]{\mathrm{tr}\left[#1\right]} 
\newcommand{\ptr}[2]{\mathrm{tr}_{#1}\left[#2\right]} 
\newcommand{\obs}{\mathcal{O}}
\newcommand{\A}{\mathsf{A}}
\newcommand{\B}{\mathsf{B}}
\newcommand{\C}{\mathsf{C}}
\newcommand{\Eo}{\mathsf{E}}
\newcommand{\Fo}{\mathsf{F}}
\newcommand{\Go}{\mathsf{G}}
\newcommand{\Po}{\mathsf{P}}
\newcommand{\X}{\mathsf{X}}
\newcommand{\Z}{\mathsf{Z}}
\newcommand{\N}{\mathcal{N}}
\newcommand{\E}{\mathcal{E}}
\newcommand{\F}{\mathcal{F}}
\newcommand{\Q}{\mathcal{Q}}
\newcommand{\R}{\mathcal{R}}
\newcommand{\I}{\mathcal{I}}
\newcommand{\J}{\mathcal{J}}
\newcommand{\G}{\mathcal{G}}
\newcommand{\ins}{\mathrm{Ins}}
\newcommand{\Y}{\mathcal{Y}}
\newcommand{\ch}{\mathrm{Ch}}
\newcommand*{\comp}{\hbox{\hskip0.85mm$\circ\hskip-1mm\circ$\hskip0.85mm}}
\newcommand{\nd}{\ {\scriptstyle{\mathsf{ND}}} \ }
\begin{document}

\title[]{Incompatibility of quantum instruments}

\author{Leevi Lepp\"aj\"arvi}
\affiliation{RCQI, Institute of Physics, Slovak Academy of Sciences, Dúbravská cesta 9, 84511 Bratislava, Slovakia }
\email{leevi.leppajarvi@savba.sk}
\orcid{0000-0002-9528-1583}

\author{Michal Sedl\'ak}
\affiliation{RCQI, Institute of Physics, Slovak Academy of Sciences, Dúbravská cesta 9, 84511 Bratislava, Slovakia }
\affiliation{Faculty of Informatics, Masaryk University, Botanick\'{a} 68a, 602\,00 Brno, Czech Republic }
\email{michal.sedlak@savba.sk}
\orcid{0000-0002-9342-5862}

\begin{abstract}
Quantum instruments describe outcome probability as well as state change induced by measurement of a quantum system.  Incompatibility of two instruments, i. e. the impossibility to realize them simultaneously on a given quantum system, generalizes incompatibility of channels and incompatibility of positive operator-valued measures (POVMs). We derive implications of instrument compatibility for the induced POVMs and channels. We also study relation of instrument compatibility to the concept of non-disturbance. Finally, we prove equivalence between instrument compatibility and postprocessing of certain instruments, which we term complementary instruments. We illustrate our findings on examples of various classes of instruments.
\end{abstract}

\maketitle
\section{Introduction} \label{sec:1}
\noindent Incompatibility of measurements captures the fact that not all (and not even all pairs of) quantum measurements can be measured jointly simultaneously, and  it is widely recognized as one of the most important nonclassical features of quantum theory. The roots of incompatibility can be found already in the works of Heisenberg \cite{Heisenberg27} and Bohr \cite{Bohr28} with the most paradigmatic example being the inability to sharply measure the position and momentum of a particle at the same time. 

Once the concept of incompatibility was recognized, it was first characterized through commutativity relations of sharp observables and later generalized to existence of a joint measurement device with suitable marginals to encompass the modern description of quantum measurements via the positive operator-valued measures (POVMs) (see e.g. \cite{GuhneHaapasaloKraftPellonpaaUola21} for a short historical review). Indeed various works linked incompatibility of POVMs to Bell nonlocality (as Bell inequalities can be violated only using incompatible measurements) \cite{Fine82,WolfPerez-GarciaFernandez09}, contextuality \cite{LiangSpekkensWiseman11,XuCabello19,TavakoliUola20}, steering \cite{QuintinoVertesiBrunner14}, various quantum information tasks such as state discrimination \cite{SkrzypczykSupicCavalcanti19, CarmeliHeinosaariToigo19, UolaKraftShangYuGuhne19} and random access codes \cite{CarmeliHeinosaariToigo20,HeinosaariLeppajarvi22} and in general to the nonclassicality of an operational theory \cite{Plavala16}. For more detailed review of incompatibility we encourage the reader to see  \cite{GuhneHaapasaloKraftPellonpaaUola21,HeinosaariMiyaderaZiman16}.

The concept of joint measurability is an operational notion involving any preparation, transformation or measurement devices with various types of inputs and outputs, so it is not limited to POVMs. Indeed, the (in)compatibility of quantum channels, i.e., devices that describe the transformations between quantum systems, was introduced in \cite{HeinosaariMiyadera17} and later studied e.g. in \cite{Haapasalo19, Kuramochi18, GirardPlavalaSikora21}. More generally, the (in)compatibility of channels acting between any two systems (classical, quantum or mixed quantum-classical) has been considered in \cite{CarmeliHeinosaariMiyaderaToigo19}. In particular, the compatibility of quantum instruments, i.e., devices that can be used to measure the quantum system while also including the description of the post-measurement state thus allowing for sequential measurements, was explicitly studied recently in  \cite{MitraFarkas22} and it has resulted into an increased interest and a series of works \cite{JiChitambar21,DArianoPerinottiTosini22,MitraFarkas22b,BuscemiKobayashiMinagawaPerinottiTosini22} on the topic.

In particular, in \cite{MitraFarkas22} the authors proposed two possible definitions for compatibility of instruments: the traditional compatibility, which strictly generalizes the concept from the compatibility of POVMs as finding a joint instrument with proper classical marginals, and the parallel compatibility, which captures the idea that the joint device should be able to produce all the outputs, including the post-measurement states, simultaneously for all original instruments. Advocating for the latter definition, the same authors focused on quantitative characterization of incompatibility of instruments through incompatibility robustness in \cite{MitraFarkas22b}. On the other hand, the authors of \cite{JiChitambar21} focused on the traditional incompatibility and  considered it as a resource for programmable quantum instruments. In \cite{BuscemiKobayashiMinagawaPerinottiTosini22} the authors introduced more different notions for compatibility of instruments, including the previous two, and used them to obtain hierarchy of resource theories for incompatibility, and in \cite{DArianoPerinottiTosini22} the authors considered incompatibility of measurement devices in an even more general setting of operational probabilistic theories and connected it to disturbance caused by measurements.

\begin{figure}
\centering
\includegraphics[scale=0.25]{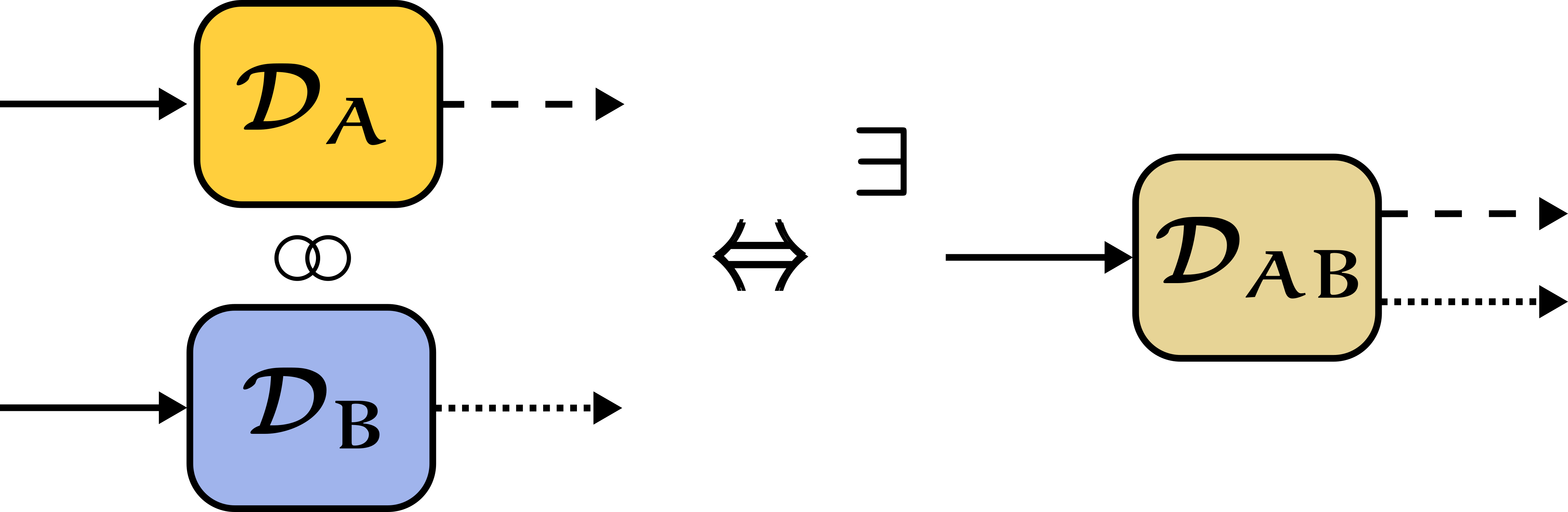}
\caption{\label{fig:comp} Input-output devices $\mathcal{D}_A$ and $\mathcal{D}_B$ are compatible, denoted by $\mathcal{D}_A \comp \mathcal{D}_B$, if and only if there exists a joint device $\mathcal{D}_{AB}$ which produces an output for both original devices simultaneously with a single input in each use of the joint device.}
\end{figure}

The starting point for our work is what one can view as an operational description of parallel compatibility: compatible devices should allow for joint simultaneous implementation meaning that a joint device with a single input should be used to obtain the correct output for each original device at each experimental run. For general input-output devices this is illustrated in Fig. \ref{fig:comp}. We note that the notion of parallel compatibility can be applied to any set of possibly different types of devices (in any operational theory) as long as they have the same input types, thus unifying the existing notions of POVM-POVM, POVM-channel and channel-channel compatibility. Conceptually parallel compatibility was considered in \cite{HeinosaariMiyaderaZiman16} and mathematically the resulting definitions can be obtained from the channel compatibility considered in \cite{CarmeliHeinosaariMiyaderaToigo19}. The terminology is adopted from \cite{MitraFarkas22}. Thus, in this work we consider compatible instruments to be parallelly compatible. 

We study in detail the relation between the compatibility of different devices, namely measurements (described by POVMs), channels and instruments, and show that indeed the compatibility of instruments generalizes all the previously defined notions. Indeed, although instruments include a description both for the induced POVM and the state change induced by the measurement, studying the compatibility relations between the induced POVMs/channels shows that compatibility of instruments cannot be reduced into the compatibility of its induced POVMs or channels. Next we generalize the concept of non-disturbance to quantum instruments and generalize the known connections between compatibility and non-disturbance (see e.g. \cite{HeinosaariWolf10}) to instruments: we show that if the measurement process of a POVM does not disturb an instrument, then the POVM and the instrument must be compatible. Lastly, by introducing the concept of a complementary instrument we can show that the instrument postprocessing relation \cite{LeppajarviSedlak21} can be used to characterize compatible instruments, thus strictly generalizing the result of \cite{HeinosaariMiyadera17} for channels: two instruments are compatible if and only if one of them can be postprocessed from the complementary instrument of the other. While in literature we found almost no case studies of compatible channel type pairs originating from postprocessing of a complementary channel, it was possible to provide some simple illustrative examples in the general case of instruments. This shows that this characterization can be quite useful for certain simple classes of instruments.

The structure of the manuscript is as follows: In Sec. \ref{sec:main-results} we shortly summarize the main results that we derive in this work. In Sec.  \ref{sec:2} we introduce all the preliminaries and the used notation. In Sec. \ref{sec:3} we define compatibility of different devices and study their relations. Sec. \ref{sec:4} focuses on exploring the connection of compatibility and non-disturbance. In Sec. \ref{sec:5} we show the connection of compatibility and postprocessing of instruments and give a useful characterization for compatible instruments. Lastly, in Sec. \ref{sec:6} we use the previously obtained characterization to exemplify compatibility between simple classes of instruments. In Sec. \ref{sec:7} we give our summary and conclusions.

\section{Main results} \label{sec:main-results}

The aim of this paper is to investigate incompatibility of instruments and to find links of this problem to other problems and known results. Formally, a quantum instrument $\I$ with a finite set of measurement outcomes $\Omega$ that describes the measurement and the related state transformation process between Hilbert spaces $\hi$ and $\hik$ is a mapping $\I: x \mapsto \I_x$ from $\Omega$ to the set of completely positive linear maps from $\hi$ to $\hik$ such that $\Phi^{\I}:=\sum_{x \in \Omega} \I_x $ is a quantum channel, i.e., a completely positive trace-preserving map. For any quantum state (density matrix) $\varrho $, the (unnormalized) conditional post-measurement state is $\I_x(\varrho)$, where $x$ is the outcome obtained in the measurement. Probability of obtaining $x$ is governed by the induced POVM $\A^\I $, i.e., a mapping $\A^\I: x \mapsto \A^\I(x)$ from $\Omega$ to the set of positive operators on $\hi$ such that $\sum_{x \in \Omega} \A^\I(x) = I$, defined via the formula $\tr{\A^\I(x)\varrho} = \tr{\I_x(\varrho)}$. The set of instruments from $\hi$ to $\hik$ with outcome set $\Omega$ is denoted by $\ins(\Omega,\hi,\hik)$. 

Quantum instrument can be seen as a concept that easily encompasses the concept of quantum channel or a POVM as its special case if the outcome set of the instrument has just a single element or the output space is one-dimensional, respectively. With suitable definitions this allows us to consider compatibility of a POVM with some other POVM or a channel or an instrument, or compatibility of a channel with another channel or an instrument as a special case of compatibility of two instruments. We will make use of this fact to avoid repeating the arguments separately for POVMs or  channels,  although we might state the results separately for each of them for readers convenience. The above unifying notion of compatibility, as discussed in detail in Sec. \ref{sec:3}, takes the following form known as parallel compatibility \cite{MitraFarkas22}:

\begin{definition*}[Compatibility]
Two instruments $\I \in \ins(\Omega,\hi,\hik)$ and $\J \in \ins(\Lambda, \hi,\hv)$ are compatible, denoted by $\I \comp \J$, if there exists a joint instrument $\G \in \ins(\Omega \times \Lambda, \hi, \hik \otimes \hv)$ such that
\begin{align*}
\sum_{x \in \Omega} \ptr{\hik}{\G_{(x,y)}(\varrho)} = \J_y(\varrho) \quad \forall y \in \Lambda, \quad \sum_{y \in \Lambda} \ptr{\hv}{\G_{(x,y)}(\varrho)} = \I_x(\varrho) \quad \forall x \in \Omega
\end{align*}
for all states $\varrho$. 
\end{definition*}

\begin{remark}[Parallel vs. traditional compatibility]
    As we already mentioned, the previous definition is called \emph{parallel compatiblity} in \cite{MitraFarkas22} in contrast to the other notion of compatibility for instruments considered by the authors in \cite{MitraFarkas22}, which they call \emph{traditional compatibility}: \emph{two instruments $\I \in \ins(\Omega,\hi,\hik)$ and $\J \in \ins(\Lambda, \hi,\hik)$ are traditionally compatible if there exists an instrument $\G \in \ins(\Omega \times \Lambda, \hi, \hik )$ such that $\sum_{x \in \Omega} \G_{(x,y)} = \J_y$ for all $y \in \Lambda$ and $ \sum_{y \in \Lambda} \G_{(x,y)} = \I_x$ for all $x \in \Omega$.}

    Thus, in order for two instruments to be traditionally compatible, the instruments should have also the same output spaces and the joint device could be used to recover one of the original instruments just by ignoring the classical measurement outcome associated to the other instrument and vice versa. Furthermore, what follows from the definition is that for two instruments to be traditionally compatible it is necessary that they have the same induced channel. Next, we want to point out some differences between traditional and parallel compatibility.
    
    While it is clear that traditional compatibility recovers the compatibility of POVMs, it does not follow the spirit of the operational description of parallel compatibility of general (possibly different type) input-output devices. Naturally, traditional compatibility is also an operational notion, but it is also a more restrictive notion of compatibility since it can only be defined for devices with the same output as well as the same input. Thus, compatibility of a POVM and a channel or a POVM and an instrument cannot even be considered to be traditionally compatible. On the other hand, parallel compatibility of a POVM and another POVM or a channel or an instrument can be linked to an important physical problem of non-distrubance (see Sec. \ref{sec:4} for details).
    
    Furthermore, even though the definition of traditional compatibility can be applied for channels (with the same output space), one can immediately see that in this case the definition becomes trivial, since a channel can be traditionally compatible only with itself. On the other hand, parallel compatibilty of channels is closely related to tasks such as cloning and broadcasting which can be phrased simply in terms of parallel compatiblility of identity channels. Traditional compatibility is a valid concept for instruments, which should be studied on its own, including exploration of its connections to other information-theoretic tasks. However, it does not serve well as a unifying concept of compatibility for general input-output devices; a concept that we are trying to pursue in this work. 

    Another difference between traditional compatibility and parallel compatibility is that even though both of them generalize compatibility of POVMs, parallel compatibility respects wider class of reductions. What we mean by that is that if one considers POVMs as measure-and-prepare channels/instruments, which encode the classical measurement outcomes into (distinguishable) quantum states, it can be shown (see Sec. \ref{sec:m-a-p} for details) that two POVMs are compatible if and only if the associated measure-and-prepare channels/instruments are parallelly compatible. On the other hand, one can find examples where compatible POVMs have traditionally incompatible measure-and-prepare channels/instruments. 
\end{remark}

In the current manuscript we carefully analyze consequences of instrument compatibility for the induced POVMs and channels of the two instruments. We also introduce and study a previously omitted concept of compatibility of a POVM and an instrument which is a special case of the previous definition when one considers the POVM as an instrument with a one-dimensional output space. We find that \emph{compatibility with an instrument automatically implies compatibility with its induced POVM and its induced channel}. These necessary conditions might be often easier to check, and would quite often suffice to conclude incompatibility of a given pair of instruments. On the other hand, we show that \emph{instrument compatibility is not equivalent to simultaneous compatibility of the induced POVMs and the induced channels of the instruments}. 

Compatibility of POVMs is known to be closely related to the important operational concept of non-disturbance.
\begin{definition*}[Non-disturbance]
An instrument $\I$ is said \emph{not to disturb an instrument $\J$} if $\J_y \circ \Phi^\I = \J_y$ for all $y$. We also say that a POVM $\A$ \emph{does not disturb an instrument} $\J $ if there exists an instruments $\I $ with $\A^\I= \A$ such that $\I$ does not disturb $\J$.
\end{definition*}
This concept has been mostly studied in terms of POVMs (second part of the definition when $\J$ is a POVM) but we generalize this concept to general instruments. It is known \cite{HeinosaariWolf10} that if a POVM does not disturb another POVM then the POVMs are compatible, and furhtermore, if one of the POVMs is sharp, i.e., projection-valued, then also the converse holds. We generalize this result in the case when a POVM does not disturb an instrument:

\begin{proposition*}
If a POVM $\A$ does not disturb an instrument $\J$, then $\A$ and $\J$ are compatible. Furthermore, for sharp $\A$ also the converse holds.
\end{proposition*}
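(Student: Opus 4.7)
The plan is to prove the two implications separately, with the forward direction being essentially a composition argument and the converse requiring a Kraus-level analysis that relies crucially on sharpness. For the forward direction, I would exploit the realizing instrument directly: let $\I \in \ins(\Omega, \hi, \hi)$ be the instrument witnessing non-disturbance, so that $\A^\I = \A$ and $\J_y \circ \Phi^\I = \J_y$ for every $y$. The natural joint device is then the sequential composition $\G_{(x,y)} := \J_y \circ \I_x$, which is manifestly in $\ins(\Omega \times \Lambda, \hi, \hv)$ as a composition of completely positive maps whose total sum $\Phi^\J \circ \Phi^\I$ is trace-preserving. Its $x$-marginal reproduces $\J_y$ immediately by the non-disturbance assumption, and its $y$-marginal at the trace level reproduces $\A$ because $\Phi^\J$ is trace-preserving; treating $\A$ as an instrument with one-dimensional output, this matches the compatibility definition and delivers $\A \comp \J$.

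For the converse, assume $\A(x) = P_x$ is sharp (so that $\{P_x\}$ is a pairwise orthogonal family of projections summing to $\id$) and fix a joint instrument $\G$. My candidate for the realizing instrument is the L\"uders instrument $\I_x(\varrho) := P_x \varrho P_x$, for which $\A^\I = \A$ is automatic; the task reduces to proving $\J_y \circ \Phi^\I = \J_y$. The strategy is to propagate the sharpness constraint from the effect level to the Kraus level. The induced POVM of the joint instrument satisfies $\sum_y \A^\G(x,y) = P_x$, hence $0 \leq \A^\G(x,y) \leq P_x$, and since $P_x$ is a projection this forces every Kraus operator $K_i^{(x,y)}$ of $\G_{(x,y)}$ to annihilate $\ran(\id - P_x)$, i.e.\ $K_i^{(x,y)} = K_i^{(x,y)} P_x$. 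Consequently $\G_{(x,y)}(\cdot) = \G_{(x,y)}(P_x \cdot P_x)$, and pairwise orthogonality makes the cross terms $\G_{(x,y)}(P_{x'}\,\cdot\,P_{x'})$ vanish for $x \neq x'$. Summing over $x$ then collapses $\J_y(\Phi^\I(\varrho)) = \sum_{x,x'} \G_{(x,y)}(P_{x'} \varrho P_{x'}) = \sum_x \G_{(x,y)}(P_x \varrho P_x) = \sum_x \G_{(x,y)}(\varrho) = \J_y(\varrho)$.

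The main obstacle is precisely the lift from the operator inequality $\A^\G(x,y) \leq P_x$ to the structural identity $K_i^{(x,y)} = K_i^{(x,y)} P_x$. This is a standard consequence of sharpness but it is the only point where the projection-valued assumption is essential: both the deduction that the Kraus operators factor through $P_x$ and the subsequent vanishing of off-diagonal terms $\G_{(x,y)}(P_{x'} \cdot P_{x'})$ for $x \neq x'$ depend on $\{P_x\}$ being a projective resolution of $\id$. Without sharpness the same strategy breaks down, consistently with the fact that non-disturbance is strictly stronger than compatibility in general.
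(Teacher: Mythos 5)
Your proof is correct, and the forward direction coincides exactly with the paper's: the sequential composition $\G_{(x,y)} = \J_y \circ \I_x$ is precisely the joint instrument used there. For the converse you take a genuinely different route. The paper first invokes the result of Heinosaari--Reitzner--Stano that a sharp POVM has a \emph{unique} joint POVM $\Go(x,y)=\A(x)\A^{\J}(y)$ with any compatible POVM (which forces commutativity), then decomposes the joint instrument as $\G_{(x,y)}=\E^{(x,y)}\circ\I^{\Go}_{(x,y)}$ via the conditional-channel representation of Example \ref{ex:luders}, and finally collapses the double sum using $\I^{\A}_x\circ\I^{\A}_{x'}=\delta_{xx'}\I^{\A}_x$. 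You instead argue directly at the Kraus level: from $O\leq\A^{\G}(x,y)\leq\sum_{y'}\A^{\G}(x,y')=P_x$ and the projection property of $P_x$ you deduce that every Kraus operator of $\G_{(x,y)}$ annihilates $\ran(I-P_x)$, hence $\G_{(x,y)}(\varrho)=\G_{(x,y)}(P_x\varrho P_x)$ and the cross terms $\G_{(x,y)}(P_{x'}\varrho P_{x'})$ vanish for $x'\neq x$ by pairwise orthogonality of the $P_x$; summing over $x$ then gives $\J_y\circ\Phi^{\I^{\A}}=\J_y$. Your version is more elementary and self-contained, needing neither the uniqueness of the joint observable nor the Lüders decomposition of the joint instrument, whereas the paper's version makes the structural mechanism (commutativity with the sharp POVM) more explicit; both arguments produce the same witness, namely that the Lüders instrument of $\A$ does not disturb $\J$.
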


Another way how compatibility problem can be approached is by studying constructions, which lead to compatible pairs. For this purpose it is natural to consider how an instrument can be postprocessed in the most general way, because application of such postprocessings will not change compatibility of a pair of instruments. In particular, following \cite{LeppajarviSedlak21}, we say that an instrument $\I$ is a \emph{postprocessing} of an instrument $\J$, denoted by $\I \preceq \J$, if there exist instruments $\{\R^{(y)} \}_{y }$ such that $\I_x = \sum_{y \in \Lambda} \R^{(y)}_x \circ \J_y$ for all $x $.

Previously it was known \cite{HeinosaariMiyadera17} that a pair of channels are compatible if and only if one of them can be postprocessed from the complementary channel of the other channel. Inspired by a notion of complementary channel we defined analogous notion of a complementary instrument. In partucular, for an instrument $\I \in \ins(\Omega, \hi, \hik)$ with a dilation $(\hi_A, W, \Eo)$, meaning that $\I_x(\varrho)=\ptr{\hi_A}{W\varrho W^* \left(\Eo(x) \otimes I_{\hik} \right)}$, we define \emph{complementary instrument} $\I^C \in \ins(\Omega, \hi, \hi_A)$ relative to $(\hi_A, W, \Eo)$ via the formula $\I^C_x(\varrho)=\ptr{\hik}{\left(\sqrt{\Eo(x)}\otimes I_{\hik} \right) W\varrho W^*  \left(\sqrt{\Eo(x)}\otimes I_{\hik}\right) }$.

One of the main results of this paper is that we were able to show that two instruments are compatible if and only if one of them can be postproccessed from the complementary instrument of the other one. 
\begin{theorem*}
Let $\I$ and $\J $ be instruments and let $\I^C$ be a complementary instrument of $\I$ related to any dilation. Then $\I \comp \J$ if and only if $\J \preceq \I^C$.
\end{theorem*}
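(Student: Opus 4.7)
The plan is to prove both implications by explicit constructions that leverage the Stinespring--Ozawa dilation theorem for instruments.

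For $\J \preceq \I^C \Rightarrow \I \comp \J$, I would fix the dilation $(\hi_A, W, \Eo)$ defining $\I^C$ and a postprocessing witness $\J_y = \sum_x \R^{(x)}_y \circ \I^C_x$, where each $\R^{(x)} = \{\R^{(x)}_y\}_{y\in\Lambda}$ is an instrument from $\hi_A$ to $\hv$. The candidate joint instrument is
\begin{align*}
\G_{(x,y)}(\varrho) = (\R^{(x)}_y \otimes \id_\hik)\!\left[(\sqrt{\Eo(x)}\otimes I_\hik)\,W\varrho W^*\,(\sqrt{\Eo(x)}\otimes I_\hik)\right],
\end{align*}
after identifying $\hv\otimes\hik$ with $\hik\otimes\hv$. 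Complete positivity is inherited term by term, and trace preservation of $\sum_{x,y}\G_{(x,y)}$ follows from $W$ being an isometry, $\sum_x \Eo(x)=I_{\hi_A}$, and each $\sum_y\R^{(x)}_y$ being trace preserving. Tracing out $\hv$ and summing over $y$ collapses $\sum_y\R^{(x)}_y$ to a partial trace and yields $\ptr{\hi_A}{W\varrho W^*(\Eo(x)\otimes I_\hik)} = \I_x(\varrho)$; tracing out $\hik$ and summing over $x$ yields $\sum_x \R^{(x)}_y\circ\I^C_x = \J_y$ by definition of $\I^C$.

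For $\I \comp \J \Rightarrow \J \preceq \I^C$, let $\G \in \ins(\Omega\times\Lambda,\hi,\hik\otimes\hv)$ witness compatibility and fix a Stinespring--Ozawa dilation $(\hi_B, U, \Fo)$ of $\G$, so that $\G_{(x,y)}(\varrho)=\ptr{\hi_B}{U\varrho U^*(\Fo(x,y)\otimes I_{\hik\otimes\hv})}$. Marginalization shows that $(\hi_B\otimes\hv, U, \Fo^I\otimes I_\hv)$, with $\Fo^I(x):=\sum_y\Fo(x,y)$, is itself a dilation of $\I$, relative to which
\begin{align*}
\I^C_x(\varrho) = \ptr{\hik}{(\sqrt{\Fo^I(x)}\otimes I_\hv\otimes I_\hik)\,U\varrho U^*\,(\sqrt{\Fo^I(x)}\otimes I_\hv\otimes I_\hik)}.
\end{align*}
I would then introduce the conditional POVM $\mathsf{M}(y|x) = \Fo^I(x)^{-1/2}\Fo(x,y)\Fo^I(x)^{-1/2}$ on the support of $\Fo^I(x)$, extended to $\hi_B$ by sending the kernel projection to a single fixed outcome so that $\sum_y \mathsf{M}(y|x)=I_{\hi_B}$ while $\sqrt{\Fo^I(x)}\,\mathsf{M}(y|x)\,\sqrt{\Fo^I(x)} = \Fo(x,y)$, and set
\begin{align*}
\R^{(x)}_y(\sigma) = \ptr{\hi_B}{(\sqrt{\mathsf{M}(y|x)}\otimes I_\hv)\,\sigma\,(\sqrt{\mathsf{M}(y|x)}\otimes I_\hv)}.
\end{align*}
Each $\R^{(x)}$ is an instrument from $\hi_B\otimes\hv$ to $\hv$ (summing over $y$ gives the partial trace over $\hi_B$), and the partial-trace identity $\ptr{\hi_B}{(A\otimes I)\sigma(B\otimes I)} = \ptr{\hi_B}{(BA\otimes I)\sigma}$ reduces $\R^{(x)}_y(\I^C_x(\varrho))$ to $\ptr{\hik}{\G_{(x,y)}(\varrho)}$, whose $x$-sum is $\J_y$.

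This establishes the equivalence for the particular dilation of $\I$ read off from $\G$. To upgrade to an arbitrary dilation, I would invoke the standard fact that any two Stinespring--Ozawa dilations of $\I$ are related by an isometric intertwining of the ancillae, which renders the corresponding complementary instruments postprocessing-equivalent (each is obtainable from the other by appending a fixed ancilla state or by tracing out the extra ancillary degrees of freedom); transitivity of postprocessing then yields dilation-independence of the relation $\J \preceq \I^C$. The main obstacle is the backward direction: the correct candidate $\R^{(x)}_y$ must be read off from the dilation of $\G$ rather than that of $\I$, the generalized inverse in $\mathsf{M}(y|x)$ requires careful support bookkeeping so that each $\R^{(x)}$ is a bona fide instrument, and the dilation-independence step, although standard, demands some care.
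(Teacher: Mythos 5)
Your forward direction ($\J \preceq \I^C \Rightarrow \I \comp \J$) is the same construction as the paper's, essentially verbatim. Your backward direction, however, takes a genuinely different and arguably more direct route. The paper fixes a \emph{minimal} dilation $(\hi_\I, W^\I, \Eo^\I)$ of $\I$, relates it to a dilation of the joint instrument $\G$ by an intertwining isometry, realizes the joint POVM $\Eo^\G$ sequentially as an instrument $\Q$ followed by a POVM $\C$ on an auxiliary space $\complex^{|\Lambda|}$, and then uses the uniqueness of the POVM in a minimal dilation together with the L\"uders decomposition to extract the postprocessing instruments. You instead read off a (generally non-minimal) dilation of $\I$ directly from the dilation $(\hi_B, U, \Fo)$ of $\G$ — namely the ancilla $\hi_B\otimes\hv$ with POVM $\Fo^I\otimes I_\hv$, $\Fo^I(x)=\sum_y\Fo(x,y)$ — and build the postprocessing instruments explicitly from the conditional POVM $\mathsf{M}(y|x)$. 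Your support bookkeeping is sound: since $\Fo(x,y)\le\Fo^I(x)$, the identity $\sqrt{\Fo^I(x)}\,\mathsf{M}(y|x)\,\sqrt{\Fo^I(x)}=\Fo(x,y)$ survives the extension of $\mathsf{M}(\cdot|x)$ off the support, and the partial-trace computation reducing $\R^{(x)}_y\circ\I^C_x$ to $\ptr{\hik}{\G_{(x,y)}(\cdot)}$ is correct. This buys you a shorter argument that avoids both the minimality/uniqueness step and the detour through the sequential measurement model; what the paper's route buys is that it lands directly on the minimal dilation, which is the natural representative when one then wants to classify all compatible partners of $\I$.

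The one genuine weak point is your justification of dilation-independence. Both proofs must invoke the fact that complementary instruments of $\I$ relative to different dilations are postprocessing equivalent (the paper's Proposition~\ref{prop:ins-dil-pp-equiv}), but your parenthetical explanation — that one passes between them ``by appending a fixed ancilla state or by tracing out the extra ancillary degrees of freedom'' — does not actually establish it. If $W'=(Y\otimes I)W$ relates two dilations, then $\Eo(x)=Y^*\Eo'(x)Y$, but $\sqrt{Y^*\Eo'(x)Y}\neq Y^*\sqrt{\Eo'(x)}\,Y$ in general, so the two L\"uders-type square-root conjugations are not related by any simple ancilla append/discard. The paper's proof of this equivalence goes through the classification of postprocessing-equivalent indecomposable instruments (two indecomposable instruments with postprocessing-equivalent induced POVMs are postprocessing equivalent), which is the actual content of the step. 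You correctly flag that this step ``demands some care,'' but as written the sketch would not survive being filled in along the lines you indicate; you would need to import the paper's Proposition~\ref{prop:ins-dil-pp-equiv} (or reprove it) rather than the ancilla argument you describe.
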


Thus, we showed that verifying compatibility of a pair of instruments is equivalent to verification of the instrument postprocessing relation between one of the instruments and the complementary instrument of the other. This result can be also seen as a characterization of all instruments that are compatible with a chosen instrument.

Finally, we illustrate the findings presented above by finding particular classes of compatible instruments. For measure-and-prepare instruments, i.e., instruments which are realized by a measurement of some POVM on the input state and preparing an output state based on the obtained measurement outcome, we show that \emph{any instrument is compatible with a measure-and-prepare instrument if and only if it is compatible with its induced POVM}. Similarly, we proved that \emph{a pair of measure-and-prepare instruments are compatible if and only if their induced POVMs are compatible}. For indecomposable instruments, i.e., instruments which are composed only from quantum operations expressible by a single Kraus operator, we show that one of their complementary instruments is always a specific measure-and-prepare instrument with the same induced POVM. Thus, \emph{all instruments compatible with an indecomposable instrument must be an postprocessing of that measure-and-prepare instrument}, and we furthermore characterize the form of these instruments. We also show that \emph{two indecomposable instruments can be compatible only if their induced POVMs are postprocessing equivalent to rank-1 POVMs}. This effectively means that they are both measure-and-prepare instruments which prepare pure states and measure rank-1 postprocessing equivalent POVMs.

\section{Preliminaries and notation} \label{sec:2}
Let $\hi$ be a finite-dimensional complex Hilbert space and let us denote the set of (bounded) linear operators on $\hi$ by $\lh$. We denote the adjoint of an operator $A \in \lh$ by $A^*$ defined as $\ip{\varphi}{A \psi} = \ip{A^* \varphi}{\psi}$ for all $\varphi, \psi \in \hi$, and we denote the set of selfadjoint operators in $\lh$, i.e. operators $A \in \lh$ such that $A^* = A$, by $\lsh$. \emph{States} of a quantum system are represented by the elements of the set of unit-trace positive semi-definite operators on $\hi$, which we denote by $\sh$, so that
\begin{equation*}
\sh= \{ \varrho \in \lsh \, | \, \varrho \geq O, \ \tr{\varrho} =1 \},
\end{equation*}
where $O$ denotes the zero operator on $\hi$.

Selfadjoint operators on $\hi$ bounded by $O$ and $I$, where $I$ (or $I_\hi$ if we want to be more specific) is the identity operator on $\hi$, are called 
\emph{effects} and we denote the set of effect on $\hi$ by $\eh$, i.e.,
\begin{equation*}
\eh = \{E \in \lsh \, | \, O \leq E \leq I \}.
\end{equation*}
The probability that the event corresponding to an effect $E$ is detected in a measurement of a quantum system in a state $\varrho$ is given by the Born rule $\tr{ \varrho E}$. Consequently, an observable $\A$  on $\hi$ with a finite set of outcomes $\Omega$ is described by a \emph{positive operator-valued measure (POVM)}, i.e., by a mapping $\A: x \mapsto \A(x)$ from $\Omega$ to $\eh$ such that $\sum_{x \in \Omega} \A(x) = I$. The set of observables on $\hi$ with outcome set $\Omega$ is denoted by $\obs(\Omega, \hi)$. We say that a POVM $\A \in \obs(\Omega, \hi)$ is \emph{sharp} if it is projection-valued, i.e., $\A(x)^2 = \A(x)$ for all $x \in \Omega$. 

Transformations between sets of states on Hilbert spaces $\hi$ and $\hik$, i.e. between $\sh$ and $\shik$, are mathematically described by \textit{quantum channels}, which are completely positive trace-preserving maps from $\lh$ to $\lk$. The set of channels from $\lh$ to $\lk$ is denoted by $\ch(\hi,\hik)$. We denote the identity channel in $\ch(\hi, \hi)$ by $id_\hi$ or just $id$ if the Hilbert space is evident from the context. Completely positive maps that are not trace-preserving but only trace-nonincreasing are called \textit{quantum operations} and they describe the probabilistic transformations of states. Quantum channels and operations have several useful characterisations some of which we will recall next.

Very well known representation of quantum channels and operations is the \emph{operator-sum form} or the \emph{Kraus decomposition} \cite{KrausBook83}: a linear map $\N: \lh \to \lk$ is a quantum operation if and only if there exists bounded operators (called Kraus operators) $K_i: \hi \to \hik$ for all $i=1,2, \ldots, N $ for some $N \in \nat$ such that $\N(\varrho) = \sum_i K_i \varrho K_i^*$ for all $\varrho \in \lh$ and $\sum_i K_i^* K_i \leq I$. For finite-dimensional $\hi$ and $\hik$ it is possible to choose $\dim(\hi) \dim(\hik)$ or fewer Kraus operators, and the minimal number of Kraus operators for a given operation is called the \textit{Kraus rank} of the operation. Operations/channels with Kraus rank 1 are exactly the \emph{indecomposable} elements in the set of operations/channels meaning that they cannot be written as a non-trivial sum of any other different operations \cite{LeppajarviSedlak21}.

Another important representation of quantum channels and operations is \textit{the Stinespring dilation} \cite{Stinespring55}. If $\N: \lh \to \lk$ is a quantum operation, then its Stinespring dilation is denoted by a tuple $(\hi_A, W)$ of an ancillary Hilbert space $\hi_A$ and a bounded operator $W: \hi \to \hi_A \otimes \hik$ that satisfies $W^* W \leq I$ such that
\begin{equation}\label{eq:stinespring-def}
    \N(\varrho) = \ptr{\hi_A}{W \varrho W^*}
\end{equation}
for all $\varrho \in \lh$. In the case when $\N$ is a quantum channel we have that $W$ is an isometry, i.e., $W^*W=I$. It is an elementary result in Hilbert space operator theory \cite{Stinespring55} that for all quantum operations a Stinespring dilation exists, it is not unique and furthermore that it can be chosen to be \emph{minimal} in the sense that the ancillary space has a minimal dimension, or more formally, that the linear span of the vectors $\{(I \otimes K) W \varphi \, | \, K \in \lk, \varphi \in \hi\} $ is the whole $\hi_A \otimes \hik$. It can be shown that every dilation  $(\hi_A', W')$ is connected to a minimal dilation $(\hi_A, W)$ by an isometry $V: \hi_A \to \hi_A'$ such that $W'=(V \otimes I)W $. We note that here $V$ is unitary if and only if also $(\hi_A', W')$ is minimal.

While a measurement device with an output of only classical measurement outcomes is described by a POVM, a \emph{quantum instrument} is used to describe a measurement scenario where also the post-measurement state is taken into account. Formally, a quantum instrument $\I$ with a finite set of measurement outcomes $\Omega$ that describes the measurement and the related state transformation process between Hilbert spaces $\hi$ and $\hik$ is a mapping $\I: x \mapsto \I_x$ from $\Omega$ to the set of quantum operations from $\lh$ to $\lk$ such that $\Phi^{\I}:=\sum_{x \in \Omega} \I_x  \in \ch(\hi, \hik)$ is a channel. For any  state $\varrho \in \sh$, the (unnormalized) conditional post-measurement state is $\I_x(\varrho)$, where $x$ is the outcome obtained in the measurement. Probability of obtaining $x$ is governed by the \emph{induced POVM} $\A^\I \in \obs(\Omega, \hi)$ via the formula $\tr{\A^\I(x)\varrho} = \tr{\I_x(\varrho)}$. The set of instruments from $\lh$ to $\lk$ with outcome set $\Omega$ is denoted by $\ins(\Omega, \hi, \hik)$. In the case when the input and the output spaces are the same, $\hik = \hi$, we denote the set simply $\ins(\Omega, \hi)$. We also call an instrument \emph{indecomposable} if each of its (nonzero) operations is indecomposable.

\begin{example}\label{ex:luders}
For a POVM $\A \in \obs(\Omega, \hi)$ we define the \emph{Lüders instrument} $\I^\A \in \ins(\Omega, \hi)$ of $\A$ by setting $\I^\A_x(\varrho) = \sqrt{\A(x)} \varrho \sqrt{\A(x)}$ for all $x \in \Omega$ and $\varrho \in \sh$. Thus, each operation $\I^\A_x$ of the Lüders instrumet is defined by a single Kraus operator $\sqrt{\A(x)}$ so it is indecomposable, and $\A^{\I^\A} = \A$. It is a well-known result \cite[Thm. 7.2.]{HayashiBook06} that every instrument $\I \in \ins(\Omega, \hi, \hik)$ with $\A^\I = \A$ can be expressed as $\I_x = \E^{(x)} \circ \I^\A_x$ for some quantum channel $\E^{(x)}\in \ch(\hi, \hik)$ for all $x \in \Omega$. 
\end{example}

\section{Compatibility of quantum devices} \label{sec:3}
\subsection{Definitions of compatibility}
As was mentioned in the introduction, our starting point for this work is what we consider an operational description of parallel compatibility: compatible devices should allow for joint simultaneous implementation meaning that a joint device with a single input should be used to obtain the complete output for each of the original devices at each experimental run. More precisely, the output of the joint device should be a multipartite system composed from the output systems of the original devices such that by ignoring the complete output system of one of them one obtains (with no error) the complete output for the other device. Conceptually parallel compatibility was considered in \cite{HeinosaariMiyaderaZiman16} and mathematically the resulting definitions can also be obtained from the channel compatibility considered in \cite{CarmeliHeinosaariMiyaderaToigo19}.

An important point is that this definition does not reference to any particular type of device but it works with any input-output devices with the only restriction being that the type of the input system is the same (as this must be the type of the input of the joint device as well). This allows us to consider compatibility of otherwise different devices under the same definition. Hence, we can then study compatibility of preparations, measurements, state transformations (i.e. channels) and instruments and all of the possible collections of those devices in the same setting. 

While measurements, channels and instruments have the same type of input 
(a single quantum system), a preparation device prepares a system in a quantum state (i.e. a density matrix) determined by a classical input. Thus, compatibility of preparation devices with these other types of devices is not feasible because of the mismatch of the input types. On the other hand, any preparation devices are always compatible, because classical information can be copied and each of the systems can be then prepared independently. Hence, we will focus on compatibility of measurements, channels and instruments. We will define below when two quantum devices (picked from channels, POVMs and instruments) are compatible and we say that they are incompatible otherwise. 

In quantum theory, measurements with only classical measurement outcomes are described by POVMs. Compatibility of two POVMs defined according to the above principles coincides with the traditional definition of the compatibility of two POVMs, i.e. joint measurability of POVMs, which has a long history and has been studied extensively in the past. Thus, we require the existence of a joint POVM, which outputs a pair of classical outcomes and each of them specifies an outcome of the two POVMs that are jointly measured. In other words, ignoring first (second) outcome provides the outcome of the second (first) POVM from the compatible pair, respectively.

\begin{definition}[POVMs]\label{def:povms}
Two POVMs $\A \in \obs(\Omega,\hi)$ and $\B \in \obs(\Lambda,\hi)$ are compatible, denoted by $\A \comp \B$,  if there exists a joint POVM $\Go \in \obs(\Omega \times \Lambda, \hi)$ such that
\begin{align}
\sum_{x \in \Omega} \Go(x,y) = \B(y) \quad \forall y \in \Lambda, \quad \sum_{y \in \Lambda} \Go(x,y) = \A(x) \quad \forall x \in \Omega.
\end{align}
\end{definition}

Similarly, the definition of compatibility of two quantum channels requires existence of a joint channel that maps input state to a bipartite state on the tensor product of the output systems of the two channels. By ignoring one of the systems on the output one recovers the output of the other channel:

\begin{definition}[Channels]\label{def:channels}
Two channels $\Phi \in \ch(\hi,\hik)$ and $\Psi \in \ch(\hi,\hv)$ are compatible, denoted by $\Phi \comp \Psi$, if there exists a joint channel $\Gamma \in \ch(\hi, \hik \otimes \hv)$ such that
\begin{align}
\ptr{\hik}{\Gamma(\varrho)} = \Psi(\varrho), \quad \ptr{\hv}{\Gamma(\varrho)} = \Phi(\varrho)
\end{align}
for all $\varrho \in \sh$. 
\end{definition}

Compatibility of a channel and a POVM serves as the first example of compatibility of two different types of devices. Since POVMs correspond to measurements with classical measurement outcomes and channels to state transformations, the joint device should be a device that takes a quantum system as an input and outputs both a classical measurement outcome and a transformed post-measurement state. A device of this type is exactly a quantum instrument.

\begin{definition}[POVM and channel]\label{def:povm-ch}
A POVM $\A \in \obs(\Omega,\hi)$ and a channel $\Phi \in \ch(\hi,\hik)$ are compatible, denoted by $\A \comp \Phi$,  if there exists a joint instrument $\G \in \ins(\Omega, \hi, \hik)$ such that
\begin{align}
\Phi^{\G} = \Phi, \quad \A^{\G} = \A.
\end{align}
\end{definition}

We note that trivially for any instrument $\G$ we have that $\A^\G \comp \Phi^\G$. 

If we now follow the definition of parallel compatibility for instruments, the joint device for two compatible instruments should then output two classical measurement outcomes and a state in the tensor product of the output systems of the two instruments. By ignoring a pair of classical-quantum outputs corresponding to one of the instruments one obtains both the classical measurement outcome and the post-measurement state of the other instrument.

\begin{definition}[Instruments]\label{def:instruments}
Two instruments $\I \in \ins(\Omega,\hi,\hik)$ and $\J \in \ins(\Lambda, \hi,\hv)$ are compatible, denoted by $\I \comp \J$, if there exists a joint instrument $\G \in \ins(\Omega \times \Lambda, \hi, \hik \otimes \hv)$ such that
\begin{align}
\sum_{x \in \Omega} \ptr{\hik}{\G_{(x,y)}(\varrho)} = \J_y(\varrho) \quad \forall y \in \Lambda, \quad \sum_{y \in \Lambda} \ptr{\hv}{\G_{(x,y)}(\varrho)} = \I_x(\varrho) \quad \forall x \in \Omega
\end{align}
for all $\varrho \in \sh$. 
\end{definition}

This definition corresponds exactly the definition of \cite{MitraFarkas22} and which can be also recovered from the definition of compatibility presented in \cite{CarmeliHeinosaariMiyaderaToigo19} in the case when the instruments are formulated as channels that map to a mixed quantum-classical system. It is clear that the above definition can be seen as a direct generalization of the definitions of compatible POVMs and channels that were known before: In particular, in Def. \ref{def:instruments} we see that if the considered instruments have just one outcome, i.e. they are channels, then the compatibility criterion reduces to that of Def. \ref{def:channels}. Furthermore, if one considers POVMs as instruments with one-dimensional output spaces, then for two POVMs the compatibility criteria in Def. \ref{def:instruments} reduces to that of Def. \ref{def:povms}. And naturally, when one has a POVM (an instrument with a one-dimensional output space) and a channel (an instrument with only one outcome), then Def. \ref{def:instruments} reduces to Def. \ref{def:povm-ch}.

By applying Def. \ref{def:instruments} in the case of a POVM (an instrument with one-dimensional output space) and some other instrument we see that it generalizes Def. \ref{def:povm-ch} of compatibility between a POVM and a channel. In the case of a POVM and an instrument, the joint device outputs the classical measurement outcome of the POVM as well as the classical and quantum outputs of the instrument; thus, the joint device is also an instrument. We can explicitly state the definition in the following form:

\begin{definition}[POVM and instrument]\label{def:povm-ins}
A POVM $\A \in \obs(\Omega,\hi)$ and an instrument $\J \in \ins(\Lambda, \hi,\hik)$ are compatible, denoted by $\A \comp \J$, if there exists a joint instrument $\G \in \ins(\Omega \times \Lambda, \hi, \hik)$ such that
$\sum_{x \in \Omega} \G_{(x,y)} = \J_y$ for all $y \in \Lambda$ and $\sum_{y \in \Lambda} \A^\G(x,y) = \A(x)$ for all $x \in \Omega$.
\end{definition}

Effectively the above definition means that 1) the original POVM and the induced POVM of the original instrument can be measured jointly and the joint POVM in this case is just the induced POVM of the joint instrument, and 2) the post-measurement state of the original instrument can be recovered from the post-measurement state of the joint instrument. 

In the case when $\J = \Phi$ is a channel, then the above definition of compatibility of a POVM $\A$ and $\Phi$ reduces to Def. \ref{def:povm-ch}. Also, when $\J$ is a POVM, then the above definition reduces to Def. \ref{def:povms}. Thus, together with all the observations that we made above, the compatibility of instruments captures all the other cases of compatibility. 

\begin{remark}\label{remark:A-comp}
In some of the literature for a given POVM $\A$ an instrument $\J$ is said to be $\A$-compatible if $\A^\J = \A$. We note that this is different to saying that $\A$ and $\J$ are compatible because in general we might have $\A^\J \neq \A$ (examples of this type can be constructed from the results of Sec. \ref{sec:6}; see e.g. Prop. \ref{prop:m-a-p-comp} and Prop. \ref{prop:m-a-p-comp1}). In fact in Def. \ref{def:povm-ins} the joint instrument $\G$ not only implements $\J$ as one of its marginals but in fact the induced POVM $\A^\G$ of the joint instrument $\G$ is a joint POVM of $\A$ and $\A^\J$. Thus, if $\A^\J = \A$, then in the notation of Def. \ref{def:povm-ins} one can clearly define the joint instrument as $\G_{(x,y)} = \delta_{x,y} \J_y$ for all $x \in \Omega$ and $y \in \Lambda$. In conclusion, saying that an instrument $\J$ is $\A$-compatible is the same as simply saying that the induced POVM $\A^\J$ of $\J$ is $\A$ or equivalently that $\A$ and $\Phi^\J$ are compatible with one of the joint instruments being $\J$, but in general it is different to saying that $\A$ and $\J$ are compatible.
\end{remark}

One of the most important examples of incompatible instruments/ channels is the fact that two identity channels are incompatible \cite{HeinosaariMiyadera17}; more famously this fact is known as the no-broadcasting theorem \cite{BarnumCavesFuchsJozsaSchumacher96} (which is a generalization of the no-cloning theorem \cite{WoottersZurek82}) since the existence of a joint channel for identity channels is the same as the existence of perfect universal broadcasting device. 

Although we will consider compatibility of specific types of instruments in later sections, we want to start here with a simple class of instruments that are compatible with every other instrument.

\begin{example}\label{ex:trash-and-prepare}
Let $\I \in \ins(\Omega, \hi, \hik)$ be a \emph{trash-and-prepare} (or \emph{trivial}) instrument defined by some probability distribution $(p_x)_{x \in \Omega}$ and some family of states $\{ \xi_x \}_{x \in \Omega} \subset \shik$ such that $\I_x(\varrho) = \tr{\varrho} p_x \xi_x$ for all $\varrho \in \lh$ and $x \in \Omega$. Thus, the action of $\I$ can be simply described as just ignoring or disregarding the input and preparing an output state according to some predetermined probability distribution from a fixed family of states. To see that $\I$ is compatible with any instrument $\J \in \ins(\Lambda, \hi, \hv)$ we can define a joint instrument $\G \in \ins(\Omega \times \Lambda, \hi, \hik \otimes \hv)$ by setting $\G_{(x,y)}(\varrho) = p_x \xi_x \otimes \J_y(\varrho)$ for all $\varrho \in \sh$, $x \in \Omega$ and $y \in \Lambda$.
\end{example}

\subsection{Compatibility of the induced POVMs and channels}
Next, we will explore some of the basic properties of compatible instruments. 
Let $\I$ and $\J$ be two instruments. 
We note that all the above definitions of compatibility are symmetric relations, thus for example $\I \comp \J \Leftrightarrow \J \comp \I$. 
First, we observe that if $\I \comp \J$ by ignoring part of the classical or quantum output of the joint instrument we can prove that $\I$ is compatible with the induced channel $\Phi^\J$ and induced POVM $\A^\J$, respectively.

\begin{proposition}\label{prop:ins-comp-sum}
Let $\I \in \ins(\Omega, \hi, \hik)$ and $\J \in \ins(\Lambda, \hi, \hv)$ be two compatible instruments. Then $\I \comp \Phi^\J$ and $\I \comp \A^\J$
\end{proposition}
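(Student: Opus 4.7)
The plan is to prove both compatibility relations by marginalizing the joint instrument $\G \in \ins(\Omega \times \Lambda, \hi, \hik \otimes \hv)$ that witnesses $\I \comp \J$. Since in Def.~\ref{def:instruments} a channel is treated as an instrument with a one-element outcome set and a POVM as an instrument with a one-dimensional output space, both claims are instances of the same general compatibility definition, and in both cases we just need to construct an appropriate joint instrument from $\G$.

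For $\I \comp \Phi^\J$, I would define
\begin{equation*}
\G^{(1)}_x := \sum_{y \in \Lambda} \G_{(x,y)} \quad (x \in \Omega),
\end{equation*}
which is manifestly an instrument in $\ins(\Omega, \hi, \hik \otimes \hv)$ (each summand is a completely positive map, and summing all $\G^{(1)}_x$ over $x$ yields $\Phi^\G$, the same channel obtained by summing $\G$ over both indices). To verify the marginal conditions of Def.~\ref{def:instruments} with $\I$ and $\Phi^\J$ as the marginals, I would check that
\begin{equation*}
\sum_{x \in \Omega} \ptr{\hik}{\G^{(1)}_x(\varrho)} = \sum_{y \in \Lambda} \sum_{x \in \Omega} \ptr{\hik}{\G_{(x,y)}(\varrho)} = \sum_{y \in \Lambda} \J_y(\varrho) = \Phi^\J(\varrho),
\end{equation*}
and $\ptr{\hv}{\G^{(1)}_x(\varrho)} = \sum_{y \in \Lambda} \ptr{\hv}{\G_{(x,y)}(\varrho)} = \I_x(\varrho)$, where both computations use the linearity of the partial trace together with the marginal conditions that $\G$ satisfies by assumption.

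For $\I \comp \A^\J$, I would invoke Def.~\ref{def:povm-ins} (equivalently, Def.~\ref{def:instruments} applied with the POVM $\A^\J$ viewed as an instrument with one-dimensional output space) and define the candidate joint instrument
\begin{equation*}
\G^{(2)}_{(x,y)} := \ptr{\hv}{\G_{(x,y)}} \in \ins(\Omega \times \Lambda, \hi, \hik).
\end{equation*}
The first marginal condition is immediate: $\sum_{y \in \Lambda} \G^{(2)}_{(x,y)} = \ptr{\hv}{\sum_y \G_{(x,y)}} = \I_x$. For the second, I would use that the induced POVM satisfies $\tr{\A^{\G^{(2)}}(x,y)\varrho} = \tr{\G^{(2)}_{(x,y)}(\varrho)} = \tr{\G_{(x,y)}(\varrho)}$, so that summing over $x$ yields $\tr{\sum_x \ptr{\hik}{\G_{(x,y)}(\varrho)}} = \tr{\J_y(\varrho)} = \tr{\A^\J(y)\varrho}$ for every $\varrho$, forcing $\sum_x \A^{\G^{(2)}}(x,y) = \A^\J(y)$.

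There is no real obstacle here; the argument is essentially bookkeeping with partial traces and summations, and the only care needed is to keep track of which coordinate corresponds to $\I$ and which to the reduced device ($\Phi^\J$ or $\A^\J$) when matching the constructed joint instrument against the appropriate specialization of Def.~\ref{def:instruments}.
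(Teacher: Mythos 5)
Your proposal is correct and follows essentially the same route as the paper: the paper likewise marginalizes the joint instrument over $y$ to obtain a joint instrument for $\I$ and $\Phi^\J$, and takes the partial trace over $\hv$ of each $\G_{(x,y)}$ to obtain a joint instrument for $\I$ and $\A^\J$, verifying the same marginal conditions. Your trace-based verification that $\sum_x \A^{\G^{(2)}}(x,y) = \A^\J(y)$ is just a slightly more explicit rendering of the paper's observation that $\A^{\G^{(2)}} = \A^\G$ is a joint POVM for $\A^\I$ and $\A^\J$.
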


\begin{proof}
If $\I \comp \J$, then there exists a joint instrument $\G \in \ins(\Omega \times \Lambda, \hi, \hik \otimes \hv)$ for $\I$ and $\J$. Let us define an instrument $\E \in \ins(\Omega, \hi, \hik \otimes \hv)$ by setting $\E_x = \sum_{y \in \Lambda} \G_{(x,y)}$ for all $x \in  \Lambda$. We then see that
\begin{align*}
\ptr{\hv}{\E_x(\varrho)} &= \sum_{y \in \Lambda} \ptr{\hv}{  \G_{(x,y)}(\varrho)} = \I_x(\varrho)
\end{align*}
for all $x \in \Omega$ and $\varrho\in \sh$, as well as
\begin{align*}
\sum_{x \in \Omega}\ptr{\hik}{\E_x(\varrho)} &= \sum_{y \in \Lambda} \left( \sum_{x \in \Omega} \ptr{\hik}{  \G_{(x,y)}(\varrho)} \right) = \sum_{y \in \Lambda} \J_y(\varrho) = \Phi^\J(\varrho)
\end{align*}
for all $\varrho\in\sh$ so that $\E$ is a joint instrument for $\I$ and $\Phi^\J$ so that indeed $\I \comp \Phi^\J$. 
If $\I \comp \J$, then we define an instrument $\R \in \ins(\Omega \times \Lambda, \hi, \hik)$ by setting
\begin{align*}
\R_{(x,y)}(\varrho) = \ptr{\hv}{\G_{(x,y)}(\varrho)}
\end{align*}
for all $x \in \Omega$, $y \in \Lambda$ and $\varrho \in \sh$. We see that now $\sum_{y \in \Lambda} \R_{(x,y)} = \I_x$ for all $x \in \Omega$ so that $\Phi^\R = \Phi^\I$. Furthermore, we see that $\A^\R = \A^\G$ and since $\A^\G$ is a joint observable for $\A^\I$ and $\A^\J$, so is $\A^\R$ as well so that $\sum_{x \in \Omega} \A^\R(x,y) = \A^\J(y)$ for all $y \in \Lambda$. Thus, $\I \comp \A^\J$. 
\end{proof}

If we consider POVMs as instruments with one dimensional output and channels as instruments with one element outcome set then Proposition \ref{prop:ins-comp-sum} also states that $\A \comp \J \Rightarrow \A \comp \Phi^\J \land \A \comp \A^\J $ and $\Phi \comp \J \Rightarrow \Phi \comp \Phi^\J \land \Phi \comp \A^\J$ for any channel $\Phi$ and any POVM $\A$.  As a direct consequence of Proposition \ref{prop:ins-comp-sum} all the implications in Fig. \ref{fig:1} hold, since we can choose $\A=\A^\I$ and $\Phi=\Phi^\I$.

\begin{figure}
\centering
\includegraphics[scale=0.5]{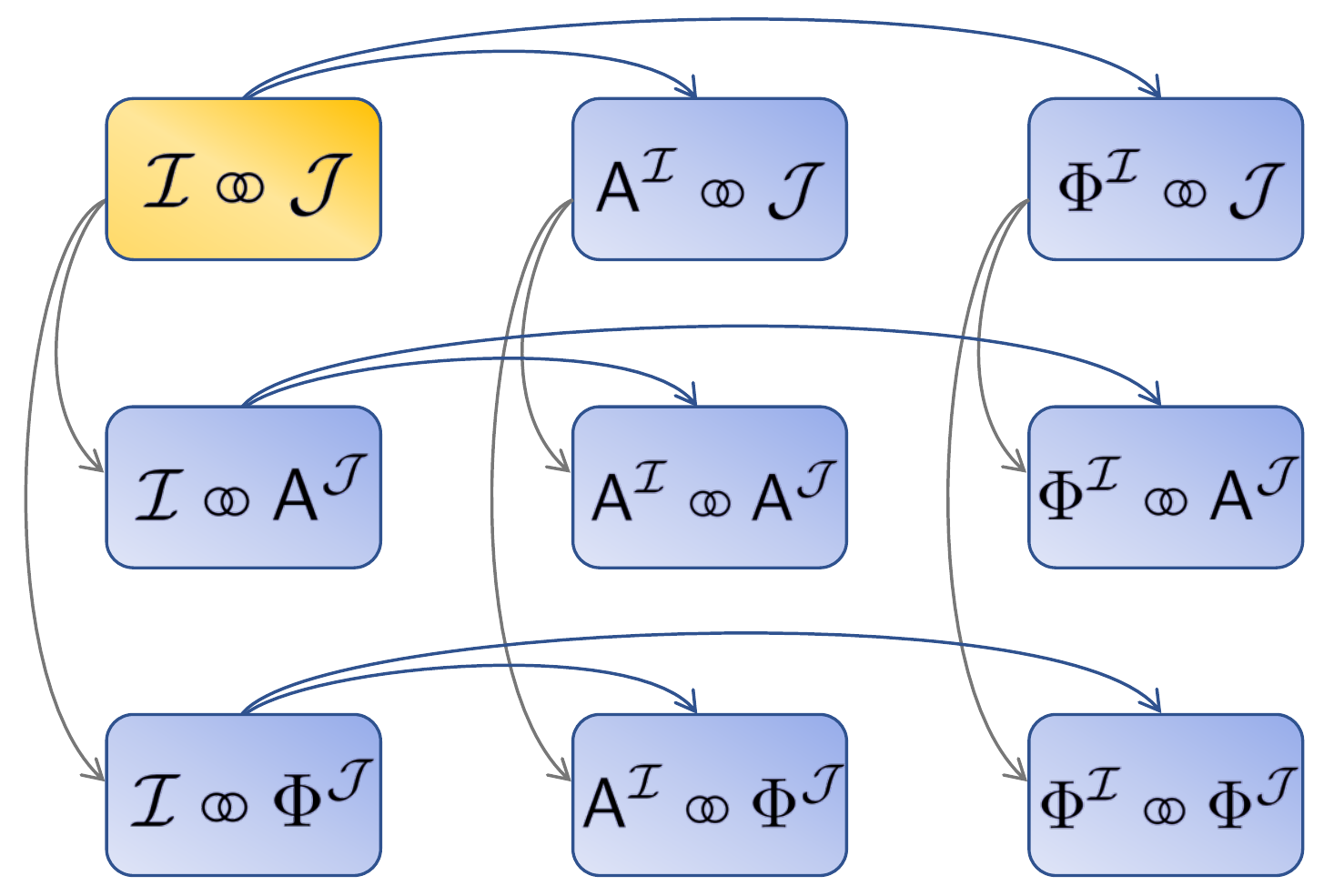}
\caption{\label{fig:1} Implications of the compatibility of instruments $\I$ and $\J$. Compatibility with an instrument automatically implies compatibility with its induced POVM and its induced channel.}
\end{figure}

\begin{corollary}\label{cor:ins-comp-sum}
Let $\I \in \ins(\Omega, \hi, \hik)$ and $\J \in \ins(\Lambda, \hi, \hv)$ be two compatible instruments. Then $X \comp Y$, where $X\in\{\I , \A^\I, \Phi^\I \}$ and $Y\in\{\J , \A^\J, \Phi^\J \}$, i.e. compatibility of instruments implies compatibility also for the induced POVMs or channels on both sides of the relation.
\end{corollary}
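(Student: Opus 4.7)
The plan is to deduce the nine cases of the corollary directly from Proposition \ref{prop:ins-comp-sum} by (a) invoking the symmetry of the compatibility relation, and (b) iterating the proposition by regarding POVMs and channels as special cases of instruments (POVMs as instruments with one-dimensional output space, channels as instruments with a singleton outcome set), exactly as was observed just after Definition \ref{def:instruments}.

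First I would record the trivial observation that each of the compatibility relations in Definitions \ref{def:povms}--\ref{def:povm-ins} is symmetric: swapping the roles of the two devices amounts to swapping the factors in the joint output, which is always allowed. In particular, Proposition \ref{prop:ins-comp-sum} gives the two implications $\I \comp \J \Rightarrow \I \comp \Phi^\J$ and $\I \comp \J \Rightarrow \I \comp \A^\J$, and by symmetry also $\J \comp \I \Rightarrow \J \comp \Phi^\I$ and $\J \comp \I \Rightarrow \J \comp \A^\I$, i.e.\ $\Phi^\I \comp \J$ and $\A^\I \comp \J$.

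Next I would iterate. Starting from $\Phi^\I \comp \J$ and viewing $\Phi^\I$ as an instrument with a one-element outcome set, Proposition \ref{prop:ins-comp-sum} applies and yields $\Phi^\I \comp \Phi^\J$ and $\Phi^\I \comp \A^\J$. Analogously, starting from $\A^\I \comp \J$ and viewing $\A^\I$ as an instrument with one-dimensional output space, the same proposition gives $\A^\I \comp \Phi^\J$ and $\A^\I \comp \A^\J$. Together with the cases already obtained and the trivial case $\I \comp \J$, this exhausts all nine pairs $(X,Y)$ with $X\in\{\I,\A^\I,\Phi^\I\}$ and $Y\in\{\J,\A^\J,\Phi^\J\}$.

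There is no real obstacle here: the only subtlety is making sure the reader sees that Proposition \ref{prop:ins-comp-sum} is applicable in each iteration, which relies on the unified treatment of POVMs, channels and instruments established at the end of Section \ref{sec:3}. I would therefore state the proof very briefly, perhaps in a single paragraph, and emphasise that the corollary is essentially Proposition \ref{prop:ins-comp-sum} together with symmetry, applied twice.
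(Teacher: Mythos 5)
Your proposal is correct and matches the paper's own reasoning: the paper likewise derives the corollary by applying Proposition \ref{prop:ins-comp-sum} with $\A=\A^\I$ and $\Phi=\Phi^\I$ regarded as special instruments, using the symmetry of the compatibility relation and iterating the proposition. No gaps.
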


We note that some of these implications were also studied in \cite{MitraFarkas22}: in particular, it is shown therein that  $\I \comp \J$ indeed implies that $\A^\I \comp \A^\J$, $\Phi^\I \comp \A^\J$, $\A^\I \comp \Phi^\J$ and $\Phi^\I \comp \Phi^\J$. However, our aim with the above result was to show how the previously unknown notion of compatibility between a POVM and an instrument fits into this picture and also to show all the relevant implications and not only those coming from the compatibility of $\I$ and $\J$.

\subsection{Attempted reductions of the compatibility problem for instruments}
It might be tempting to reduce the compatibility of instruments to the compatibility of its components; namely to the compatibility of both the induced POVMs and channels. Let us show a simple counterexample for this type of reduction.

\begin{example}
Let us consider two qubit instruments $\I \in \ins(\Omega, \hi, \hik)$ and $\J \in \ins(\Lambda, \hi, \hv)$ with input and output Hilbert spaces thus being $\hi= \hik= \hv = \complex^2$: in particular let $\I$ be the single outcome identity instrument $\I = id \in \ch(\complex^2)$, and let $\J$ be a four outcome instrument $\J \in \ins(\{0,1,2,3\},\complex^2)$ defined as $\J_i(\varrho)=\frac{1}{4} \sigma_i \varrho \sigma_i$  for all $y \in \{0,1,2,3\}$, where $\{\sigma_i\}_{i=0}^3$ are the identity and the three Pauli matrices. It is well known that $\Phi^\J(\varrho)=\tr{\varrho}\frac{1}{2} I$ for all $\varrho \in \lh$ so that $\Phi^\J$ is a trash-and-prepare channel. Thus, by Example \ref{ex:trash-and-prepare} we have that $\I = \Phi^\I \comp \Phi^\J$. Since both induced POVMs are trivial, we also have that $\A^\I \comp \A^\J$. On the other hand, the two instruments could not be compatible, because the joint instrument could be used to build perfect universal qubit cloner as follows.  Let $\G \in \ins(\{0,1,2,3\}, \mathbb{C}^2, \mathbb{C}^2\otimes\mathbb{C}^2)$ be the joint instrument. We define quantum channel $\E(\varrho)=\sum_i (I_\hik\otimes\sigma_i) \G_i(\varrho)(I_\hik\otimes\sigma_i)$. 
We then must have for all $\varrho \in \sh$ that
\begin{align*}
\ptr{\hv}{\E(\varrho)}&=\ptr{\hv}{\sum_i \G_i(\varrho)}=\I(\varrho)=\varrho, \nonumber \\
\ptr{\hik}{\E(\varrho)}&=\sum_i \sigma_i (\ptr{\hik}{\G_i(\varrho)})\sigma_i =\sum_i \sigma_i \J_i(\varrho) \sigma_i= \varrho.
\end{align*}
Thus, existence of joint instrument would imply existence of perfect broadcasting channel $\E$, i.e., the joint channel of two identity channels, which is a known contradiction \cite{HeinosaariMiyadera17}.
\end{example}

On the other hand, another tempting reduction might be to reduce the compatibility problem for instruments by removing features that do not affect the compatibility relation. In particular, as was explained in Example \ref{ex:luders}, since every instrument can be realized as a concatenation of the related Lüders instrument and some set of conditional channels, one naturally emerging option would be to reduce the compatibility of the given instruments to the compatibility of the related Lüders instruments. Unfortunately, this is not possible and only one-directional implication works as we show below.

\begin{proposition}\label{prop:comp_Luders}
If $\I \in \ins(\Omega, \hi, \hik)$ and $\J \in \ins(\Lambda, \hi, \hv)$ are two instruments such that their Lüders instruments $\I^{\A^\I}$ and $\I^{\A^\J}$ are compatible, then $\I$ and $\J$ are compatible as well.
\end{proposition}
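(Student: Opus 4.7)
The plan is to exploit the structure from Example \ref{ex:luders}: since $\A^\I$ is the induced POVM of $\I$, there exist channels $\E^{(x)} \in \ch(\hi,\hik)$ such that $\I_x = \E^{(x)} \circ \I^{\A^\I}_x$ for all $x \in \Omega$, and analogously there exist channels $\F^{(y)} \in \ch(\hi,\hv)$ with $\J_y = \F^{(y)} \circ \I^{\A^\J}_y$ for all $y \in \Lambda$. This decomposition is the main ingredient; the rest amounts to transporting a joint instrument for the Lüders instruments through these conditional channels in parallel.

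First I would take the assumed joint instrument $\G \in \ins(\Omega \times \Lambda, \hi, \hi \otimes \hi)$ witnessing $\I^{\A^\I} \comp \I^{\A^\J}$, so that for all $\varrho \in \sh$,
\begin{align*}
\sum_{y \in \Lambda} \ptr{\hi}{\G_{(x,y)}(\varrho)} = \I^{\A^\I}_x(\varrho), \quad \sum_{x \in \Omega} \ptr{\hi}{\G_{(x,y)}(\varrho)} = \I^{\A^\J}_y(\varrho).
\end{align*}
Then I would define a candidate joint instrument $\tilde{\G} \in \ins(\Omega \times \Lambda, \hi, \hik \otimes \hv)$ by
\begin{align*}
\tilde{\G}_{(x,y)} := (\E^{(x)} \otimes \F^{(y)}) \circ \G_{(x,y)}.
\end{align*}
Each $\tilde{\G}_{(x,y)}$ is completely positive as a composition of CP maps; since $\E^{(x)} \otimes \F^{(y)}$ is trace-preserving, $\tr{\tilde{\G}_{(x,y)}(\varrho)} = \tr{\G_{(x,y)}(\varrho)}$, so summing over $(x,y)$ shows $\tilde{\G}$ is trace-preserving and hence an instrument.

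The marginal verification is then routine: using that $\F^{(y)}$ is trace-preserving, $\ptr{\hv}{(\E^{(x)} \otimes \F^{(y)})(T)} = \E^{(x)}(\ptr{\hi}{T})$ for any $T$ on $\hi \otimes \hi$, so
\begin{align*}
\sum_{y \in \Lambda} \ptr{\hv}{\tilde{\G}_{(x,y)}(\varrho)} = \E^{(x)}\!\left(\sum_{y \in \Lambda} \ptr{\hi}{\G_{(x,y)}(\varrho)}\right) = \E^{(x)}(\I^{\A^\I}_x(\varrho)) = \I_x(\varrho),
\end{align*}
and symmetrically $\sum_{x} \ptr{\hik}{\tilde{\G}_{(x,y)}(\varrho)} = \F^{(y)}(\I^{\A^\J}_y(\varrho)) = \J_y(\varrho)$. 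Hence $\tilde{\G}$ is a joint instrument for $\I$ and $\J$, proving $\I \comp \J$.

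There is no real obstacle here; the only thing to be careful about is not to conflate tensor factors when writing the partial traces, since $\G$ lands in $\hi \otimes \hi$ while $\tilde{\G}$ lands in $\hik \otimes \hv$. The proof also makes clear why the converse direction fails in general: the channels $\E^{(x)}$ and $\F^{(y)}$ can only coarsen information, so moving from $\I, \J$ to $\I^{\A^\I}, \I^{\A^\J}$ can in principle destroy correlations that would be required for a joint instrument of the Lüders versions, even when $\I$ and $\J$ themselves admit one.
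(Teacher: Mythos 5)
Your proof is correct and coincides with the paper's own argument: both decompose $\I$ and $\J$ via conditional channels acting after the Lüders instruments (Example \ref{ex:luders}) and push the joint instrument for the Lüders pair through $\E^{(x)} \otimes \F^{(y)}$ to obtain a joint instrument for $\I$ and $\J$. The marginal computations are identical to those in the paper.
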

\begin{proof}
Let us denote by $\{\E^{(x)}\}_{x \in \Omega} \subset \ch(\hi, \hik)$ and $\{\F^{(y)}\}_{y \in \Lambda} \subset \ch(\hi, \hv)$ the sets of conditional channels that postprocess Lüders instruments $\I^{\A^\I}$ and $\I^{\A^\J}$ to $\I$ and $\J$ respectively, i.e., $\I_x=\E^{(x)} \circ \I^{\A^\I}_x$ for all $x \in \Omega$ and $\J_y=\F^{(y)} \circ \I^{\A^\J}_y$ for all $y \in \Lambda$ as per Example \ref{ex:luders}.
Compatibility of $\I^{\A^\I}$ and $\I^{\A^\J}$ implies the existence of a joint instrument $\R \in \ins(\Omega \times \Lambda, \hi, \hi \otimes \hi)$. Based on this, we can define an instrument $\widetilde{\R} \in \ins(\Omega \times \Lambda, \hi, \hik \otimes \hv)$ by setting $\widetilde{\R}_{(x,y)}= \left( \E^{(x)} \otimes \F^{(y)} \right) \circ \R_{(x,y)}$ for all $x \in \Omega$ and $y \in \Lambda$. We see that
\begin{align*}
\sum_y \ptr{\hv}{\widetilde{\R}_{(x,y)}(\varrho)}&=\sum_y \ptr{\hv}{\left( \E^{(x)} \otimes \F^{(y)} \right)\left(\R_{(x,y)}(\varrho)\right)} \\
 &= \E^{(x)} \circ \ptr{\hi}{\sum_y \R_{(x,y)}(\varrho)}=\E^{(x)} \circ \I^{\A^\I}_x(\varrho) \\
 &= \I_x(\varrho)
\end{align*}
for all $\varrho \in \sh$ and for all $x \in \Omega$. Analogically we get that $\sum_x \ptr{\hik}{\widetilde{\R}_{(x,y)}}=\J_y$ for all $y \in \Lambda$ so that $\widetilde{\R}$ is a joint instrument for $\I$ and $\J$.
\end{proof}

On the other hand, to see that the converse does not hold, let us consider a pair of single outcome instruments (i.e. channels) defined as $\I(\varrho)=\varrho$ and $\J(\varrho)=\tr{\varrho}\xi$ for all $\varrho$. These instruments are compatible, but their Lüders instruments (identity channel in both cases) are incompatible since cloning is again impossible.

\section{Compatibility and non-disturbance}\label{sec:4}

For POVMs compatibility is also closely related to the concept of non-disturbance. Traditionally the non-disturbance for POVMs (see e.g. \cite{HeinosaariWolf10}) is defined as follows: a POVM $\A \in \obs(\Omega, \hi)$ does not disturb a POVM $\B \in \obs(\Omega, \hi)$ if there exists an instrument $\I \in \ins(\Omega, \hi)$ with $\A^\I = \A$ such that
\begin{align*}
\tr{\B(y) \Phi^\I(\varrho)} = \tr{\B(y) \varrho}
\end{align*}
for all $y \in \Lambda$ and $\varrho \in \sh$. The previous condition does indeed capture the idea that $\A$ can be measured by some instrument $\I$ such that the measurement statistics of $\B$ is not disturbed by the transformation induced by the instrument $\I$. Equivalently we can phrase the non-disturbance condition as a condition on the adjoint map $(\Phi^\I)^*$; we simply require that $\left( \Phi^\I \right)^*(\B(y)) = \B(y)$ for all $y \in \Lambda$.

However, strictly speaking the traditional non-disturbance condition for POVMs does not involve the POVM that is supposed to be non-disturbing but rather \emph{a way of measuring} that POVM via some instrument. This is because for non-disturbance one needs to be able to do sequential operations on the post-measurement state. Furthermore, as this possibility is inherent in the description of an instrument, a natural generalization from non-disturbance of POVMs is to consider the non-disturbance of instruments. Since instruments have another output in addition to the classical measurement outcome, namely the output state, then also this output must remain undisturbed. Hence, we make the following definition:

\begin{definition}[Non-disturbance]
\label{def:nondisturb}
An instrument $\I \in \ins(\Omega, \hi)$ is said \emph{not to disturb an instrument $\J \in \ins(\Lambda, \hi, \hv)$}, denoted by $\I \nd \J$, if $\J_y \circ \Phi^\I = \J_y$ for all $y \in \Lambda$. We also say that a POVM $\A \in \obs(\Omega, \hi)$ \emph{does not disturb an instrument} $\J \in \ins(\Lambda, \hi, \hv)$, denoted by $\A \nd \J$, if there exists an instruments $\I \in \ins(\Omega, \hi)$ with $\A^\I= \A$ such that $\I \nd \J$.
\end{definition}

It is straightforward to verify that if $\J$ is a POVM, i.e., an instrument with one-dimensional output space, then the latter part of the previous definition reduces to the traditional non-disturbance condition that was described above. However, we note that we have to inevitably use separate definitions for POVM and instrument non-disturbance and we cannot simply consider non-disturbing POVM as an instrument with one-dimensional output space. This is because a non-disturbing device must have its input and output systems equal to the input system of the undisturbed device so that the sequential implementation is possible. Especially, as the input systems of POVMs, channels and instruments are non-trivial quantum systems, non-disturbance of POVMs must be phrased in terms of the instruments implementing them. On the other hand, for channels we use the above definition to say that a channel does not disturb a POVM/channel/instrument by considering the channel as a one-outcome instrument. Consequently, we see that an instrument $\I$ does not disturb a POVM/channel/instrument if and only if the channel $\Phi^\I$ does not disturb it.

Based on the definition \ref{def:nondisturb} it is easy to see the following connections between non-disturbance of different devices.

\begin{proposition}\label{prop:nd-implications}
Let $\I \in \ins(\Omega, \hi)$ and $\J \in \ins(\Lambda, \hi)$ be two instruments. Then the implications in Fig. \ref{fig:2} hold.
\end{proposition}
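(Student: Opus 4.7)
The plan is that every arrow in Fig.~\ref{fig:2} should follow by a short and direct manipulation of Definition~\ref{def:nondisturb}, so the proof will essentially be a bookkeeping exercise that reduces each implication to one of two elementary observations: (i) ``summing the $y$-component'' reduces $\J$ to $\Phi^\J$ on the right, and (ii) ``taking the trace'' reduces $\J$ to $\A^\J$ on the right. The asymmetry between the left and the right of the relation $\nd$ requires us to keep in mind that $\I \nd \J$ depends on $\I$ only through $\Phi^\I$, while $\A \nd \J$ is an existential statement over instruments implementing $\A$.

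First, I would handle the implications that replace $\J$ by one of its components on the right. Assume $\I \nd \J$, i.e. $\J_y \circ \Phi^\I = \J_y$ for every $y \in \Lambda$. Summing over $y$ gives $\Phi^\J \circ \Phi^\I = \Phi^\J$, which by viewing $\Phi^\J$ as a one-outcome instrument is exactly $\I \nd \Phi^\J$. Applying the trace instead gives $\tr{\A^\J(y)\Phi^\I(\varrho)} = \tr{\A^\J(y)\varrho}$ for all $\varrho$ and $y$, equivalently $(\Phi^\I)^*(\A^\J(y)) = \A^\J(y)$, which (via the Lüders instrument of $\A^\J$, or directly treating the POVM as an instrument with one-dimensional output) is $\I \nd \A^\J$.

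Next I would handle replacements on the left. The implication $\I \nd \J \Rightarrow \Phi^\I \nd \J$ is immediate because $\Phi^{\Phi^\I} = \Phi^\I$, so the defining identity $\J_y \circ \Phi^\I = \J_y$ is literally the same for the channel $\Phi^\I$ as it is for the instrument $\I$; the converse implication $\Phi^\I \nd \J \Rightarrow \I \nd \J$ is just as immediate. Passing to $\A^\I \nd \J$ is then a direct application of the second clause of Definition~\ref{def:nondisturb}: since $\A^\I = \A^\I$ and $\I$ witnesses $\I \nd \J$, the POVM $\A^\I$ non-disturbs $\J$ by definition.

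All remaining arrows in the figure are obtained by chaining the above: e.g. $\I \nd \J \Rightarrow \A^\I \nd \A^\J$ is the composition of $\I \nd \J \Rightarrow \I \nd \A^\J$ with a choice of witnessing instrument for $\A^\I$, and similarly for the mixed cases. I do not expect a genuine obstacle here; the only point requiring mild care is to make sure that whenever we replace a device by its induced POVM on the right we really are in the scope of the universal condition $\J_y \circ \Phi^\I = \J_y$ rather than the existential condition defining POVM non-disturbance, so that the argument proceeds via the adjoint identity $(\Phi^\I)^*(\A^\J(y)) = \A^\J(y)$ without needing to construct a new implementing instrument.
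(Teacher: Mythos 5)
Your proposal is correct and follows exactly the route the paper intends: the paper gives no written proof, stating only that the implications are "easy to see" from Definition~\ref{def:nondisturb}, and your bookkeeping (summing over $y$ for $\Phi^\J$, tracing for $\A^\J$, using $\Phi^{\Phi^\I}=\Phi^\I$ on the left, and taking $\I$ itself as the witness for $\A^\I \nd \cdot$) is precisely the intended argument. Your closing caution about keeping the universal condition $\J_y\circ\Phi^\I=\J_y$ separate from the existential POVM clause is the one point that genuinely needs care, and you handle it correctly.
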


\begin{figure}
\centering
\includegraphics[scale=0.5]{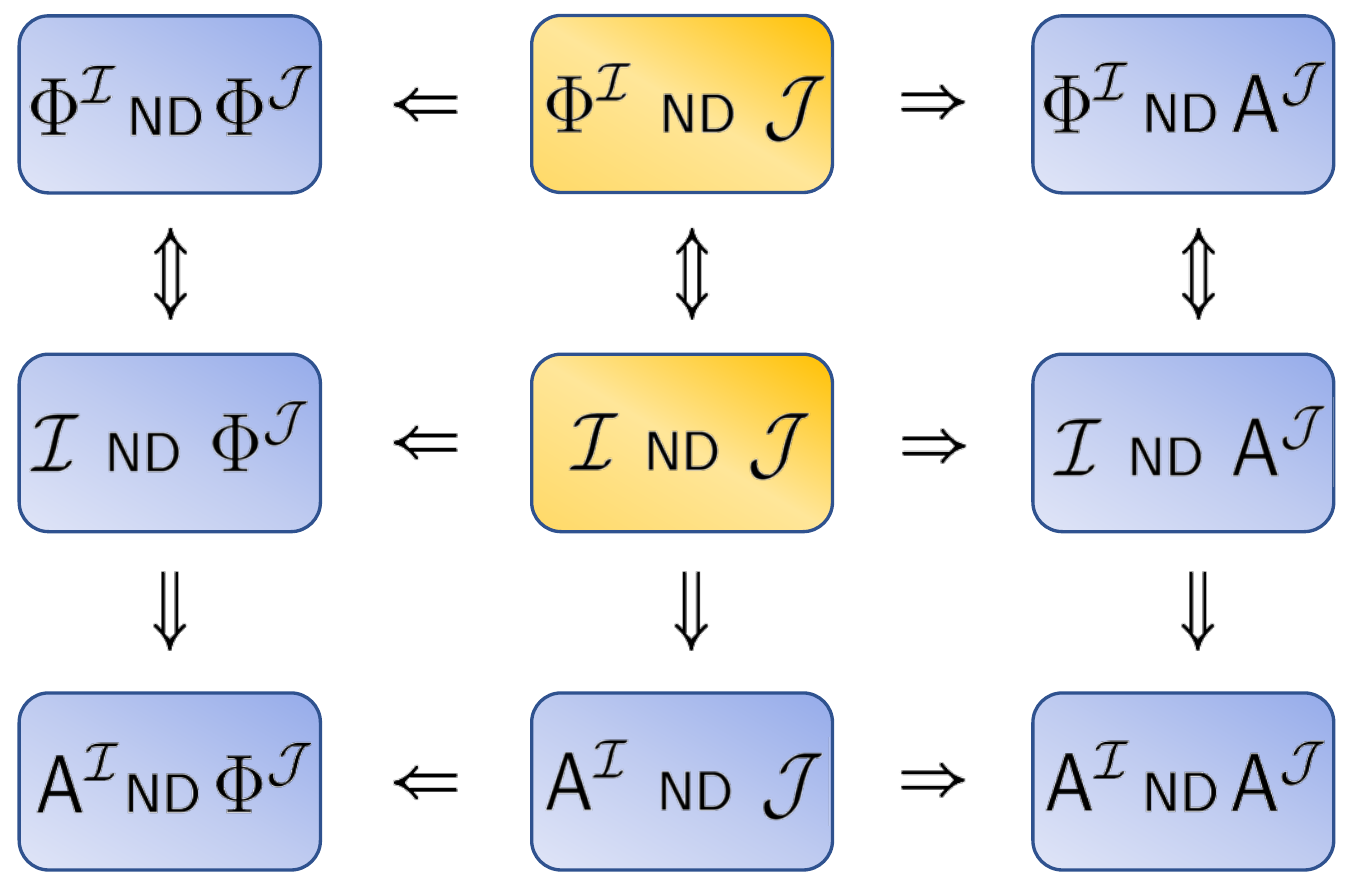}
\caption{\label{fig:2} Implications of the 
non-disturbance of instruments $\I$ and $\J$. Non-disturbance of an instrument automatically implies non-disturbance of  its induced POVM and its induced channel.}
\end{figure}

It is known that there is a clear connection between non-disturbance and compatibility in the case of POVMs (see e.g. \cite{HeinosaariWolf10}). In particular, if a POVM is not disturbed by a measurement of another POVM, then they can be measured jointly sequentially so that they are compatible. Interestingly, if one of the POVMs is sharp, then their compatibility is also sufficient for the sharp POVM not to disturb the other one. We can generalize this known connections to the situation when a POVM does not disturb an instrument.

\begin{proposition}\label{prop:nd-comp}
If a POVM $\A$ does not disturb an instrument $\J$, then $\A$ and $\J$ are compatible. Furthermore, for sharp $\A$ also the converse holds.
\end{proposition}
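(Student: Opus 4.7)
The proof has two directions. For the forward direction, I would take the witness instrument $\I \in \ins(\Omega, \hi)$ provided by the hypothesis, so that $\A^\I = \A$ and $\J_y \circ \Phi^\I = \J_y$ for every $y$, and define the joint instrument by sequential application, $\G_{(x,y)} := \J_y \circ \I_x$. This is completely positive as a composition and has total sum equal to the channel $\Phi^\J \circ \Phi^\I$, so it lies in $\ins(\Omega \times \Lambda, \hi, \hik)$. The two marginal conditions of Def. \ref{def:povm-ins} follow at once: $\sum_x \G_{(x,y)} = \J_y \circ \Phi^\I = \J_y$ by the non-disturbance hypothesis, and $\tr{\sum_y \A^\G(x,y)\varrho} = \tr{\Phi^\J(\I_x(\varrho))} = \tr{\I_x(\varrho)} = \tr{\A(x)\varrho}$ using trace-preservation of $\Phi^\J$ together with $\A^\I = \A$. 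Hence $\A \coex \J$.

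For the converse with sharp $\A$, writing $P(x) := \A(x)$ and fixing any joint instrument $\G$ witnessing $\A \coex \J$, the key structural step is the operation-level identity
\begin{align*}
\G_{(x,y)}(\varrho) = \G_{(x,y)}(P(x)\varrho P(x)) \qquad \text{for every } \varrho, x, y.
\end{align*}
This follows from $\sum_y \A^\G(x,y) = P(x)$ together with positivity of the summands, which forces $0 \leq \A^\G(x,y) \leq P(x)$; in any Kraus decomposition $\G_{(x,y)}(\varrho) = \sum_k K_k \varrho K_k^*$ this yields $\sum_k \|K_k\psi\|^2 = 0$ for $\psi \in \ker P(x)$, so every Kraus operator satisfies $K_k = K_k P(x)$. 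Choosing as the candidate non-disturbing implementation of $\A$ the Lüders instrument $\I^\A$ from Example \ref{ex:luders}, with $\Phi^{\I^\A}(\varrho) = \sum_{x'} P(x') \varrho P(x')$, and using the orthogonality $P(x)P(x') = \delta_{x,x'}P(x)$, one then computes
\begin{align*}
\J_y \circ \Phi^{\I^\A}(\varrho) &= \sum_{x',x} \G_{(x,y)}(P(x)P(x')\varrho P(x')P(x)) \\
&= \sum_x \G_{(x,y)}(P(x)\varrho P(x)) = \sum_x \G_{(x,y)}(\varrho) = \J_y(\varrho),
\end{align*}
where the third equality applies the structural identity once more in reverse and the last equality uses the marginal $\sum_x \G_{(x,y)} = \J_y$. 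Consequently $\I^\A \nd \J$, and therefore $\A \nd \J$.

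The main obstacle is the structural step of the converse: one has to promote the scalar inequality $\A^\G(x,y) \leq P(x)$ on the induced POVM into a pointwise identity for the quantum operations $\G_{(x,y)}$ themselves, namely that each factors through compression by $P(x)$. Sharpness is essential here — only orthogonal projections annihilate their complementary subspaces, which is exactly what makes the cross-terms in $\Phi^{\I^\A}(\varrho)$ collapse and allows one to recognize the Lüders instrument of $\A$ as a non-disturbing implementation. The rest of the argument, and in particular the forward direction, is a direct use of the sequential composition $\J_y \circ \I_x$ as the joint instrument.
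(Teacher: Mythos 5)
Your proof is correct. The forward direction is exactly the paper's argument: the sequential composition $\G_{(x,y)} = \J_y \circ \I_x$ serves as the joint instrument, with the marginals checked the same way. The converse, however, takes a genuinely different route. The paper first passes to the induced POVMs, invokes the result of \cite{HeinosaariReitznerStano08} that a sharp POVM compatible with another POVM commutes with it and has the unique joint POVM $\A(x)\A^\J(y)$, and then uses the structure theorem of Example \ref{ex:luders} to write $\G_{(x,y)} = \E^{(x,y)} \circ \I^{\A}_x \circ \I^{\A^\J}_y$ before computing $\J_y \circ \Phi^{\I^\A}$. You instead prove the pointwise factorization $\G_{(x,y)}(\varrho) = \G_{(x,y)}(\A(x)\varrho\A(x))$ directly at the level of Kraus operators, using only the operator inequality $\A^\G(x,y) \leq \A(x)$ forced by the marginal condition, and then let the orthogonality $\A(x)\A(x') = \delta_{x,x'}\A(x)$ collapse the double sum. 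Your version is more self-contained: it needs neither the external commutativity/uniqueness result nor the Lüders-plus-conditional-channel decomposition of the joint instrument, and it isolates cleanly where sharpness enters (projections annihilate their complementary subspaces, and distinct elements of a sharp POVM are mutually orthogonal). The paper's version, in exchange, exhibits the finer structure of the joint instrument (that $\A^\G$ is necessarily the product joint POVM), which is of independent interest but not needed for the conclusion. Both arguments end at the same place: the Lüders instrument $\I^\A$ is the non-disturbing implementation of $\A$.
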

\begin{proof}
Let $\A \in \obs(\Omega, \hi)$ and $\J \in \ins(\Lambda, \hi, \hv)$. If $\A \nd \J$, then there exists an instrument $\I \in \ins(\Omega, \hi)$ such that $\A^\I = \A$ and $\J_y \circ \Phi^\I = \J_y$ for all $y \in \Lambda$. We can define a joint instrument $\G \in \ins(\Omega \times \Lambda, \hi, \hv)$ for $\A$ and $\J$ by setting $\G_{(x,y)} = \J_y \circ \I_x$ for all $x \in \Omega$ and $y \in \Lambda$.

Let now $\A$ be sharp. If $\A \comp \J$, then there exists an instrument $\G \in \ins(\Omega \times \Lambda, \hi,\hv)$ such that $\sum_{y \in \Lambda} \A^\G(x,y) = \A(x)$ for all $x \in \Omega$ and $\sum_{x \in \Omega} \G_{(x,y)} = \J_y$ for all $y \in \Lambda$. Thus, $\Go:=\A^\G$ is a joint POVM for $\A$ and $\B:= \A^\J$, and because $\A$ is sharp then it follows from \cite{HeinosaariReitznerStano08} that they commute and they have a unique joint POVM of the form $\Go(x,y) = \A(x) \B(y)$ for all $x \in \Omega$ and $y \in \Lambda$. Furthermore, as is explained in Example \ref{ex:luders}, since $\A^\G = \Go$, there exist a set of channels $\{\E^{(x,y)}\}_{x \in \Omega, y \in \Lambda} \subset \ch(\hi, \hv)$ such that $\G_{(x,y)} = \E^{(x,y)} \circ \I^{\Go}_{(x,y)}$ for all $x \in \Omega$ and $y \in \Lambda$. We note that from the form of $\Go$ and the commutativity of $\A$ and $\B$ it follows that $\I^{\Go}_{(x,y)} = \I^\A_x \circ \I^\B_y = \I^\B_y \circ \I^\A_x$ for all  $x \in \Omega$ and $y \in \Lambda$. Furthermore, from the sharpness of $\A$ it also follows that $\I^\A_x \circ \I^\A_{x'} = \delta_{xx'} \I^\A_x$ for all $x,x' \in \Omega$. Thus, we have that
\begin{align*}
\J_y \circ \Phi^{\I^\A} &= \sum_{x, x' \in \Omega} \G_{(x,y)} \circ \I^\A_{x'} =  \sum_{x, x' \in \Omega} \E^{(x,y)} \circ \I^{\Go}_{(x,y)}  \circ \I^\A_{x'} \\
&= \sum_{x, x' \in \Omega} \E^{(x,y)} \circ  \I^\B_y \circ \I^\A_x  \circ \I^\A_{x'} \\
&= \sum_{x \in \Omega} \E^{(x,y)} \circ  \I^\B_y \circ \I^\A_x  \\
&= \sum_{x \in \Omega} \G_{(x,y)} = \J_y
\end{align*}
for all $y \in \Lambda$ so that the Lüders instrument of $\A$ does not disturb the instrument $\J$.
\end{proof}

By taking $\J$ to be some POVM $\B$, the known result $\A \nd \B \Rightarrow \A \comp \B$ (see e.g. \cite{HeinosaariWolf10}) can be recovered, and similarly by taking $\J$ to be some channel $\Phi$, we also recover the implication $\A \nd \Phi \Rightarrow \A \comp \Phi$. Naturally for sharp $\A$ their converses also hold.

Finally, we can combine some of the results from Prop. \ref{prop:ins-comp-sum}, Prop. \ref{prop:nd-implications} and Prop. \ref{prop:nd-comp}.

\begin{corollary}
Let $\I \in \ins(\Omega, \hi)$ and $\J \in \ins(\Lambda, \hi, \hv)$ be two instruments. Then the following implications hold:
\begin{equation*}
    \I \nd \J \ \Rightarrow \ \A^\I \nd \J \ \Rightarrow \ \A^\I \comp \J \ \Rightarrow \A^\I \comp \Phi^\J \ \mathrm{and} \ \A^\I \comp \A^\J
\end{equation*}
\end{corollary}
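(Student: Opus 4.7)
The plan is to observe that the corollary is a bookkeeping consequence of three statements already proved above, and to chain the three arrows by invoking those results in turn.

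The first arrow $\I \nd \J \Rightarrow \A^\I \nd \J$ follows immediately from Definition \ref{def:nondisturb}: a POVM $\A$ is declared non-disturbing for $\J$ whenever \emph{some} instrument implementing it does not disturb $\J$, and if $\I \nd \J$ then $\I$ itself serves as such a witness for its own induced POVM $\A^\I$. This is exactly the upper arrow in Figure \ref{fig:2} and so is already contained in Proposition \ref{prop:nd-implications}. The second arrow $\A^\I \nd \J \Rightarrow \A^\I \comp \J$ is then Proposition \ref{prop:nd-comp} applied directly to the POVM $\A^\I$; the joint instrument produced in that proof has the form $\G_{(x,y)} = \J_y \circ \I'_x$, where $\I'$ is any non-disturbing realisation of $\A^\I$.

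For the final arrow $\A^\I \comp \J \Rightarrow \A^\I \comp \Phi^\J$ and $\A^\I \comp \A^\J$, I would apply Corollary \ref{cor:ins-comp-sum} with $\A^\I$ viewed as an instrument having one-dimensional output space, as is done consistently throughout Section \ref{sec:3}. With this identification the corollary immediately yields compatibility of $\A^\I$ with both $\A^\J$ and $\Phi^\J$. I do not foresee any real obstacle: the statement is a consolidation of earlier facts, and the only convention-level subtlety is the observation that compatibility between a POVM and either a channel or an instrument is covered by the same framework through the one-dimensional-output identification.
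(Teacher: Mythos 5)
Your proposal is correct and matches the paper's intended argument exactly: the paper presents this corollary as a direct combination of Proposition \ref{prop:nd-implications} (first arrow, via the witness $\I$ for its own induced POVM), Proposition \ref{prop:nd-comp} (second arrow), and Proposition \ref{prop:ins-comp-sum} applied to $\A^\I$ viewed as an instrument with one-dimensional output space (third arrow). Nothing is missing.
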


Again by choosing the instrument $\J$ to be a POVM or a channel we can recover the respective special cases.

\section{Postprocessing and compatibility} \label{sec:5}

For POVMs it is known that compatibility can be equivalently stated in terms of \emph{postprocessing} (see e.g. \cite{AliCarmeliHeinosaariToigo09}): POVMs $\A \in \obs(\Omega, \hi)$ and $\B \in \obs(\Lambda,\hi)$ are compatible if and only if both of them can be postprocessed from a single observable $\C \in \obs(\Gamma, \hi)$, i.e., there exist stochastic matrices $\nu^\A:=(\nu^\A_{zx})_{z \in \Gamma, x \in \Omega}$ and $\nu^\B:=(\nu^\B_{zy})_{z \in \Gamma, y \in \Lambda}$ (so that $\nu^\A_{zx}\geq 0$, $\nu^\B_{zy}\geq 0$ and $\sum_x \nu^\A_{zx} = \sum_y \nu^\B_{zy} =1$ for all $x \in \Omega$, $y \in \Lambda$ and $z \in \Gamma$) such that $\A(x) = \sum_{z \in \Gamma} \nu^\A_{zx} \C_z$ for all $x \in \Omega$ and  $\B(y) = \sum_{z \in \Gamma} \nu^\B_{zy} \C_z$ for all $y \in \Lambda$. In this case we denote $\A \preceq \C$ and $\B \preceq \C$. In this section we investigate the relation between compatibility and postprocessing for quantum instruments.

\subsection{Postprocessing of instruments}

As a generalization of POVM and channel postprocessing, we recall the following definition of postprocessing of instruments from \cite{LeppajarviSedlak21}:

\begin{definition}
An instrument $\I \in \ins(\Omega, \hi, \hik)$ is a postprocessing of an instrument $\J \in \ins(\Lambda, \hi, \hv)$, denoted by $\I \preceq \J$, if there exist instruments $\{\R^{(y)} \}_{y \in \Lambda} \subset \ins(\Omega, \hv, \hik)$ such that $\I_x = \sum_{y \in \Lambda} \R^{(y)}_x \circ \J_y$ for all $x \in \Omega$.
\end{definition}

By considering POVMs as instruments with one-dimensional output space, it can be shown (see \cite[Prop. 13]{LeppajarviThesis} for an analogous proof) that the previous definition reduces to the definition of postprocessing of the POVMs which was described above. For channels $\Phi \in \ch(\hi, \hik)$ and $\Psi \in \ch(\hi, \hv)$, we can express the known definition as follows:  $\Phi \preceq \Psi$ if there exists a channel $\Phi' \in \ch(\hv, \hik)$ such that $\Phi = \Phi' \circ \Psi$. This has been previously studied in \cite{HeinosaariMiyadera13,Jencova21,BenyOreshkov11} and it can also be interpreted as an instrument postprocessing in which the postprocessing instrument is just a single channel $\Phi'$. 

We also want to note that one can consider postprocessings between channels and instruments: an instrument $\I \in \ins(\Omega, \hi, \hik)$ is a postprocessing of a channel $\Psi \in \ch(\hi, \hv)$, i.e. $\I \preceq \Psi$ if there exists an instrument $\Q \in \ins(\Omega, \hv, \hik)$ such that $\I_x = \Q_x \circ \Psi$ for all $x \in \Omega$, and on the other hand,  $\Psi \preceq \I$ if there exists channels $\{\Phi^{(x)}\}_{x \in \Omega} \subset \ch(\hik, \hv)$ such that $\Psi = \sum_{x \in \Omega} \Phi^{(x)} \circ \I_x$.

Next, we observe that the compatibility condition for instruments as presented in Def. \ref{def:instruments} implies also a postprocessing relation between the joint instrument and the original instruments, i.e. $\I \comp \J \Rightarrow \exists \G: \I \preceq \G \land \J \preceq \G$. This holds, because making a partial trace and ignoring one of the outcomes/outputs of the joint instrument is a valid operation in the instrument postprocessing sense. Thus, being postprocessings of a single instrument constitutes a necessary condition for instrument compatibility. It clearly is not a sufficient condition (unlike in the case of POVMs), because for instruments there exists (see \cite{HeinosaariMiyadera13} and \cite{LeppajarviSedlak21}) a postprocessing greatest element (e.g. the identity channel). From this element every other channel and instrument can be postprocessed. Thus, if the existence of a common upper bound would imply compatibility of instruments, then all channels and instruments should be compatible, which is not true as the example of two identity channels shows. 

Although existence of a common upper bound in the postprocessing order is a necessary condition for instrument compatibility, 
unfortunately, it is not very practical, because narrowing down the search for joint instrument to all common upper bounds of the two original instruments might be a very complex task. Nevertheless, we can still find connections between the postprocessing and the compatibility relations. 

\begin{proposition}\label{prop:ins-pp-comp}
Let $\I \in \ins(\Omega, \hi,\hik)$ and $\J \in \ins(\Lambda, \hi, \hv)$ be two instruments. If $\I \preceq \J$, then $\I \comp \A^\J$, and in particular $\Phi^\I \comp \A^\J$ and $\A^\I \comp \A^\J$.
\end{proposition}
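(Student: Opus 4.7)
The plan is to use the postprocessing decomposition $\I_x = \sum_{y \in \Lambda} \R^{(y)}_x \circ \J_y$ to write down an explicit joint instrument witnessing $\I \comp \A^\J$, and then to invoke the already-established Proposition \ref{prop:ins-comp-sum} to get the two special cases for free.

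Concretely, I would define $\G \in \ins(\Lambda \times \Omega, \hi, \hik)$ by the formula
\begin{equation*}
    \G_{(y,x)} := \R^{(y)}_x \circ \J_y, \qquad y \in \Lambda,\ x \in \Omega.
\end{equation*}
Each $\G_{(y,x)}$ is completely positive as a composition of completely positive maps. To check that $\G$ is a genuine instrument I would verify trace preservation of the total sum: since each $\R^{(y)}$ is an instrument, $\Phi^{\R^{(y)}} = \sum_{x \in \Omega} \R^{(y)}_x$ is a channel, so
\begin{equation*}
    \sum_{y,x} \tr{\G_{(y,x)}(\varrho)} = \sum_{y \in \Lambda} \tr{\Phi^{\R^{(y)}}(\J_y(\varrho))} = \sum_{y \in \Lambda} \tr{\J_y(\varrho)} = \tr{\varrho}
\end{equation*}
for all $\varrho \in \sh$, using that $\Phi^\J$ is trace-preserving.

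Next I would verify the two marginal conditions appearing in Definition \ref{def:povm-ins}. Summing over $y$ gives directly the postprocessing identity
\begin{equation*}
    \sum_{y \in \Lambda} \G_{(y,x)} = \sum_{y \in \Lambda} \R^{(y)}_x \circ \J_y = \I_x \quad \forall x \in \Omega,
\end{equation*}
so $\I$ is recovered as a marginal of $\G$. For the POVM side, for every $\varrho \in \sh$ and every $y \in \Lambda$,
\begin{equation*}
    \sum_{x \in \Omega} \tr{\A^\G(y,x) \varrho} = \sum_{x \in \Omega} \tr{\R^{(y)}_x(\J_y(\varrho))} = \tr{\Phi^{\R^{(y)}}(\J_y(\varrho))} = \tr{\J_y(\varrho)} = \tr{\A^\J(y) \varrho},
\end{equation*}
which by arbitrariness of $\varrho$ gives $\sum_{x \in \Omega} \A^\G(y,x) = \A^\J(y)$. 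Hence $\G$ witnesses $\I \comp \A^\J$.

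Finally, the statements $\Phi^\I \comp \A^\J$ and $\A^\I \comp \A^\J$ are immediate from Corollary \ref{cor:ins-comp-sum} applied to the pair $(\I, \A^\J)$ (viewing the POVM $\A^\J$ as an instrument with one-dimensional output, so that its induced POVM and induced channel both reduce back to $\A^\J$). I do not expect any real obstacle in this argument; the only minor point to get right is that the trace-preservation of each $\Phi^{\R^{(y)}}$ is exactly what makes both the totality condition on $\G$ and the POVM marginal condition work simultaneously — nothing deeper is needed.
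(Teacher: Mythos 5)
Your construction $\G_{(y,x)} = \R^{(y)}_x \circ \J_y$ is exactly the joint instrument used in the paper's proof (up to the ordering of the outcome pair), and the reduction of the two special cases to Proposition \ref{prop:ins-comp-sum} is likewise identical. The proposal is correct and follows essentially the same route, just with the normalization and marginal checks spelled out in more detail.
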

\begin{proof}
If $\I \preceq \J$, then there exist instruments $\{\R^{(y)} \}_{y \in \Lambda} \subset \ins(\Omega, \hv, \hik)$ such that $\I_x = \sum_{y \in \Lambda} \R^{(y)}_x \circ \J_y$ for all $x \in \Omega$. If we now define an instrument $\G \in \ins(\Omega \times \Lambda, \hi, \hik)$ by setting $\G_{(x,y)} = \R^{(y)}_x \circ \J_y$ for all $x \in \Omega$ and $y \in \Lambda$, we can confirm that $\sum_{y \in \Lambda} \G_{(x,y)} = \I_x$ for all $x \in \Lambda$ and that $\sum_{x \in \Lambda} \A^\G(x,y) = \A^\J(y)$ for all $y \in \Lambda$. Thus, $\G$ is a joint instrument for $\I$ and $\A^\J$ so that $\I \comp \A^\J$. It follows from Prop. \ref{prop:ins-comp-sum} that then also $\A^\I \comp \A^\J$ and $\Phi^\I \comp \A^\J$.
\end{proof}

The above result resembles a similar result for POVMs, which states that if one POVM can be postprocessed from another, then they must be compatible as they can be both postprocessed from a single POVM. However, the main difference between these results is that for instruments the postprocessing relation does not concern having multipartite quantum output, since it only describes sequential operations while the compatibility relation for instruments does require multipartite quantum output. For this reason the postprocessing relation in the above proposition does not guarantee compatibility of the considered instruments, but only compatibility of an induced POVM and the other instrument.

Luckily, it turns out that there is a way to get a necessary and sufficient instrument postprocessing condition for compatibility of instruments, although it considers postprocessings of so-called complementary instruments that we consider next.

\subsection{Complementary instrument}\label{sec:complementary-ins}
We recall that each Stinespring dilation $(\hi_A, W)$ of a channel $\Phi \in \ch(\hi, \hik)$ defines a \emph{complementary channel} (or \emph{conjugate channel}) $\Phi^C \in \ch(\hi, \hi_A)$ by tracing out the output Hilbert space $\hik$ instead of the ancillary Hilbert space $\hi_A$ in Eq. \eqref{eq:stinespring-def} so that 
\begin{equation*}
    \Phi^C(\varrho) = \ptr{\hik}{W \varrho W^*}
\end{equation*}
for all $\varrho \in \sh$. Similarly, we want to define an analogous concept for instruments and for this we first need to consider dilations of instruments. 

\begin{definition}[Dilation of an instrument]\label{def:MinDilInst}
A \emph{dilation of an instrument} $\I \in \ins(\Omega, \hi, \hik)$ is a triple $(\hi_A, W, \Eo)$ consisting of
a Hilbert space  $\hi_A$,  an isometry $W : \hi \rightarrow \hi_A \otimes \hik$ and a POVM $\Eo\in \obs(\Omega,\hi_A)$
such that
\begin{align}\label{eq:inst-dil-def}
\I_x(\varrho)=\ptr{\hi_A}{W\varrho W^* \left(\Eo(x) \otimes I_{\hik} \right)}
\end{align}
 for all $\varrho \in \sh$ and $x \in \Omega$. Then Eq. \eqref{eq:inst-dil-def} implies that $(\hi_A, W)$ is a Stinespring dilation of the channel $\Phi^\I$, and we say that the dilation $(\hi_A, W, \Eo)$ of $\I$ is minimal if the Stinespring dilation $(\hi_A,W)$ of $\Phi^\I$ is minimal.
\end{definition}

\begin{remark}\label{remark:min-dil-unique-povm}
Although the definition of a minimal dilation of an instrument only considers the minimality of the Stinespring dilation of the induced channel, we note that it still has effects on the POVM in the dilation. For example, one sees that for a given minimal dilation $(\hi_A, W, \Eo)$ of an instrument $\I \in \ins(\Omega, \hi, \hik)$ the POVM $\Eo \in \obs(\Omega, \hi_A)$ is unique: Let $(\hi_A, W, \Fo)$ be a dilation of $\I$ for some POVM $\Fo \in \obs(\Omega,\hi_A)$. Clearly, since $(\hi_A, W, \Eo)$ is minimal then so is $(\hi_A, W, \Fo)$ as well. Since they are dilations for the same instrument $\I$, by using Eq. \eqref{eq:inst-dil-def} for both of the dilations it follows that $W^*(\Eo(x) \otimes X) W = W^*(\Fo(x) \otimes X) W$ for all $X \in \lk$ and $x \in \Omega$. It is straightforward to see from the minimality condition it follows that we must have $\Eo = \Fo$. This can also be seen as an immediate corollary of the Radon-Nikodym theorem for quantum operations \cite{Raginsky03}. In particular, if $(\hi'_A, W', \Eo')$ is any other dilation of $\I$, then we get that it is linked to the minimal dilation $(\hi_A, W, \Eo)$ by an isometry $V: \hi_A \to \hi'_A$ such that $W'=(V \otimes I_\hik)W$, and furthermore, $\Eo(x) = V^* \Eo'(x) V$ for all $x \in \Omega$.
\end{remark}

In the literature, a dilation is sometimes also called a \emph{measurement model} (see. e.g. \cite{HeinosaariZimanBook}). We note that dilation can also be defined differently, for example by requiring that the POVM $\Eo$ is sharp. In particular, the usual Stinespring type representation for completely positive instruments requires that $\Eo$ is sharp. However, for our purposes Definition \ref{def:MinDilInst} is  more convenient. It was shown in \cite{Ozawa84} that each quantum instrument indeed has a dilation, and even that the POVM on the ancillary Hilbert space can be chosen to be sharp (in this case the dilation is not necessarily minimal). For finite-dimensional Hilbert spaces the minimal dilation of an instrument can always be explicitly constructed (see e.g. \cite{ChiribellaD'ArianoPerinotti09}).

\begin{definition}[Complementary instrument]\label{def:ComplInst}\leavevmode
For an instrument $\I \in \ins(\Omega, \hi, \hik)$ with a dilation $(\hi_A, W, \Eo)$ we define \emph{complementary instrument} $\I^C \in \ins(\Omega, \hi, \hi_A)$ relative to this dilation via the formula
\begin{align}
\I^C_x(\varrho)=\ptr{\hik}{\left(\sqrt{\Eo(x)}\otimes I_{\hik} \right) W\varrho W^*  \left(\sqrt{\Eo(x)}\otimes I_{\hik}\right) }
\end{align}
for all $\varrho \in \sh$ and $x \in \Omega$.
\end{definition}

We note that complementary instrument can also be interpreted as the complementary channel $\left(\Phi^\I \right)^C$ of $\Phi^\I$ followed by the Lüders instrument $\I^\Eo$ of the POVM $\Eo$ from the dilation of $\I$. Indeed, it is easy to confirm that
\begin{equation}\label{eq:compl-ins}
    \I^C_x = \I^\Eo_x \circ \left(\Phi^\I \right)^C
\end{equation}
for all $x \in \Omega$ and any dilation of $\I$.

\begin{example}\label{ex:compl-ins-povm}
Let $\A \in \obs(\Omega, \hi)$ be a POVM. We can consider  $\A$ as an instrument $\mathcal{A} \in \ins(\Omega, \hi, \complex)$ with a one-dimensional output space $\complex$ such that $\mathcal{A}_x(\varrho) = \tr{\A(x) \varrho}$ for all $\varrho \in \sh$ and $x \in \Omega$. We can define a trivial dilation $(\hi_A, W, \Eo)$ for $\mathcal{A}$ by taking $\hi_A = \hi$, $W\varphi = \varphi \otimes I_\complex$ for all $\varphi \in \hi$ and $\Eo =\A$. With these choices we have that $\mathcal{A}_x(\varrho) = \ptr{\hi_A}{W \varrho W^*(\Eo(x) \otimes I_\complex)}$ for all  $\varrho \in \sh$ and $x \in \Omega$. On the other hand, the complementary instrument $\mathcal{A}^C \in \ins(\Omega, \hi)$ related to this dilation is
\begin{align*}
    \mathcal{A}_x^C(\varrho) &= \ptr{\complex}{\left(\sqrt{\Eo(x)} \otimes I_\complex \right) W \varrho W^* \left(\sqrt{\Eo(x)} \otimes I_\complex \right)} \\
    &= \ptr{\complex}{\sqrt{\A(x)}\varrho\sqrt{\A(x)} \otimes I_\complex} = \sqrt{\A(x)}\varrho\sqrt{\A(x)} = \I^\A_x(\varrho)
\end{align*}
 for all  $\varrho \in \sh$ and $x \in \Omega$. Thus, one explicit form of the complementary instrument of a POVM is the Lüders instrument of this POVM.
\end{example}

\subsection{Postprocessing of complementary instruments}

For two quantum channels it was shown in \cite{HeinosaariMiyadera17} that they are compatible if and only if one of them can be postprocessed from any complementary (i.e. conjugate) channel of the other one. More formally, \emph{if $\Phi \in \ch(\hi,\hik)$ and $\Psi \in \ch(\hi, \hv)$ are two channels, then they are compatible, $\Phi \comp \Psi$, if and only if $\Phi \preceq \Psi^C$ (or equivalently $\Psi \preceq \Phi^C$) for any complementary channel $\Psi^C$ ($\Phi^C)$ related to any Stinespring dilation of the channel $\Psi$ ($\Phi$)}. The reason that the relation holds for any complementary channels follows from the fact that any two complementary channels related to two different Stinespring dilations of the same channel are postprocessing equivalent \cite{HeinosaariMiyadera17}.

In this section we will generalize the above compatibility condition in the case of instruments. First, we show that as in the case of channels, also for instruments the different complementary instruments of the same instrument are postprocessing equivalent.

\begin{proposition}\label{prop:ins-dil-pp-equiv}
Let $\I \in \ins(\Omega, \hi, \hik)$ be an instrument. Then all complementary instruments related to different dilations of $\I$ are postprocessing equivalent.
\end{proposition}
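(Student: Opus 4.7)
The plan is to reduce pairwise equivalence to equivalence with a minimal dilation (using transitivity of $\preceq$) and then exhibit the postprocessing channels explicitly via polar decomposition. Let $(\hi_A,W,\Eo)$ be a minimal dilation of $\I$ and let $(\hi_A',W',\Eo')$ be another dilation. By Remark \ref{remark:min-dil-unique-povm} there is an isometry $V:\hi_A\to\hi_A'$ with $W'=(V\otimes I_\hik)W$ and $\Eo(x)=V^*\Eo'(x)V$ for every $x\in\Omega$. Note that $\mathrm{Ad}_V(\sigma):=V\sigma V^*$ is a channel because $V^*V=I_{\hi_A}$, and that $(\Phi^\I)^{C'}=\mathrm{Ad}_V\circ(\Phi^\I)^C$ by the definition of the complementary channel.

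Using Eq.~\eqref{eq:compl-ins}, a short calculation gives
\begin{align*}
\I^{C'}_x(\varrho)
=\sqrt{\Eo'(x)}\,V(\Phi^\I)^C(\varrho)V^*\sqrt{\Eo'(x)}
=K_x\,(\Phi^\I)^C(\varrho)\,K_x^*,
\qquad K_x:=\sqrt{\Eo'(x)}\,V.
\end{align*}
The key observation is that $K_x^*K_x=V^*\Eo'(x)V=\Eo(x)$, so the polar decomposition yields $K_x=U_x\sqrt{\Eo(x)}$, where $U_x:\hi_A\to\hi_A'$ is a partial isometry with initial subspace $\mathrm{ran}(\sqrt{\Eo(x)})$. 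Since $U_x^*U_x\sqrt{\Eo(x)}=\sqrt{\Eo(x)}$ and $U_xU_x^*U_x=U_x$, this immediately gives the two identities
\begin{align*}
\I^{C'}_x(\varrho)=U_x\,\I^C_x(\varrho)\,U_x^*,\qquad
\I^C_x(\varrho)=U_x^*\,\I^{C'}_x(\varrho)\,U_x.
\end{align*}

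Because $U_x$ is only a partial isometry, its conjugation is not a channel, so I would lift it in the standard way: for fixed states $\omega_x\in\mathcal{S}(\hi_A')$ and $\eta_x\in\mathcal{S}(\hi_A)$ set
\begin{align*}
\Phi^{(x)}(\sigma)=U_x\sigma U_x^*+\tr{(I-U_x^*U_x)\sigma}\,\omega_x,\qquad
\Psi^{(x)}(\sigma)=U_x^*\sigma U_x+\tr{(I-U_xU_x^*)\sigma}\,\eta_x,
\end{align*}
which are CP and trace-preserving. The trace-correction terms vanish on $\I^C_x(\varrho)$ and $\I^{C'}_x(\varrho)$ respectively, since $(I-U_x^*U_x)\sqrt{\Eo(x)}=0$ and $(I-U_xU_x^*)U_x=0$, hence $\Phi^{(x)}\circ\I^C_x=\I^{C'}_x$ and $\Psi^{(x)}\circ\I^{C'}_x=\I^C_x$. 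Finally, defining deterministic-outcome postprocessing instruments $\R^{(y)}_x:=\delta_{xy}\Phi^{(y)}$ (and analogously with $\Psi^{(y)}$ for the reverse direction) produces valid instruments whose marginals are channels, and they realize $\I^{C'}_x=\sum_y\R^{(y)}_x\circ\I^C_y$ together with the symmetric relation, so $\I^C$ and $\I^{C'}$ are postprocessing equivalent. Transitivity then yields equivalence between the complementary instruments of any two dilations.

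The main technical obstacle is precisely the passage from the partial isometry $U_x$ to a bona fide channel: one must extend $U_x(\cdot)U_x^*$ in such a way that the added trace-absorbing term does not pollute the image of $\I^C_x$ (and symmetrically for $U_x^*(\cdot)U_x$ and $\I^{C'}_x$). That this works is not automatic; it relies on the support identities built into polar decomposition, namely that $U_x^*U_x$ projects onto $\mathrm{ran}(\sqrt{\Eo(x)})$, which is the range carrying $\I^C_x(\varrho)$ on the left and right.
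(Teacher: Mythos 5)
Your proof is correct, and it takes a genuinely different route from the paper's. Both proofs make the same reduction — compare an arbitrary dilation to a minimal one via the connecting isometry $V$ with $W'=(V\otimes I_\hik)W$ and $\Eo(x)=V^*\Eo'(x)V$, then invoke transitivity — but they diverge at the key step. The paper introduces the auxiliary instrument $\J:=\I^{\Eo'}\circ\Y$ (with $\Y=\mathrm{Ad}_V$), observes that $\A^\J=\Eo$, and then \emph{cites} Prop.~9 of \cite{LeppajarviSedlak21} (postprocessing equivalence of indecomposable instruments with postprocessing-equivalent induced POVMs) to obtain the postprocessing instruments relating $\J$ and $\I^\Eo$, which are then composed with the complementary channel relations. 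You instead give a self-contained, explicit construction: writing $\I^{C'}_x=K_x(\Phi^\I)^C(\cdot)K_x^*$ with $K_x=\sqrt{\Eo'(x)}V$, the identity $K_x^*K_x=\Eo(x)$ and the polar decomposition $K_x=U_x\sqrt{\Eo(x)}$ yield outcome-wise conjugation relations, and the support identities of the partial isometry guarantee that the trace-absorbing corrections needed to promote $\mathrm{Ad}_{U_x}$ and $\mathrm{Ad}_{U_x^*}$ to channels vanish on the relevant outputs. Your argument buys independence from the external result and actually establishes something slightly stronger: the two complementary instruments are related \emph{outcome by outcome} by single channels, so the postprocessing instruments are the deterministic ones $\R^{(y)}_x=\delta_{xy}\Phi^{(y)}$ with no classical relabelling or mixing over outcomes. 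The paper's version is shorter given its earlier machinery and keeps the argument inside its general postprocessing framework. All the individual steps you flag as delicate (the initial space of $U_x$ being $\ran(\sqrt{\Eo(x)})$, the final space containing $\ran(K_x)$, trace preservation of the lifted maps) check out in the finite-dimensional setting of the paper.
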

\begin{proof}
Let $(\hi_A, W, \Eo)$ be any minimal dilation of $\I$ and let us denote the complementary instrument related to this dilation by $\I^C$ so that $\I^C \in \ins(\Omega, \hi, \hi_A)$. We show that a complementary instrument $\I^{C'}\in \ins(\Omega, \hi, \hi'_A)$  related to any other dilation $(\hi'_A, W', \Eo')$ (which is not necessarily minimal) is postprocessing equivalent to the complementary instrument $\I^{C}$. Since the postprocessing relation is transitive this then shows that two complementary instruments related to any two dilations of $\I$ must also be postprocessing equivalent.

Since $(\hi_A, W,\Eo)$ is a minimal dilation of $\I$, i.e., $(\hi_A, W)$ is a minimal Stinespring dilation of $\Phi^\I$, then there exists an isometry $Y: \hi_A \to \hi'_A$ such that $W'=(Y \otimes I)W$. For the complementary channels $(\Phi^\I)^{C}$ and $(\Phi^\I)^{C'}$ it is easy to check that then
\begin{equation}\label{eq:compl-ch-1}
    \left(\Phi^\I\right)^{C'} = \Y \circ \left(\Phi^\I\right)^{C},
\end{equation}
where we have introduced the concatenation channel $\Y \in \ch(\hi_A,\hi'_A)$ defined as $\Y(\varrho) = Y \varrho Y^*$ for all $\varrho \in \mathcal{L}(\hi_A)$. Since $Y$ is an isometry the above relation can be reversed and we also have that
\begin{equation}\label{eq:compl-ch-2}
    \left(\Phi^\I\right)^{C} = \Y^* \circ \left(\Phi^\I\right)^{C'},
\end{equation}
where now $\Y^*$ is the adjoint map of $\Y$, i.e., $\Y^*(\varrho) = Y^* \varrho Y$ for all $\varrho \in \mathcal{L}(\hi'_A)$. Furthermore, as in Remark \ref{remark:min-dil-unique-povm}, in this case for the POVMs it follows from the minimality of $(\hi_A, W, \Eo)$ that $\Eo = \Y^* \circ \Eo'$, i.e., $\Eo(x) = Y^* \Eo'(x) Y$ for all $x \in \Omega$.

Let us now consired the instrument $\J:=\I^{\Eo'} \circ \Y \in \ins(\Omega, \hi_A, \hi'_A)$. We see that now $\A^\J = \Y^* \circ \Eo' = \Eo$. Thus, in particular $\A^\J$ is postprocessing equivalent with $\Eo$ so that by the results of \cite[Prop. 9]{LeppajarviSedlak21} it follows that the indecomposable instruments $\J$ and $\I^{\Eo}$ are postprocessing equivalent. Thus, there exist two sets of instruments $\{\R^{(x)}\}_{x \in \Omega} \subset \ins(\Omega, \hi_A, \hi'_A)$ and $\{\R'^{(x')}\}_{x' \in \Omega} \subset \ins(\Omega, \hi'_A, \hi_A)$ such that 
\begin{align}
    \J_{x'} &=  \sum_{x \in \Omega} \R^{(x)}_{x'} \circ \I^{\Eo}_x, \quad \forall x' \in \Omega, \label{eq:E-luders-to-J} \\
    \I^{\Eo}_{x} &= \sum_{x' \in \Omega} \R'^{(x')}_{x} \circ \J_{x'}    \quad \forall x \in \Omega. \label{eq:J-to-E-luders}
\end{align}

Now using Eq. \eqref{eq:compl-ins} for both dilations along with Eqs. \eqref{eq:compl-ch-1} and \eqref{eq:J-to-E-luders} we see that 
\begin{align*}
    \sum_{x' \in \Omega} \R'^{(x')}_x \circ \I^{C'}_{x'} &= \sum_{x' \in \Omega} \R'^{(x')}_x \circ \I^{\Eo'}_{x'} \circ \left(\Phi^\I \right)^{C'} \\
    &= \sum_{x' \in \Omega} \R'^{(x')}_x \circ \I^{\Eo'}_{x'} \circ  \Y \circ \left(\Phi^\I\right)^{C} \\
    &= \sum_{x' \in \Omega} \R'^{(x')}_x \circ \J_{x'} \circ \left(\Phi^\I\right)^{C} \\
    &= \I^{\Eo}_{x} \circ \left(\Phi^\I\right)^{C} = \I^{C}_x
\end{align*}
for all $x \in \Omega$. Hence, $\I^{C} \preceq \I^{C'}$.

Similarly with Eqs. \eqref{eq:compl-ch-2} and \eqref{eq:E-luders-to-J} we see that
\begin{align*}
    \sum_{x \in \Omega} \R^{(x)}_{x'} \circ \I^{C}_{x} &= \sum_{x \in \Omega} \R^{(x)}_{x'} \circ \I^{\Eo}_{x} \circ \left(\Phi^\I \right)^{C} \\
    &= \sum_{x \in \Omega} \R^{(x)}_{x'} \circ \I^{\Eo}_{x} \circ  \Y^* \circ \left(\Phi^\I\right)^{C'} \\
    &=  \J_{x'} \circ  \Y^* \circ \left(\Phi^\I\right)^{C'} \\
    &= \I^{\Eo'}_{x'} \circ \Y \circ  \Y^* \circ \left(\Phi^\I\right)^{C'} \\
    &= \I^{\Eo'}_{x'} \circ \left(\Phi^\I\right)^{C'} = \I^{C'}_{x'}
\end{align*}
for all $x' \in \Omega$, where moving on to the last line we have combined Eqs. \eqref{eq:compl-ch-1} and \eqref{eq:compl-ch-2}. Hence, also $\I^{C'} \preceq \I^{C}$ so that in fact $\I^{C'}$ is postprocessing equivalent with $\I^{C}$.
\end{proof}

Now we are ready to state and prove that the problem of verifying the compatibility of a pair of instruments can be equivalently recast as checking the postprocessing relation among one of the instruments and the complementary instrument of the other.

\begin{proposition}\label{prop:IncompAndPostproc}
Let $\I \in \ins(\Omega, \hi, \hik)$ and $\J \in \ins(\Lambda, \hi, \hv)$ be instruments and let $\I^C \in \ins(\Omega, \hi, \hi_A)$ be a complementary instrument related to any dilation $(\hi_A, W, \Eo)$. Then the following are equivalent:
\begin{itemize}
    \item[\textit{i)}]  $\I \comp \J$,
    \item[\textit{ii)}] $\J \preceq \I^C$.
\end{itemize}
\end{proposition}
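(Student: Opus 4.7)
By Proposition~\ref{prop:ins-dil-pp-equiv} all complementary instruments of $\I$ are postprocessing equivalent, so it suffices to prove each direction for some conveniently chosen dilation of $\I$ and then transport the conclusion to arbitrary dilations. I would establish the two implications by explicit construction, the easier one being $(ii) \Rightarrow (i)$.

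\emph{Direction $(ii) \Rightarrow (i)$.} Fix a dilation $(\hi_A, W, \Eo)$ of $\I$ and postprocessing instruments $\{\R^{(x)}\}_{x \in \Omega} \subset \ins(\Lambda, \hi_A, \hv)$ realising $\J_y = \sum_{x} \R^{(x)}_y \circ \I^C_x$. Introduce the auxiliary completely positive map
\[
\N_x(\varrho) := (\sqrt{\Eo(x)} \otimes I_\hik)\, W \varrho W^*\, (\sqrt{\Eo(x)} \otimes I_\hik) \in \mathcal{L}(\hi_A \otimes \hik),
\]
which by cyclicity of the partial trace satisfies $\ptr{\hi_A}{\N_x(\varrho)} = \I_x(\varrho)$ and $\ptr{\hik}{\N_x(\varrho)} = \I^C_x(\varrho)$. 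Denoting by $S$ the swap $\hi_A \otimes \hik \to \hik \otimes \hi_A$, set
\[
\G_{(x,y)}(\varrho) := (id_\hik \otimes \R^{(x)}_y)\bigl( S \N_x(\varrho) S^* \bigr).
\]
The marginal over $y$ collapses, using trace-preservation of $\sum_y \R^{(x)}_y$, to $\ptr{\hi_A}{\N_x(\varrho)} = \I_x(\varrho)$; the marginal over $x$ yields $\sum_x \R^{(x)}_y(\I^C_x(\varrho)) = \J_y(\varrho)$. Complete positivity and normalisation of $\G$ are immediate, so $\G$ is a joint instrument for $\I$ and $\J$.

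\emph{Direction $(i) \Rightarrow (ii)$.} Given a joint instrument $\G$ for $\I$ and $\J$, fix any dilation $(\hi_0, U, \Fo)$ of $\G$, so $\G_{(x,y)}(\varrho) = \ptr{\hi_0}{U \varrho U^* (\Fo(x,y) \otimes I_{\hik \otimes \hv})}$. Put $\Eo_0(x) := \sum_y \Fo(x,y)$, $\hi_A := \hi_0 \otimes \hv$, and reorder $U$ into an isometry $W \colon \hi \to \hi_A \otimes \hik$; a short computation shows $(\hi_A, W, \Eo_0 \otimes I_\hv)$ is a dilation of $\I$. From the inequality $\Fo(x,y) \leq \Eo_0(x)$ a Radon--Nikodym-type argument produces operators $T(x,y) \geq 0$ on $\hi_0$ with
\[
\sqrt{\Eo_0(x)}\, T(x,y)\, \sqrt{\Eo_0(x)} = \Fo(x,y) \quad \text{and} \quad \sum_y T(x,y) = \Pi_x,
\]
where $\Pi_x$ is the support projection of $\Eo_0(x)$. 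Completing $\{T(x,y)\}_y$ to a POVM on $\hi_0$ by adding $I - \Pi_x$ to one fixed outcome (a modification which vanishes when sandwiched by $\sqrt{\Eo_0(x)}$), taking operator square roots $M(x,y)$, and defining
\[
\R^{(x)}_y(\sigma) := \ptr{\hi_0}{(M(x,y) \otimes I_\hv)\, \sigma\, (M(x,y)^* \otimes I_\hv)},
\]
cyclicity of partial trace yields $\R^{(x)}_y(\I^C_x(\varrho)) = \ptr{\hik}{\G_{(x,y)}(\varrho)}$ term by term, and summing over $x$ gives $\sum_x \R^{(x)}_y \circ \I^C_x = \J_y$, i.e.\ $\J \preceq \I^C$.

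\emph{Main obstacle.} The subtle step is the Radon--Nikodym decomposition in direction $(i) \Rightarrow (ii)$: when $\Eo_0(x)$ is not invertible, the operators $T(x,y)$ must be defined via generalised inverses on $\mathrm{supp}(\Eo_0(x))$ and then completed to a genuine POVM on $\hi_0$ without destroying the identity $\sqrt{\Eo_0(x)}\, T(x,y)\, \sqrt{\Eo_0(x)} = \Fo(x,y)$. Once these operators are in hand, the remaining computations are direct verifications, and Proposition~\ref{prop:ins-dil-pp-equiv} promotes the result from the specific dilations used to arbitrary complementary instruments.
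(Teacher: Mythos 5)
Your proof is correct, and while the easy direction $(ii)\Rightarrow(i)$ coincides with the paper's construction (the swap $S$ is just a reordering of tensor factors), your argument for $(i)\Rightarrow(ii)$ takes a genuinely different route. The paper fixes a \emph{minimal} dilation $(\hi_\I, W^\I, \Eo^\I)$ of $\I$, relates a dilation of the joint instrument $\G$ to it by an isometry, invokes the sequential realization of the joint POVM $\Eo^\G$ from \cite{HeinosaariWolf10} to split it as $\Eo^\G(x,y)=\Q_x^*(\C(y))$, identifies $\A^\N=\Eo^\I$ via the uniqueness of the POVM in a minimal dilation (Remark \ref{remark:min-dil-unique-povm}), and then applies the L\"uders decomposition of Example \ref{ex:luders}. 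You instead build a (generally non-minimal) dilation of $\I$ directly from a dilation $(\hi_0,U,\Fo)$ of $\G$, with ancilla $\hi_A=\hi_0\otimes\hv$ and dilating POVM $\Eo_0\otimes I_\hv$ where $\Eo_0(x)=\sum_y\Fo(x,y)$, and extract the postprocessing instruments by an operator Radon--Nikodym decomposition $T(x,y)=\Eo_0(x)^{-1/2}\Fo(x,y)\Eo_0(x)^{-1/2}$ (generalized inverse on the support, completed by $I-\Pi_x$ on one outcome, which is annihilated under the sandwich by $\sqrt{\Eo_0(x)}$); the verification $\R^{(x)}_y(\I^C_x(\varrho))=\ptr{\hik}{\G_{(x,y)}(\varrho)}$ then follows from cyclicity of the partial trace. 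Both arguments finish by invoking Proposition \ref{prop:ins-dil-pp-equiv} to pass to an arbitrary complementary instrument. What your approach buys is self-containedness and brevity: it avoids the external sequential-measurement result, the minimality machinery, and the L\"uders decomposition, at the cost of the careful support bookkeeping you correctly flag. What the paper's approach buys is the structural picture emphasized in its conclusions — that every joint instrument admits a realization as a distributing channel followed by individual instruments producing the two outcomes — which your more algebraic derivation does not exhibit.
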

\begin{proof}
Let us start by considering any dilation $(\hi_A, W, \Eo)$ of the instrument $\I$, which gives us a complementary instrument $\I^C$. If $\J \preceq \I^C$ then there exist postprocessing instruments $\R^{(x)} \in \ins(\Lambda,\hi_A, \hv)$ such that $\J_y(\varrho)=\sum_{x \in \Omega} \R^{(x)}_y(\I^C_x(\varrho))$ for all $\varrho\in\sh$ and $y \in \Lambda$. We use it to define the joint instrument $\G \in \ins(\Omega \times \Lambda, \hi, \hv \otimes \hik)$ as follows:
\begin{align}
\G_{(x,y)}(\varrho) := ( \R^{(x)}_y \otimes id_{\hik})\left(\left(\sqrt{\Eo(x)} \otimes I_{\hik} \right)  W \varrho \;W^* \;\left(\sqrt{\Eo(x)} \otimes I_{\hik} \right)\right)
\end{align}
for all $\varrho \in \sh$, $x \in \Omega$ and $y \in \Lambda$. The above can be seen as a composition of three instruments (an isometry + Lüders instrument + one chosen postprocessing instrument), thus $\G$ is a valid instrument. Let us now investigate the marginals of $\G$. In particular, for all $\varrho \in \sh$ we have that
\begin{align}
\ptr{\hv}{\sum_y\G_{(x,y)}(\varrho)}&=\ptr{ \hi_A}{\left(\sqrt{\Eo(x)} \otimes I_{\hik} \right) W \varrho \; W^* 
\left(\sqrt{\Eo(x)} \otimes I_{\hik} \right)} \nonumber  \\
&=\I_x(\varrho), 
\end{align}
where we have used $\ptr{\hv}{\sum_y \left( \R^{(x)}_y \otimes id_\hik \right)(\xi)}=\ptr{\hv}{\left(\Phi^{\R^{(x)}} \otimes id_\hik\right)(\xi)}=\ptr{\hi_A}{\xi}$ for all $\xi \in \mathcal{L}(\hi_A \otimes \hik)$. On the other hand, for all $\varrho \in \sh$ we have that
\begin{align}
\ptr{\hik}{\sum_x\G_{(x,y)}(\varrho)} &= \sum_x \R^{(x)}_y \left(\ptr{\hik}{\left(\sqrt{\Eo(x)} \otimes I_{\hik} \right) W \varrho  W^*  \left(\sqrt{\Eo(x)} \otimes I_{\hik} \right)}\right) \nonumber \\
&=\sum_x \R^{(x)}_y (\I^C_x(\varrho))=\J_y(\varrho). 
\end{align}
We conclude that $\G$ is a valid joint instrument for $\I$ and $\J$, so $\I \comp \J$.

Next, we want to prove that \textit{i)} implies \textit{ii)}. We do that by showing first that there exists a complementary instrument $\I^C$ related to some (minimal) dilation of $\I$ such that $\J \leq \I^C$. Then, from the postprocessing equivalence of all the complementary instruments that was shown in Prop. \ref{prop:ins-dil-pp-equiv} it follows that in fact $\J$ can be postprocessed from any complementary instrument of $\I$.

Let us fix some minimal dilation $(\hi_\I, W^\I, \Eo^\I)$ for $\I$, and let us consider some (not necessarily minimial) dilation $(\hi_\G, W^\G, \Eo^\G)$ of the joint instrument $\G \in \ins(\Omega \times \Lambda, \hi, \hv \otimes \hik)$ so that $W^\G: \hi \rightarrow \hi_\G \otimes \hv \otimes \hik$ and $\Eo^\G \in \obs(\Omega \times \Lambda, \hi_\G)$. We notice that $(\hi_\G \otimes \hv , W^\G)$ defines a dilation for the channel $\Phi^\I$ because
\begin{align}
\Phi^\I(\varrho)&=\sum_x \I_x(\varrho)=\sum_{x,y} \ptr{\hv}{\G_{(x,y)}(\varrho)} \nonumber \\
&=\sum_{x,y} \ptr{\hv}{\ptr{\hi_\G}{W^\G \;\varrho \; (W^\G)^* \left( \Eo^\G(x,y)\otimes I_{\hv}\otimes I_{\hik} \right)}} \nonumber\\
&=\ptr{\hi_\G \otimes \hv}{W^\G \;\varrho \; (W^\G)^*}
\end{align}
for all $\varrho \in \sh$. Any dilation of the channel $\Phi^\I$ is linked to its minimal dilation by an isometry, thus there exists an isometry $U:\hi_\I \to \hi_\G \otimes \hv$ such that
\begin{align}
\label{ref:relofIsom}
W^\G=(U \otimes I_{\hik}) W^\I
\end{align}

Next, we recall an observation already made in \cite{HeinosaariWolf10} that any joint POVM $\G\in \obs(\Omega \times \Lambda, \hi)$ for a pair of compatible POVMs $\A \in \obs(\Omega, \hi)$, $\B\in \obs(\Lambda, \hi)$ has a sequential implementation. More precisely, there exist a choice of Hilbert space $\hik$, an instrument $\Q \in \ins(\Omega, \hi, \hik)$ and a POVM $\C \in \obs(\Lambda, \hik)$ such that the concatenation of the instrument $\Q$ with the POVM $\C$ performs a joint POVM $\G$, i.e. $\Q^*_x(  \C(y)) = \G_{(x,y)}$ for all $x\in \Omega$, $y \in \Lambda$, and $\A^\Q = \A$, $(\Phi^{\Q})^*(\C)=\B$. Previous observation can be used also a bit differently than it is stated. Namely, every POVM $\G$ having a pair of outcomes (i.e. outcome set is a Cartesian product of two finite sets) can be considered as the joint POVM for its marginals. Then we see that such a POVM can be realized sequentially (if the connecting Hilbert space has sufficient dimension) in such a way that we discover one of the outcomes in each step. Thus, we can realize the POVM $\Eo^\G$ as a concatenation of an instrument $\Q \in \ins(\Omega,\hi_\G,\mathbb{C}^{|\Lambda|})$ and a POVM $\C \in \obs(\Lambda, \mathbb{C}^{|\Lambda|})$, i.e.
\begin{align}
\label{ref:2stepMeas}
\Eo^\G(x,y)=(\Q_x)^*\;(\C(y))
\end{align}
for all $x \in \Omega$ and $y \in \Lambda$.

In order to keep formulas short we will denote by caligraphic letters the quantum operations which perform the conjugation with the given isometry. For example, now we have $\mathcal{W}^\I(\varrho)= W^\I\,\varrho \,(W^\I)^*$, and $(U \otimes I_{\hik}) W^\I\;\varrho \;(W^\I)^*( U^* \otimes I_{\hik})=((\mathcal{U} \otimes id_\hik)\circ\mathcal{W}^\I)(\varrho)$, where $id_\hik$ is the identity channel on $\hik$.

Using Eqs. (\ref{ref:relofIsom}) and (\ref{ref:2stepMeas}) we can rewrite the action of the joint instrument $\G$ as
\begin{align}
\label{ref:rewriteInstG}
\G_{(x,y)}(\varrho)&=\ptr{\hi_\G}{\mathcal{W}^\G(\varrho)\left(\Eo^\G(x,y) \otimes I_{\hv}\otimes I_{\hik}\right)} \nonumber \\
&=\ptr{\hi_\G}{((\mathcal{U} \otimes id_\hik)\circ\mathcal{W}^\I)(\varrho) \left(\Q^*_x((\C(y)) \otimes I_{\hv}\otimes I_{\hik}\right)} \nonumber \\
&=\ptr{\mathbb{C}^{|\Lambda|}}{((\Q_x\otimes id_{\hv \otimes \hik})\circ(\mathcal{U} \otimes id_\hik)\circ\mathcal{W}^\I)(\varrho)\left(\C(y) \otimes I_{\hv}\otimes I_{\hik}\right)} \nonumber \\
&=\ptr{\mathbb{C}^{|\Lambda|}}{((\N_x\otimes id_\hik)\circ\mathcal{W}^\I)(\varrho)\left(\C(y) \otimes I_{\hv}\otimes I_{\hik}\right)}
\end{align}
for all $\varrho \in \sh$, $x \in \Omega$ and $y \in \Lambda$, where we defined the  instrument $\N \in \ins(\Omega,\hi_\I,\mathbb{C}^{|\Lambda|}\otimes \hv)$ as
\begin{align}
\label{ref:defInstN}
\N_x = (\Q_x \otimes id_{\hv})\circ \mathcal{U}
\end{align}
for all $x \in \Omega$. This implies that
\begin{align}
\label{ref:rewriteInstI}
\I_{x}(\varrho)&=\ptr{\mathbb{C}^{|\Lambda|} \otimes \hv}{((\N_x\otimes id_\hik)\circ\mathcal{W}^\I)(\varrho)\left(\left(\sum_y \C(y)\right) \otimes I_{\hv}\otimes I_{\hik}\right)} \nonumber\\
&=\ptr{\hi_\I}{\mathcal{W}^\I(\varrho)\left(\A^{\N}(x)\otimes I_{\hik}\right)}
\end{align}
for all $\varrho \in \sh$ and $x \in \Omega$. Thus, since $(\hi_\I,W^\I,\Eo^\I)$ is a minimal dilation, it follows from the uniqueness of the POVM in the minimal dilation discussed in Remark \ref{remark:min-dil-unique-povm} that $\A^\N = \Eo^\I$.

As a next step, we use the known result disussed in Example \ref{ex:luders} that every instrument can be realized as a concatenation of a Lüders instrument of its induced POVM and conditional postprocessing channels depending on the obtained outcome. In our situation it means that there exist channels $\{\E^{(x)} \}_{x \in \Omega} \subset \ch(\hi_\I,\mathbb{C}^{|\Lambda|}\otimes \hv)$ such that
\begin{align}
\label{ref:rewriteInstN}
\N_x = \E^{(x)} \circ \I^{\A^{\N}}_x  =  \E^{(x)} \circ \I^{\Eo^{\I}}_x \quad \forall x \in \Omega.
\end{align}
Using Eq. (\ref{ref:rewriteInstN}) we can rewrite Eq. (\ref{ref:rewriteInstG}) as
\begin{align}
\label{ref:rewriteInstG2}
\G_{(x,y)}(\varrho)&=\ptr{\mathbb{C}^{|\Lambda|}}{((\E^{(x)}\otimes id_\hik) \circ (\I^{\Eo^{\I}}_x\otimes id_\hik)
\circ\mathcal{W}^\I)(\varrho)\left(\C(y) \otimes I_{\hv}\otimes I_{\hik}\right)}
\end{align}
for all $\varrho \in \sh$, $x \in \Omega$ and $y \in \Lambda$,
and consequently
\begin{align}
\label{ref:rewriteInstJ}
\J_y(\varrho) &= \sum_x \ptr{\hik}{\G_{(x,y)}(\varrho)}  \nonumber \\
&=\sum_x \ptr{\mathbb{C}^{|\Lambda|}}
{\E^{(x)} \left( \ptr{\hik}{((\I^{\Eo^{\I}}_x\otimes id_\hik) \circ \mathcal{W}^\I)(\varrho)}\right) \left(\C(y) \otimes I_{\hv}\right)} \nonumber\\
&=\sum_x\ptr{\mathbb{C}^{|\Lambda|}}{
(\E^{(x)}\circ \I^C_x) (\varrho) \left(\C(y) \otimes I_{\hv}\right)}\nonumber\\
&=\sum_x(\mathcal{M}^{(x)}_y \circ \I^C_x)(\varrho)
\end{align}
for all $\varrho \in \sh$ and $y \in \Lambda$,
where we have defined instruments
$\mathcal{M}^{(x)}\in \ins(\Lambda,\hi_\I,\hv)$ as
\begin{equation*}
    \mathcal{M}^{(x)}_y(\xi) = \ptr{\mathbb{C}^{|\Lambda|}}{\E^{(x)} (\xi) \left(\C(y) \otimes I_{\hv}\right)}
\end{equation*}
for all $\xi \in \mathcal{L}(\hi_\I)$, $x \in \Omega$ and $y \in \Lambda$. Thus, the above Eq. (\ref{ref:rewriteInstJ}) shows that $\J \preceq \I^C$ for the complementary instrument $\I^C$ related to the minimal dilation $(\hi_\I, W^\I, \Eo^\I)$. As was mentioned before, now from Prop. \ref{prop:ins-dil-pp-equiv} it follows that in fact $\J \preceq \I^{C'}$ also for any other complementary instrument $\I^{C'}$ of $\I$. 
\end{proof}

In the next section we explore some of the consequences of Prop. \ref{prop:IncompAndPostproc}.

\subsection{Compatibility of a POVM and an instrument}
As was shown in Example \ref{ex:compl-ins-povm}, (one of) the explicit form(s) of the complementary instruments of a POVM $\A \in \obs(\Omega, \hi)$ is the Lüders instrument $\I^\A \in \ins(\Omega, \hi)$. Thus, from Proposition \ref{prop:IncompAndPostproc} we get a direct characterization for compatibility of a POVM and an instrument in terms of the instrument postprocessing relation including the Lüders instrument of the POVM:

\begin{corollary}\label{cor:ins-povm-comp}
Let $\A \in \obs(\Omega, \hi)$ be a POVM and $\J \in \ins(\Lambda,\hi, \hik)$ an instrument. Then $\A \comp \J$ if and only if $\J \preceq \I^\A$.
\end{corollary}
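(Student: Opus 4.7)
The plan is to recognize this corollary as a direct application of Proposition \ref{prop:IncompAndPostproc} to the special case where one of the two instruments is a POVM viewed as an instrument with one-dimensional output, together with the explicit complementary instrument computed in Example \ref{ex:compl-ins-povm}.

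First I would recast the POVM $\A$ as the instrument $\mathcal{A} \in \ins(\Omega,\hi,\complex)$ defined by $\mathcal{A}_x(\varrho) = \tr{\A(x)\varrho}$, and verify that $\A \comp \J$ in the sense of Definition \ref{def:povm-ins} coincides with $\mathcal{A} \comp \J$ in the sense of Definition \ref{def:instruments}. This is essentially bookkeeping: any joint instrument $\G \in \ins(\Omega\times\Lambda, \hi, \complex \otimes \hik)$ for $\mathcal{A}$ and $\J$ can be identified, via the natural isomorphism $\complex \otimes \hik \cong \hik$, with a joint instrument in $\ins(\Omega\times\Lambda, \hi, \hik)$ of Definition \ref{def:povm-ins}, and vice versa.

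Next, recall from Example \ref{ex:compl-ins-povm} that the trivial dilation $(\hi_A, W, \Eo) = (\hi, \varphi \mapsto \varphi \otimes 1, \A)$ of $\mathcal{A}$ yields the complementary instrument $\mathcal{A}^C_x(\varrho) = \sqrt{\A(x)}\varrho\sqrt{\A(x)} = \I^\A_x(\varrho)$, so the Lüders instrument $\I^\A$ is one valid choice of a complementary instrument of $\mathcal{A}$. By Proposition \ref{prop:ins-dil-pp-equiv} any other complementary instrument would be postprocessing equivalent to $\I^\A$, so replacing the complementary instrument by $\I^\A$ in the criterion of Proposition \ref{prop:IncompAndPostproc} is legitimate.

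Applying Proposition \ref{prop:IncompAndPostproc} directly then gives $\mathcal{A} \comp \J \iff \J \preceq \mathcal{A}^C = \I^\A$, which, combined with the first paragraph's identification, yields $\A \comp \J \iff \J \preceq \I^\A$. I do not expect any real obstacle here beyond checking that the one-dimensional output identification is harmless, since all the heavy lifting has already been done by Proposition \ref{prop:IncompAndPostproc} and Example \ref{ex:compl-ins-povm}.
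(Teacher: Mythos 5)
Your proposal is correct and matches the paper's own argument: the paper likewise obtains this corollary by viewing $\A$ as an instrument with one-dimensional output, identifying the Lüders instrument $\I^\A$ as a complementary instrument via Example \ref{ex:compl-ins-povm}, and invoking Proposition \ref{prop:IncompAndPostproc}. The only (harmless) redundancy is your appeal to Proposition \ref{prop:ins-dil-pp-equiv}, since Proposition \ref{prop:IncompAndPostproc} already holds for the complementary instrument of any dilation.
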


 We can see how Corollary \ref{cor:ins-povm-comp} can be seen as a generalization of the know result from Example \ref{ex:luders}: Namely, as we recall from Example \ref{ex:luders}, for an instrument $\I \in \ins(\Omega, \hi, \hik)$ and an observable $\A \in \obs(\Omega, \hi)$ we have that $\A^\I = \A$ if and only if $\I_x = \E^{(x)} \circ \I^\A_x$ for all $x \in \Omega$ for some channels $\{\E^{(x)}\}_{x \in \Omega} \subset \ch(\hi, \hik)$, and in particular then $\I \preceq \I^\A$. As was pointed out in Remark \ref{remark:A-comp}, if $\A^\I = \A$, then $\A$ and $\I$ are compatible. Thus, in particular the previous postprocessing relation involving only the conditional channels can be seen as a sufficient condition for compatibility between an instrument and a POVM while the most general postprocessing relation from Corrolary \ref{cor:ins-povm-comp} represents a necessary and sufficient condition for their compatibility. Alternatively, the above Corollary can be also interpreted as a characterization of those Lüders instruments, which can be postprocessed to a given instrument $\J$: these are exactly those Lüders instruments whose induced POVM is compatible with $\J$.

Naturally, since channels are one-outcome instruments, we also get a characterization for compatibility of a POVM $\A \in \obs(\Omega, \hi)$ and a channel $\Phi \in \ch(\hi,\hik)$: \emph{ $\A \comp \Phi$ if and only if $\Phi \preceq \I^\A$.} On the other hand, using just the postprocessing of channels, the authors in \cite{HeinosaariMiyadera13} showed that $\A \comp \Phi$ if and only if $\Phi \preceq \Psi^\A$, where $\Psi^\A \in \ch(\hi, \hv)$ is a particular (class of) channel(s) called the \emph{least disturbing channel of $\A$}. We will explore this connection a bit further.

The least disturbing channel can be expressed by using the  Naimark dilations of a POVM.

\begin{definition}
For an observable $\A \in \obs(\Omega, \hi)$ we say that a triple $(\hv, \Po, J)$, consisting of a Hilbert space $\hv$,  a sharp POVM $\Po \in \obs(\Omega, \hv)$ and an isometry $J: \hi \to \hv$, is a \emph{Naimark dilation for $\A$} if $\A(x) = J^* \Po(x) J$ for all $x \in \Omega$. The dilation is called \emph{minimal} if the closure of the span of the vectors $\{\Po(x)J\phi \, | \, x \in \Omega, \ \phi \in \hi\}$ is $\hv$.
\end{definition}

Now the least disturbing channel $\Psi \in \ch(\hi, \hv)$ of a POVM $\A \in \obs(\Omega, \hi)$ related to some Naimark dilation $(\hv, \Po, J)$ for $\A$ is defined as
\begin{align*}
\Psi^\A(\varrho) = \sum_{x \in \Omega} \Po(x)J \varrho J^* \Po(x)
\end{align*}
The authors in \cite{HeinosaariMiyadera13} showed that all least disturbing channels of a given POVM related to some Naimark dilation of that POVM are postprocessing equivalent so that the specific dilation is not relevant if one is considering postprocessing relations between channels. 

Thus, we have that $\Phi \comp \A$ if and only if $\Phi \preceq \Psi^\A$ if and only if $\Phi \preceq \I^\A$.  One of the obvious differences in the two previous postprocessing relations is that generally speaking the output space $\hv$ of the least disturbing channel is often different (and larger) than $\hi$ (which is the output space of $\I^\A$). In particular, the equivalence class of the least disturbing channels for any Naimark dilation for $\A$ has a natural minimal representative $\tilde{\Psi}^{\A}$ related to the minimal Naimark dilation for $\A$. One important feature of $\tilde{\Psi}^{\A}$ is that from all least disturbing channels of $\A$, $\tilde{\Psi}^{\A}$ has the minimal output dimension which equals $ \sum_{x \in \Omega} \mathrm{rank}(\A(x))$. It is easy to see that then one can choose $\hv= \hi$ if and only if $\A$ is sharp in which case one has $\tilde{\Psi}^{\A} = \Phi^{\I^\A}$. Thus, for a sharp POVM $\A \in \obs(\Omega, \hi)$ one has that a channel $\Phi \in \ch(\hi, \hik)$ is compatible with $\A$ if and only if $\Phi \preceq  \Phi^{\I^\A}$. Furthermore, in this case one can actually show (see \cite{Pellonpaa13}) that the postprocessing channel $\Phi' \in \ch(\hi,\hik)$ which satisfies $\Phi = \Phi' \circ \Phi^{\I^\A}$ is in fact unique and actually $\Phi' = \Phi$. As we saw in Section \ref{sec:4} this simply means that \emph{$\Phi$ is compatible with the sharp POVM $\A$ if and only if the Lüders instrument $\I^\A$ of $\A$ does not disturb $\Phi$.} 

As a last note we can show that actually $\I^\A$ and $\Psi^\A$ are postprocessing equivalent. Namely, if we define an instrument $\J \in \ins(\Omega, \hi, \hv)$ as $\J_x(\varrho) = \Po(x) J \varrho J^* \Po(x)$ for all $x \in \Omega$ and $\varrho \in \sh$, firstly see that $\J$ is clearly postprocessing equivalent with $\Psi^\A$ (as $\Psi^\A$ is just a classical postprocessing of $\J$ and since $\I^{\Po}_x \circ \Psi^\A = \J_x$ for all $x \in \Omega$), and secondly, since $\A^\J = \A$ and since both $\J$ and $\I^\A$ are indecomposable, by the results of \cite[Prop. 9]{LeppajarviSedlak21} we must have that also $\J$ and $\I^\A$ are postprocessing equivalent. Thus, this shows that our result Corollary \ref{cor:ins-povm-comp} in the case of channels and the result of \cite{HeinosaariMiyadera13} for compatibility of channels and POVMs are not only logically equivalent but also the details of their connection is clear. This also means that we can then see Corollary \ref{cor:ins-povm-comp} as a direct generalization of this result of \cite{HeinosaariMiyadera13} to instruments.

As the final immediate consequence of Prop. \ref{prop:IncompAndPostproc} we can look at the case where in Cor. \ref{cor:ins-povm-comp} also $\J$ is a POVM. Thus, we get the following necessary and sufficient condition for compatibility of two POVMs. 
\begin{corollary}\label{cor:povm-povm-comp}
Let $\A \in \obs(\Omega, \hi)$. Then a POVM $\B \in \obs(\Lambda, \hi)$ is compatible with $\A$ if and only if there exists some POVMs $\{\mathsf{R}^{(x)} \}_{x \in \Omega} \subset \obs(\Lambda,\hi)$ such that 
\begin{equation}
    \B(y) = \sum_{x \in \Omega} \sqrt{\A(x)}\mathsf{R}^{(x)}(y)\sqrt{\A(x)}
\end{equation}
for all $y \in \Lambda.$
\end{corollary}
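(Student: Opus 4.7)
The plan is to obtain Corollary \ref{cor:povm-povm-comp} as an immediate specialization of Corollary \ref{cor:ins-povm-comp} (which in turn came from Proposition \ref{prop:IncompAndPostproc}), simply unpacking the definitions of postprocessing and of the Lüders instrument when every device involved either has one-dimensional output or one-element outcome set. No new conceptual ingredient should be needed; the work is purely unravelling notation.

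First I would identify the POVM $\B \in \obs(\Lambda,\hi)$ with the instrument $\mathcal{B} \in \ins(\Lambda,\hi,\complex)$ defined by $\mathcal{B}_y(\varrho) = \tr{\B(y)\varrho}$, as was done in Example \ref{ex:compl-ins-povm}. Applying Corollary \ref{cor:ins-povm-comp} to $\A$ and $\J = \mathcal{B}$ then gives $\A \comp \B$ if and only if $\mathcal{B} \preceq \I^\A$. Since $\I^\A \in \ins(\Omega,\hi)$ has output Hilbert space $\hi$ and outcome set $\Omega$, Definition of postprocessing tells us that this last relation means the existence of instruments $\R^{(x)} \in \ins(\Lambda,\hi,\complex)$ with $\mathcal{B}_y = \sum_{x \in \Omega} \R^{(x)}_y \circ \I^\A_x$ for every $y \in \Lambda$. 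But $\ins(\Lambda,\hi,\complex)$ is in one-to-one correspondence with $\obs(\Lambda,\hi)$: each $\R^{(x)}$ is of the form $\R^{(x)}_y(\sigma) = \tr{\mathsf{R}^{(x)}(y)\sigma}$ for a unique POVM $\mathsf{R}^{(x)} \in \obs(\Lambda,\hi)$.

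Next I would evaluate the postprocessing equation on an arbitrary state $\varrho \in \sh$, substituting the Lüders form $\I^\A_x(\varrho) = \sqrt{\A(x)}\varrho\sqrt{\A(x)}$ from Example \ref{ex:luders} and using cyclicity of the trace, to get
\begin{align*}
    \tr{\B(y)\varrho} = \sum_{x \in \Omega} \tr{\mathsf{R}^{(x)}(y)\sqrt{\A(x)}\varrho\sqrt{\A(x)}} = \tr{\Big(\sum_{x \in \Omega}\sqrt{\A(x)}\mathsf{R}^{(x)}(y)\sqrt{\A(x)}\Big)\varrho}
\end{align*}
for all $\varrho$ and all $y$. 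Since this holds for every state, the bracketed operator equals $\B(y)$, which is exactly the claimed identity.

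For the converse direction, given a family $\{\mathsf{R}^{(x)}\}_{x \in \Omega} \subset \obs(\Lambda,\hi)$ satisfying the stated formula, I would simply run the above calculation backwards: define $\R^{(x)}_y(\sigma) = \tr{\mathsf{R}^{(x)}(y)\sigma}$, observe that each $\R^{(x)}$ is a valid instrument since $\sum_y \mathsf{R}^{(x)}(y) = I$, and verify $\mathcal{B}_y = \sum_x \R^{(x)}_y \circ \I^\A_x$, hence $\mathcal{B} \preceq \I^\A$, hence $\A \comp \B$ by Corollary \ref{cor:ins-povm-comp}. There is no real obstacle here; the only thing to double-check is that the normalization conditions match, i.e.\ that $\sum_y \B(y) = I$ is automatic from $\sum_y \mathsf{R}^{(x)}(y) = I$ and $\sum_x \A(x) = I$, which is a direct computation.
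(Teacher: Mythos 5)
Your proposal is correct and follows exactly the route the paper intends: the paper presents this corollary as an immediate specialization of Corollary \ref{cor:ins-povm-comp} to the case where $\J$ is a POVM, and your unpacking of the postprocessing relation (identifying the postprocessing instruments with one-dimensional output as POVMs $\mathsf{R}^{(x)}$, substituting the Lüders form, and using cyclicity of the trace) is precisely the computation left implicit there. The normalization check at the end is also right: $\sum_y \B(y) = \sum_x \sqrt{\A(x)}\, I\, \sqrt{\A(x)} = I$.
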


\section{Simple classes of compatible instruments} \label{sec:6}

\subsection{Measure-and-prepare instruments and channels}\label{sec:m-a-p}
An instrument $\I \in \ins(\Omega, \hi, \hik)$ is said to be \emph{measure-and-prepare} if there exists a POVM $\A \in \obs(\Omega, \hi)$ and a family of states $\{\xi_x\}_{x \in \Omega} \subset \shik$ such that $\I_x(\varrho) = \tr{\A(x) \varrho} \xi_x$ for all $x \in \Omega$. Clearly then $\A^\I = \A$. The interpretation of measure-and-prepare instruments is evident: they measure the input state and output a state based on the measurement outcome. Similarly a channel is said to be measure-and-prepare if it is the induced channel of a measure-and-prepare instrument. It can be shown that measure-and-prepare channels are exactly those that are entanglement breaking \cite{HorodeckiShorRuskai03}. Special cases of measure-and-prepare instruments/channels include the trivial trash-and-prepare instruments/channels which measure just some trivial POVM whose effects are all proportional to the identity operator.

It is known (see e.g. \cite{HeinosaariMiyadera17,GirardPlavalaSikora21}) and it is straightforward to verify that if $\A \in \obs(\Omega, \hi)$ and $\B\in \obs(\Lambda, \hi)$ are two POVMs then, $\A \comp \B$ if and only if $\Phi_\A \comp \Phi_\B$ for some measure-and-prepare channels which measure $\A$ and $\B$, respectively, and which both prepare some sets of distinguishable states in some suitable output spaces where enough distinguishable states are available (which might be different from $\hi$). Thus, the compatibility between POVMs can always be rephrased as compatibility between some measure-and-prepare channels. Similar result is known also in the case of compatibility between a channel $\Phi \in \ch(\hi,\hik)$ and a POVM $\A \in \obs(\Omega, \hi)$ \cite{HeinosaariMiyadera17,GirardPlavalaSikora21}: \emph{If  $\Phi \comp \A$, then $\Phi \comp \Phi_\A$ for any measure-and-prepare channel $\Phi_\A$ which measures $\A$. Furthermore, if the states that $\Phi_\A$ prepares are distinguishable, then $\Phi \comp \Phi_\A$ implies that $\Phi \comp \A$.}

For instruments we see that the distinguishability of the prepared states is not needed. Indeed, for instruments we do not need to extract the measurement outcome by distinguishing the prepared states but it is automatically given to us as part of the description of the instrument. Thus, the compatibility of an instrument $\I$ with a measure-and-prepare instrument $\J$ reduces completely to compatibility between $\I$ and the POVM $\A^\J$. 

\begin{proposition}\label{prop:m-a-p-comp}
Let $\I \in \ins(\Lambda, \hi, \hik)$ be an instrument and $\B \in \obs(\Lambda, \hi)$ a POVM. Then $\I \comp \B$ if and only if $\I \comp \J^\B$ for any measure-and-prepare instrument $\J^\B$ that measures $\B$.
\end{proposition}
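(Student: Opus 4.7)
The statement has two directions, and one is already essentially in place from earlier results. The plan is as follows.

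For the easy direction, suppose $\I \comp \J^\B$. Since $\J^\B$ is a measure-and-prepare instrument with $\A^{\J^\B} = \B$, I would invoke Prop.~\ref{prop:ins-comp-sum} (or equivalently Cor.~\ref{cor:ins-comp-sum}), which tells us that compatibility with an instrument automatically implies compatibility with its induced POVM. This immediately yields $\I \comp \B$, and this works for \emph{any} measure-and-prepare instrument built on $\B$, regardless of the chosen states $\{\xi_y\}$.

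For the nontrivial direction, assume $\I \comp \B$. Write $\J^\B_y(\varrho) = \tr{\B(y)\varrho}\,\xi_y$ for some family $\{\xi_y\}_{y\in\Lambda}\subset \shv$. By Def.~\ref{def:povm-ins}, compatibility of $\B$ and $\I$ gives a joint instrument $\G \in \ins(\Lambda \times \Omega, \hi, \hik)$ satisfying $\sum_{y\in\Lambda}\G_{(y,x)} = \I_x$ for all $x$ and $\sum_{x\in\Omega}\A^\G(y,x) = \B(y)$ for all $y$, where $\Omega$ is the outcome set of $\I$. The idea is to lift $\G$ to produce the preparation on the $\hv$ side while keeping $\G$ active on the $\hik$ side. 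Concretely, I would set
\begin{equation*}
\tilde{\G}_{(y,x)}(\varrho) \;=\; \xi_y \otimes \G_{(y,x)}(\varrho)
\end{equation*}
for all $x\in\Omega$, $y\in\Lambda$ and $\varrho\in\sh$.

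Then I would verify three things. First, each $\tilde{\G}_{(y,x)}$ is completely positive (as a tensor product of CP maps) and $\sum_{y,x}\tr{\tilde{\G}_{(y,x)}(\varrho)} = \sum_{y,x}\tr{\G_{(y,x)}(\varrho)} = \tr{\varrho}$, so $\tilde{\G}$ is a valid instrument in $\ins(\Lambda\times\Omega,\hi,\hv\otimes\hik)$. Second, the $\hv$-marginal gives
\begin{equation*}
\sum_{x\in\Omega}\ptr{\hik}{\tilde{\G}_{(y,x)}(\varrho)} \;=\; \xi_y \sum_{x\in\Omega}\tr{\G_{(y,x)}(\varrho)} \;=\; \tr{\B(y)\varrho}\,\xi_y \;=\; \J^\B_y(\varrho),
\end{equation*}
using the POVM marginal condition. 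Third, the $\hik$-marginal gives
\begin{equation*}
\sum_{y\in\Lambda}\ptr{\hv}{\tilde{\G}_{(y,x)}(\varrho)} \;=\; \sum_{y\in\Lambda}\tr{\xi_y}\,\G_{(y,x)}(\varrho) \;=\; \I_x(\varrho),
\end{equation*}
since each $\xi_y$ is a state. Hence $\tilde{\G}$ witnesses $\J^\B \comp \I$, for any choice of states $\{\xi_y\}$ realizing the measure-and-prepare form. There is no real obstacle here; the key observation is simply that the post-measurement states of $\J^\B$ depend only on the classical outcome, so they can be tensored onto $\G$ without affecting either marginal. The equivalence then drops out.
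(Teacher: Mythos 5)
Your proof is correct and follows essentially the same route as the paper: the easy direction via Prop.~\ref{prop:ins-comp-sum}, and for the converse the paper likewise takes the joint instrument $\R$ witnessing $\I \comp \B$ and defines the joint instrument for $\I$ and $\J^\B$ as $\G_{(x,y)} = \R_{(x,y)} \otimes \xi_y$, which is exactly your $\tilde{\G}$ up to tensor ordering. Your explicit verification of the marginals and of the instrument property is a fine (slightly more detailed) rendering of what the paper leaves as ``it follows from the properties of $\R$.''
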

\begin{proof}
As we saw previously in Prop. \ref{prop:ins-comp-sum}, if $\I$ and $\J$ are compatible, then $\I$ is compatible with  $\A^\J$ even if $\J$ is not a measure-and-prepare instrument. To see that also the converse holds when $\J$ is a measure-and-prepare instrument, let $\I \in \ins(\Omega,\hi, \hik)$ and let $\J \in \ins(\Lambda,\hi, \hv)$ be of the form $\J_y(\varrho) = \tr{\B(y) \varrho} \xi_y$ for all $y \in \Lambda$ and $\varrho \in \sh$ for some family of states $\{\xi_y\}_{y \in \Lambda} \subset \shv$. If now $\I \comp \B$, so that there exists an instrument $\R \in \ins( \Omega \times \Lambda, \hi, \hik)$ such that $\sum_{y \in \Lambda} \R_{(x,y)} = \I_x$ for all $x \in \Omega$ and $\A^\R$ is a joint POVM for $\A^\I$ and $\A^\J=\B$, we can define an instrument $\G \in \ins(\Omega \times \Lambda, \hi, \hik \otimes \hv)$ by setting $\G_{(x,y)} = \R_{(x,y)} \otimes \xi_y$ for all $x \in \Omega$ and $y \in \Lambda$. It follows from the properties of $\R$ that $\G$ is a joint instrument of $\I$ and $\J$.
\end{proof}

By using Cor. \ref{cor:ins-povm-comp} and Prop. \ref{prop:m-a-p-comp} we get the following corollary:

\begin{corollary}
Let $\I \in \ins(\Lambda, \hi, \hik)$ be an instrument, $\B \in \obs(\Lambda, \hi)$ a POVM and $\J^\B \in \ins(\Lambda, \hi, \hv)$ some measure-and-prepare instrument which measures $\B$. Then $\I \comp \J^\B$ if and only if $\I \preceq \I^\B$.
\end{corollary}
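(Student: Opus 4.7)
The plan is essentially a one-line chaining of the two previously established results, so the proof reduces to pointing out the correct instantiations and invoking them in sequence. First I would rewrite the desired biconditional as two successive equivalences: $\I \comp \J^\B \Leftrightarrow \I \comp \B$, and $\I \comp \B \Leftrightarrow \I \preceq \I^\B$.

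The first equivalence is exactly the content of Proposition \ref{prop:m-a-p-comp}, applied with the roles of $\I$ and the measure-and-prepare instrument $\J^\B$ as given; no recomputation is needed, since the proposition was stated for arbitrary $\I$ and for any measure-and-prepare $\J$ whose induced POVM plays the role of $\B$.

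The second equivalence is obtained by invoking Corollary \ref{cor:ins-povm-comp}, which characterises compatibility of a POVM $\A$ and an instrument $\J$ by the postprocessing relation $\J \preceq \I^\A$; one simply specialises to $\A = \B$ and takes the instrument there to be our $\I$, using the symmetry of $\comp$ to pass between $\I \comp \B$ and $\B \comp \I$.

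There is no real obstacle here: both cited statements are necessary and sufficient, so the combined biconditional follows by transitivity of the logical equivalences, and no properties of $\J^\B$ beyond the fact that it is a measure-and-prepare instrument with induced POVM $\B$ are used.
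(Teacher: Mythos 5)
Your proposal is correct and matches the paper's own derivation exactly: the paper obtains this corollary precisely by chaining Proposition \ref{prop:m-a-p-comp} (reducing $\I \comp \J^\B$ to $\I \comp \B$) with Corollary \ref{cor:ins-povm-comp} applied to $\A = \B$ (giving $\I \comp \B \Leftrightarrow \I \preceq \I^\B$ via the symmetry of $\comp$). Nothing further is needed.
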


Let us now consider the case of two measure-and-prepare instruments $\I \in \ins(\Omega, \hi, \hik)$ and $\J \in \ins(\Lambda, \hi, \hv)$. This means that there exists POVMs $\A, \B \in \obs(\Omega, \hi)$ and states $\{\sigma_x\}_{x \in \Omega} \subset \shik$, $\{\xi_y\}_{y \in \Lambda} \subset \shv$ such that $\I_x(\varrho) = \tr{\A(x) \varrho} \sigma_x$ and $\J_y(\varrho) = \tr{\B(y)\varrho} \xi_y$ for all $x \in \Omega$, $y \in \Lambda$ and $\varrho \in \sh$. As mentioned previously, then $\A^\I = \A$ and $\A^\J = \B$, and both $\Phi^\I$ and $\Phi^\J$ are measure-and-prepare channels.

The following two results are previously known for measure-and-prepare channels (see e.g. \cite{GirardPlavalaSikora21}):
\begin{itemize}
\item If $\A^\I \comp \A^\J$, then $\Phi^\I \comp \Phi^\J$.
\item If $\Phi^\I \comp \Phi^\J$ and the both sets of states $\{\sigma_x\}_{x \in \Omega}$ and $\{\xi_y\}_{y \in \Lambda}$ are distinguishable (independently of each other), then $\A^\I \comp \A^\J$.
\end{itemize}

The first result can be easily seen as follows: if $\A^\I \comp \A^\J$, then there exists a joint POVM $\Go \in \obs(\Omega \times \Lambda, \hi)$ for $\A^\I$ and $\A^\J$ and we can define a joint channel $\Gamma \in \ch(\hi, \hik \otimes \hv)$ for $\Phi^\I$ and $\Phi^\J$ by setting $\Gamma(\varrho) = \sum_{x \in \Omega} \sum_{y \in \Lambda} \tr{\Go(x,y) \varrho} \sigma_x \otimes \xi_y$ for all $\varrho \in \sh$. Clearly then $\ptr{\hik}{\Gamma(\varrho)} = \Phi^\J(\varrho)$ and $\ptr{\hv}{\Gamma(\varrho)} = \Phi^\I(\varrho)$ for all $\varrho \in \sh$. From this we see that if we define $\G \in \ins(\Omega \times \Lambda, \hi, \hik \otimes \hv)$ by setting $\G_{(x,y)}(\varrho) = \tr{\Go(x,y) \varrho} \sigma_x \otimes \xi_y$ for all $\varrho \in \sh$, $x \in \Omega$ and $y \in \Lambda$, then $\I \comp \J$ with $\G$ being their joint instrument. Hence, together with Prop. \ref{prop:ins-comp-sum} we conclude the following result:
\begin{proposition}\label{prop:m-a-p-comp1}
Two measure-and-prepare instruments $\I$ and $\J$ are compatible if and only if their induced POVMs $\A^\I$ and $\A^\J$ are compatible.
\end{proposition}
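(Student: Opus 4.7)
The plan is to handle the two implications separately; both directions are actually made visible by material already in the excerpt.

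For the forward direction, I would simply invoke Corollary \ref{cor:ins-comp-sum} (equivalently Prop. \ref{prop:ins-comp-sum}): compatibility of any two instruments implies compatibility of their induced POVMs, so $\I \comp \J$ immediately yields $\A^\I \comp \A^\J$. No special use of the measure-and-prepare structure is needed here.

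For the converse, the construction is already sketched in the paragraph preceding the proposition statement, and I would promote it to a proof. Assume $\A^\I \comp \A^\J$ and pick a joint POVM $\Go \in \obs(\Omega \times \Lambda, \hi)$ with marginals $\A^\I = \A$ and $\A^\J = \B$. Define
\begin{equation*}
\G_{(x,y)}(\varrho) \,=\, \tr{\Go(x,y)\varrho}\,\sigma_x \otimes \xi_y
\end{equation*}
for all $x \in \Omega$, $y \in \Lambda$ and $\varrho \in \lh$. Each $\G_{(x,y)}$ is completely positive as the composition of the effect $\Go(x,y)$ with a state preparation, and
\begin{equation*}
\sum_{x,y} \tr{\G_{(x,y)}(\varrho)} \,=\, \sum_{x,y} \tr{\Go(x,y)\varrho} \,=\, \tr{\varrho},
\end{equation*}
so $\G \in \ins(\Omega \times \Lambda, \hi, \hik \otimes \hv)$ is a valid instrument. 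Then I would verify the two marginal conditions of Def. \ref{def:instruments} by direct computation: summing $y$ and partial-tracing $\hv$ gives $\sum_y \tr{\Go(x,y)\varrho}\sigma_x = \tr{\A(x)\varrho}\sigma_x = \I_x(\varrho)$, and the symmetric calculation yields $\J_y(\varrho)$ on the other side. Hence $\G$ is a joint instrument for $\I$ and $\J$, proving $\I \comp \J$.

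There is essentially no obstacle: the proof is a bookkeeping exercise enabled by the fact that measure-and-prepare instruments factorize the classical and quantum parts of the output, so the entire joint structure can be loaded onto the classical side (the joint POVM $\Go$), while the quantum side is recovered as the product $\sigma_x \otimes \xi_y$. The only point worth flagging is that this decoupling is special to measure-and-prepare instruments; for general instruments the converse fails, as already illustrated by the example of two identity channels in Sec. \ref{sec:3}.
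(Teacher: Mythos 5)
Your proposal is correct and follows essentially the same route as the paper: the forward direction is delegated to Proposition \ref{prop:ins-comp-sum}, and the converse uses exactly the joint instrument $\G_{(x,y)}(\varrho) = \tr{\Go(x,y)\varrho}\,\sigma_x \otimes \xi_y$ built from a joint POVM $\Go$ of $\A^\I$ and $\A^\J$, which is precisely the construction the paper gives in the paragraph preceding the proposition.
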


The previously stated observation shows that for measure-and-prepare instruments the compatibility of the instruments is purely dictated by the compatibility of the measured POVMs. The next example shows that the same does not hold for measure-and-prepare channels, i.e., there exists an incompatible pair of measure-and-prepare instruments whose induced channels are compatible.

\begin{example}
We recall another useful representation of quantum channels and operations \emph{the Choi-Jamiołkowski isomorphism} \cite{Choi75}. Quantum operation $\N: \lh \to \lk$ is represented by a positive linear operator $J(\N)$ on $\mathcal{L}(\hi \otimes \hik)$ defined as
\begin{align*}
J(\N) := \sum_{i,j=1}^{\dim(\hi)} \kb{\varphi_i}{\varphi_j} \otimes \N(\kb{\varphi_i}{\varphi_j})
\end{align*}
for some orthonormal basis $\{\varphi_i\}_{i=1}^{\dim(\hi)}$ of $\hi$. It can be shown that $\N$ is completely positive if and only if $J(\N) \geq O$. Furthermore, $\N$ is trace-preserving, i.e. $\N$ is a quantum channel, if and only if $\ptr{\hik}{J(\N)} = I_{\hi}$.

The Jordan product of two channels $\Phi \in \ch(\hi, \hik)$ and $\Psi \in \ch(\hi, \hv)$ is defined \cite{GirardPlavalaSikora21} as the linear map $\Phi \odot \Psi$ from $\hi$ to  $\hik \otimes \hv$ whose Choi representation is the operator
\begin{align*}
J(\Phi \odot \Psi) := \sum_{i,j,k,l=1}^{\dim(\hi)} \left( \kb{\varphi_i}{\varphi_j} \odot \kb{\varphi_k}{\varphi_l} \right) \otimes \Phi(\kb{\varphi_i}{\varphi_j}) \otimes \Psi(\kb{\varphi_k}{\varphi_l}),
\end{align*}
where $\{\varphi_i\}_{i=1}^{\dim(\hi)}$ is an orthonormal basis of $\hi$ and $A \odot B := \frac{1}{2}(AB+BA)$ is the standard Jordan product of operators on $\hi$. It can be checked that $\ptr{\hik}{(\Phi \odot \Psi)(\varrho)} = \Psi(\varrho)$ and $\ptr{\hv}{(\Phi \odot \Psi)(\varrho)} = \Phi(\varrho)$ for all $\varrho \in \sh$. Thus, $\Phi \odot \Psi$ is a joint channel for $\Phi$ and $\Psi$ if and only if it is completely positive, i.e., if $J(\Phi \odot \Psi)$ is positive semi-definite.

Let $\hi = \complex^2$ and define two dichotomic POVMs $\X, \Z \in \obs(\{+,-\}, \complex^2)$ as $\X(\pm) = \frac{1}{2}(I \pm \sigma_X)$ and $\Z(\pm) = \frac{1}{2}(I \pm \sigma_Z)$, where $\sigma_X$ and $\sigma_Z$ are the Pauli-$X$ and Pauli-$Z$ matrices. It is known that $\X$ and $\Z$ are incompatible (in fact they are the most incompatible pair of POVMs on $\complex^2$).  For each $a \in [0,1]$, let us consider the states $\varrho^{\pm X}_{a} = \frac{1}{2}(I \pm a \sigma_X)$ and $\varrho^{\pm Z}_a = \frac{1}{2}(I \pm a \sigma_Z)$ in $\mathcal{S}(\complex^2)$, and define two measure-and-prepare instruments $\I^a, \J^a \in \ins(\{+,-\}, \complex^2)$ by setting $\I^a_\pm(\varrho) = \tr{ \X(\pm) \varrho } \varrho^{\pm X}_a  $ and $\J^a_\pm(\varrho) = \tr{ \Z(\pm) \varrho } \varrho^{\pm Z}_a  $ for all $\varrho \in \mathcal{S}(\complex^2)$. It can be checked (via semidefinite programming) that $J(\Phi^{\I^a} \odot \Phi^{\J^a}) \geq 0$ if and only if $a \leq 1/\sqrt{2}$. Thus, for all $a \in [0,1/\sqrt{2}]$ we have that $\Phi^{\I^a} \comp \Phi^{\J^a}$ but since $\A^{\I^a} = \X$ and $\A^{\J^a} = \Z$ are incompatible, also $\I^a$ and $\J^a$ are incompatible. 
\end{example}

\subsection{Indecomposable instruments}
We will show that indecomposable instruments process input quantum states in such a way that no quantum information can flow to the output of the instruments that are compatible with them. More precisely, we will use Prop. \ref{prop:IncompAndPostproc} to show that instruments compatible with an indecomposable instrument must be a postprocessing of a measure-and-prepare instrument. First, let us start by finding a dilation for an indecomposable instrument that we can use:

\begin{lemma} \label{lemma:ind-dilation}
    For an indecomposable instrument $\I \in \ins(\Omega, \hi, \hik)$ there exists a dilation $(\hi_A, W, \Eo)$ such that the complementary instrument $\I^C \in \ins(\Omega, \hi, \hi_A)$ is a measure-and-prepare instrument, 
    \begin{equation}\label{eq:m-a-p-ind}
        \I^C_x(\varrho) = \tr{\A^\I(x) \varrho} \kb{\varphi_{x}}{\varphi_{x}} 
    \end{equation}
    for all $x \in \Omega$ and $\varrho \in \sh$ for some orthonormal basis  $\{\varphi_x\}_{x \in \Omega}$ of $\hi_A$.
\end{lemma}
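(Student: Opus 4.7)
The plan is to produce an explicit dilation of $\I$ tailored to exploit indecomposability, and then to compute its complementary instrument by direct calculation.

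The starting point is that indecomposability means every (nonzero) operation $\I_x$ has Kraus rank 1, i.e.\ $\I_x(\varrho) = K_x\varrho K_x^*$ for a single Kraus operator $K_x : \hi \to \hik$, and consequently $\A^\I(x) = K_x^* K_x$. I would then propose the natural ``classical-label'' dilation: take $\hi_A = \complex^{|\Omega|}$ with distinguished orthonormal basis $\{\varphi_x\}_{x\in\Omega}$, set $\Eo(x) := \kb{\varphi_x}{\varphi_x}$ (a sharp POVM), and define the linear map $W : \hi \to \hi_A \otimes \hik$ by
\begin{equation*}
W\psi \; := \; \sum_{x\in\Omega} \varphi_x \otimes K_x\psi, \qquad \psi \in \hi.
\end{equation*}
The first routine check is that $W$ is an isometry, which follows from $W^*W = \sum_{x}K_x^*K_x = \sum_x \A^\I(x) = I$.

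Next I would verify that $(\hi_A, W, \Eo)$ is indeed a dilation of $\I$ in the sense of Def.~\ref{def:MinDilInst}. Expanding
\begin{equation*}
W\varrho W^* = \sum_{x,y} \kb{\varphi_x}{\varphi_y} \otimes K_x\varrho K_y^*,
\end{equation*}
multiplying by $\Eo(x) \otimes I_\hik = \kb{\varphi_x}{\varphi_x} \otimes I_\hik$, and tracing out $\hi_A$ collapses the double sum to the single term $K_x\varrho K_x^* = \I_x(\varrho)$, as required. Because $\Eo(x)$ is a rank-one projector, $\sqrt{\Eo(x)} = \Eo(x)$, and the same expansion for the complementary instrument yields
\begin{equation*}
(\Eo(x)\otimes I_\hik)\,W\varrho W^*\,(\Eo(x)\otimes I_\hik) = \kb{\varphi_x}{\varphi_x} \otimes K_x\varrho K_x^*,
\end{equation*}
so that $\ptr{\hik}{\,\cdot\,} = \tr{K_x\varrho K_x^*}\kb{\varphi_x}{\varphi_x} = \tr{\A^\I(x)\varrho}\kb{\varphi_x}{\varphi_x}$. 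This is exactly Eq.~\eqref{eq:m-a-p-ind}.

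There is essentially no obstacle here: indecomposability directly supplies the single Kraus operators needed to build $W$, and the orthogonality of the labels $\{\varphi_x\}$ makes both the dilation identity and the measure-and-prepare form of $\I^C$ fall out of a one-line computation. The only point requiring a bit of care is the bookkeeping to confirm that $W$ really lands in $\hi_A \otimes \hik$ and is an isometry (rather than merely a contraction), but that is guaranteed by the fact that $\sum_x \A^\I(x) = I$ because $\I$ is an instrument (not just a collection of operations).
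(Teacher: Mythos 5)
Your proposal is correct and follows essentially the same route as the paper's proof: the same classical-label dilation with $\hi_A = \complex^{|\Omega|}$, $W\psi = \sum_x \varphi_x \otimes K_x\psi$ and $\Eo(x) = \kb{\varphi_x}{\varphi_x}$, followed by the same direct computation of $\I^C_x$. Nothing is missing.
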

\begin{proof}
    Let $\I \in \ins(\Omega, \hi, \hik)$ be indecomposable so that there exist Kraus operators $\{K_x\}_{x \in \Omega}$ from $\hi$ to $\hik$ such that $\I_x(\varrho) = K_x \varrho K^*_x$ for all $x \in \Omega$ and $\varrho \in \sh$. Then clearly $\A^\I(x) = K^*_x K_x$ for all $x \in \Omega$. Let us fix any orthonormal basis $\{\varphi_x\}_{x \in \Omega}$ of $\hi_A := \complex^{|\Omega|}$ and define $W: \hi \to \hi_A \otimes \hik$ by setting $W \psi = \sum_{x \in \Omega} \varphi_x \otimes K_x \psi$ for all $\psi \in \hi$. It is straightforward to verify that $W^*W=I_\hi$ so that $W$ is an isometry. Finally, we define a POVM $\Eo \in \obs(\Omega, \hi_A)$ by taking $\Eo(x) = \kb{\varphi_x}{\varphi_x}$ for all $x \in \Omega$. One sees that $  W \varrho \, W^*  = \sum_{x',x'' \in \Omega} \kb{\varphi_{x'}}{\varphi_{x''}} \otimes K_{x'} \varrho K_{x''}^*$ for all $\varrho \in \sh$ so that in particular $\ptr{\complex^{|\Omega|}}{W \varrho \, W^* (\Eo(x) \otimes I_\hik)} =  K_x \varrho K^*_x = \I_x(\varrho)$ for all $x \in \Omega$ and $\varrho \in \sh$. Thus, $(\hi_A, W ,\Eo)$ is a dilation of $\I$. A direct calculation shows that
\begin{align*}
   \I^C_x(\varrho) &= \ptr{\hik}{\left(\sqrt{\Eo(x)} \otimes I_\hik\right) W \varrho \, W^* \left(\sqrt{\Eo(x)} \otimes I_\hik\right)}   \\
   &= \ptr{\hik}{\kb{\varphi_{x}}{\varphi_{x}} \otimes K_{x} \varrho K_{x}^*}   \\
   &= \tr{\A^\I(x) \varrho} \kb{\varphi_{x}}{\varphi_{x}} 
\end{align*}
for all $x \in \Omega$ and $\varrho \in \sh$.
\end{proof}

By using the above dilation, from Prop. \ref{prop:IncompAndPostproc} we now see that instruments compatible with an indecomposable instrument must be a postprocessing of the measure-and-prepare instrument described above. We get the following Corollary:

\begin{corollary}\label{cor:ind-comp-m-a-p}
Let $\I \in \ins(\Omega, \hi, \hik)$ be an indecomposable instrument. Then an instrument $\J \in \ins(\Lambda, \hi, \hv)$ that is compatible with $\I$ must be of the form
\begin{equation}\label{eq:ind-m-a-p}
    \J_y(\varrho) = \sum_{x \in \Omega} \tr{ \nu_{xy} \A^{\I}(x)\varrho} \xi_{xy} 
\end{equation}
for some family of states $\{\xi_{xy}\}_{x \in \Omega, y \in \Lambda} \subset \shv$ and some stochastic postprocessing matrix $\nu:= (\nu_{xy})_{x \in \Omega, y \in \Lambda}$.
\end{corollary}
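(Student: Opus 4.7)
The plan is to simply combine Lemma \ref{lemma:ind-dilation} with Proposition \ref{prop:IncompAndPostproc} and then unpack the postprocessing relation. Since $\I$ is indecomposable, Lemma \ref{lemma:ind-dilation} gives a specific dilation for which the complementary instrument has the measure-and-prepare form $\I^C_x(\varrho) = \tr{\A^\I(x)\varrho}\kb{\varphi_x}{\varphi_x}$ with $\{\varphi_x\}_{x\in\Omega}$ an orthonormal basis of $\hi_A$. By Proposition \ref{prop:IncompAndPostproc}, $\I \comp \J$ is equivalent to $\J \preceq \I^C$, so there exist instruments $\{\R^{(x)}\}_{x\in\Omega} \subset \ins(\Lambda, \hi_A, \hv)$ such that $\J_y = \sum_{x\in\Omega} \R^{(x)}_y \circ \I^C_x$ for every $y \in \Lambda$.

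Substituting the explicit form of $\I^C_x$ and using linearity of the quantum operations $\R^{(x)}_y$ yields, for all $\varrho \in \sh$,
\begin{equation*}
\J_y(\varrho) = \sum_{x\in\Omega} \tr{\A^\I(x)\varrho}\, \R^{(x)}_y\!\left(\kb{\varphi_x}{\varphi_x}\right).
\end{equation*}
Each operator $\R^{(x)}_y(\kb{\varphi_x}{\varphi_x})$ is positive semi-definite with trace in $[0,1]$, so I would define $\nu_{xy} := \tr{\R^{(x)}_y(\kb{\varphi_x}{\varphi_x})}$ and set $\xi_{xy} := \R^{(x)}_y(\kb{\varphi_x}{\varphi_x})/\nu_{xy}$ whenever $\nu_{xy} > 0$, choosing $\xi_{xy}$ to be an arbitrary fixed state when $\nu_{xy} = 0$ (the term vanishes regardless). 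This immediately gives the claimed form \eqref{eq:ind-m-a-p}.

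It remains only to verify that $\nu = (\nu_{xy})$ is a stochastic matrix. Non-negativity is built in, and summing over $y$ we get $\sum_{y\in\Lambda}\nu_{xy} = \tr{\Phi^{\R^{(x)}}(\kb{\varphi_x}{\varphi_x})} = 1$ because $\Phi^{\R^{(x)}}$ is a channel and hence trace-preserving. There is no real obstacle here: the argument is essentially a direct substitution, with the only subtlety being the cosmetic handling of the $\nu_{xy}=0$ case when normalizing; everything else is bookkeeping once Lemma \ref{lemma:ind-dilation} and Proposition \ref{prop:IncompAndPostproc} are in hand.
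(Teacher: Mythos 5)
Your proof is correct and follows essentially the same route as the paper's: invoke Lemma \ref{lemma:ind-dilation} to get the measure-and-prepare complementary instrument, apply Proposition \ref{prop:IncompAndPostproc} to obtain $\J \preceq \I^C$, and then normalize the operators $\R^{(x)}_y(\kb{\varphi_x}{\varphi_x})$ to extract $\nu_{xy}$ and $\xi_{xy}$. Your explicit check that $\sum_y \nu_{xy}=1$ via trace preservation of $\Phi^{\R^{(x)}}$ is a small detail the paper leaves implicit, but otherwise the two arguments coincide.
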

\begin{proof}
    From Prop. \ref{prop:IncompAndPostproc} we see that an instrument $\J \in \ins(\Lambda, \hi, \hv)$ is compatible with an indecomposable instrument  $\I \in \ins(\Omega, \hi, \hik)$ if and only if $\J \preceq \I^C$, where $\I^C$ can be chosen to be the measure-and-prepare instrument given in Eq. \eqref{eq:m-a-p-ind} from Lemma \ref{lemma:ind-dilation}. The rest follows similarly to Example 4 in \cite{LeppajarviSedlak21}. In particular, there exists $\{\R^{(x)}\}_{x \in \Omega} \subset \ins(\Lambda,\hi_A,\hv)$ such that 
    \begin{align*}
        \J_y(\varrho) = \sum_{x \in \Omega} \R^{(x)}_y(\I^C_x(\varrho)) = \sum_{x \in \Omega}\tr{\A^\I(x) \varrho} \R^{(x)}_y(\kb{\varphi_{x}}{\varphi_{x}})
    \end{align*}
    for all $y \in \Lambda$ and $\varrho \in \sh$. If we denote $\nu_{xy} = \tr{\R^{(x)}_y(\kb{\varphi_{x}}{\varphi_{x}})}  \in [0,1]$ we see that $\nu = (\nu_{xy})_{x,y}$ forms a stochastic postprocessing matrix. Furthermore, we can set $\xi_{xy} := \R^{(x)}_y(\kb{\varphi_{x}}{\varphi_{x}})/\nu_{xy}$ if $\nu_{xy} \neq 0$ and $\xi_{xy} := \xi$ for some fixed $\xi \in \shv$ when $\nu_{xy} = 0$. The claim follows.
\end{proof}
An important point to note is that $\A^\J(y) = \sum_{x \in \Omega} \nu_{xy} \A^\I(x)$ for all $y \in \Lambda$ so that for instruments $\J$ compatible with $\I$ we must have that $\A^\J$ is actually a postprocessing of $\A^\I$.

As an easy corollary of Prop. \ref{prop:IncompAndPostproc} and  Cor. \ref{cor:ind-comp-m-a-p} we present the case when $\J$ in Cor. \ref{cor:ind-comp-m-a-p} is a POVM, i.e., an instrument with one-dimensional output space as in Example \ref{ex:compl-ins-povm}.
\begin{corollary}\label{cor:ind-comp-povm}
Let $\I \in \ins(\Omega, \hi, \hik)$ be an indecomposable instrument. Then a POVM $\B \in \obs(\Lambda, \hi)$ is compatible with $\I$ if and only if $\B \preceq \A^\I$.
\end{corollary}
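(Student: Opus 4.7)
The plan is to derive this corollary as a direct specialization of Proposition \ref{prop:IncompAndPostproc} and Corollary \ref{cor:ind-comp-m-a-p}, by treating the POVM $\B$ as an instrument with a one-dimensional output space (cf.\ Example \ref{ex:compl-ins-povm}) and exploiting the explicit measure-and-prepare form of the complementary instrument of $\I$ given by Lemma \ref{lemma:ind-dilation}.

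For the ``if'' direction, indecomposability of $\I$ is not even needed. Suppose $\B \preceq \A^\I$, so that $\B(y) = \sum_{x \in \Omega} \nu_{xy} \A^\I(x)$ for some stochastic matrix $\nu = (\nu_{xy})$. I would define $\G \in \ins(\Omega \times \Lambda, \hi, \hik)$ by $\G_{(x,y)} := \nu_{xy} \I_x$, and check directly that $\sum_{y} \G_{(x,y)} = \I_x$ and $\sum_{x} \A^\G(x,y) = \sum_{x} \nu_{xy} \A^\I(x) = \B(y)$, so $\G$ is a joint instrument in the sense of Definition \ref{def:povm-ins}, witnessing $\B \comp \I$.

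For the ``only if'' direction, view $\B$ as an instrument $\mathcal{B} \in \ins(\Lambda, \hi, \complex)$ via $\mathcal{B}_y(\varrho) = \tr{\B(y)\varrho}$, and apply Corollary \ref{cor:ind-comp-m-a-p} to the indecomposable instrument $\I$ and the compatible instrument $\mathcal{B}$ with target Hilbert space $\hv = \complex$. Since the state space $\mathcal{S}(\complex)$ contains only the scalar $1$, the output states $\xi_{xy}$ in Eq.\ \eqref{eq:ind-m-a-p} are trivial, and the expression reduces to
\begin{equation*}
    \mathcal{B}_y(\varrho) = \sum_{x \in \Omega} \nu_{xy} \tr{\A^\I(x) \varrho} = \tr{\Bigl(\sum_{x \in \Omega} \nu_{xy} \A^\I(x)\Bigr) \varrho}
\end{equation*}
for a stochastic matrix $\nu$. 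Reading off the induced POVM gives $\B(y) = \sum_{x} \nu_{xy} \A^\I(x)$, i.e.\ $\B \preceq \A^\I$.

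There is no real obstacle here; the corollary is essentially a bookkeeping consequence of the two preceding results, and the only thing to be careful about is the identification of $\B$ with an instrument having trivial output and the observation that postprocessing by instruments into $\complex$ collapses to classical postprocessing of POVMs. Alternatively, one could short-circuit the argument by combining Proposition \ref{prop:IncompAndPostproc} directly with Lemma \ref{lemma:ind-dilation}: for the chosen dilation, $\I^C$ is measure-and-prepare with induced POVM $\A^\I$ and orthogonal output states $\{\kb{\varphi_x}{\varphi_x}\}_{x \in \Omega}$, so any instrument postprocessing $\mathcal{B} \preceq \I^C$ into $\complex$ is a classical postprocessing acting on the perfectly distinguishable labels $x$, again yielding $\B \preceq \A^\I$.
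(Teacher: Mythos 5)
Your proposal is correct and follows essentially the same route as the paper: the ``if'' direction uses the identical explicit joint instrument $\G_{(x,y)} = \nu_{xy}\I_x$, and the ``only if'' direction is exactly the paper's appeal to Corollary \ref{cor:ind-comp-m-a-p} together with the observation that the induced POVM of any compatible instrument is a postprocessing of $\A^\I$. Your side remarks (that indecomposability is not needed for the ``if'' direction, and that the output states collapse when $\hv = \complex$) are accurate but do not change the argument.
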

\begin{proof}
    As was noted above after Cor. \ref{cor:ind-comp-m-a-p}, if $\I$ is indecomposable and $\I \comp \J$ for any instrument  $\J$, then $\A^\J \preceq \A^\I$. On the other hand, if $\B \preceq \A^\I$ via some stochastic postprocessing matrix $\nu = (\nu_{xy})_{x \in \Omega, y \in \Lambda}$, then we can define a joint instrument $\G \in \ins(\Omega \times \Lambda, \hi, \hik)$ for $\I$ and $\B$ as $\G_{(x,y)} = \nu_{xy} \I_x$ for all $x \in \Omega$ and $y \in \Lambda$.
\end{proof}

Thus, an indecomposable instrument can only be compatible with those POVMs that are below its induced POVM in the postprocessing order.

Lastly we want to see when two indecomposable instruments are compatible. In order to characterize that we need the following lemma:

\begin{lemma}\label{lemma:ind-map}
An indecomposable instrument is measure-and-prepare if and only if its induced POVM is rank-1, i.e., it consists of rank-1 effects. In this case the prepared states are also pure.
\end{lemma}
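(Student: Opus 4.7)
The plan is to use the characterization of indecomposable instruments as those whose operations have Kraus rank one, i.e., each $\I_x(\varrho) = K_x \varrho K_x^*$ for some operator $K_x : \hi \to \hik$, and recall that the induced POVM satisfies $\A^\I(x) = K_x^* K_x$. The proof splits into two directions, and the main technical ingredient in both directions is the singular value decomposition of $K_x$.

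For the forward direction, assume $\I$ is indecomposable and measure-and-prepare, so that each $\I_x$ has the form $\I_x(\varrho) = \tr{\A^\I(x)\varrho}\xi_x$. I would compute an explicit Kraus decomposition of this map by using spectral decompositions $\A^\I(x) = \sum_j \lambda_j^{(x)} \kb{\phi_j^{(x)}}{\phi_j^{(x)}}$ and $\xi_x = \sum_i p_i^{(x)} \kb{\psi_i^{(x)}}{\psi_i^{(x)}}$. A direct computation shows that the operators $\sqrt{p_i^{(x)}\lambda_j^{(x)}}\kb{\psi_i^{(x)}}{\phi_j^{(x)}}$ form a Kraus set for $\I_x$, and a standard argument (orthogonality of the rank-one terms $\kb{\psi_i^{(x)}}{\phi_j^{(x)}}$ under the Hilbert--Schmidt inner product) identifies the Kraus rank as the product $\mathrm{rank}(\A^\I(x))\cdot\mathrm{rank}(\xi_x)$. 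Hence indecomposability ($\mathrm{Kraus\,rank} = 1$) forces both factors to equal one, so $\A^\I(x)$ is rank one and $\xi_x$ is pure.

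For the reverse direction, assume $\I$ is indecomposable with $\A^\I$ rank one. Writing $\A^\I(x) = K_x^* K_x = \lambda_x \kb{\phi_x}{\phi_x}$ with $\phi_x\in\hi$ a unit vector and $\lambda_x = \no{K_x}^2$, I would invoke the SVD of $K_x$: since $K_x^* K_x$ has exactly one nonzero eigenvalue, $K_x = \sqrt{\lambda_x}\kb{\psi_x}{\phi_x}$ for some unit vector $\psi_x\in\hik$. Plugging into $\I_x(\varrho)=K_x\varrho K_x^*$ gives
\begin{equation*}
\I_x(\varrho) = \lambda_x \ip{\phi_x}{\varrho\phi_x}\kb{\psi_x}{\psi_x} = \tr{\A^\I(x)\varrho}\kb{\psi_x}{\psi_x},
\end{equation*}
which is the measure-and-prepare form with the pure output state $\xi_x = \kb{\psi_x}{\psi_x}$. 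Combining the two directions yields the equivalence and the stated purity of the prepared states.

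The reasoning is almost entirely bookkeeping with Kraus representations, so I do not anticipate any serious obstacles; the only subtle point is being careful in the forward direction that the explicit Kraus set really is minimal, which follows because rank-one operators of the form $\kb{\psi_i^{(x)}}{\phi_j^{(x)}}$ with $\{\psi_i^{(x)}\}$ and $\{\phi_j^{(x)}\}$ orthonormal are linearly independent and hence cannot be collapsed into fewer Kraus operators.
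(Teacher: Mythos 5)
Your proof is correct and both directions go through. It shares the paper's basic skeleton---write $\I_x(\varrho)=K_x\varrho K_x^*$ using indecomposability and exploit $\A^\I(x)=K_x^*K_x$---but executes each direction differently. For the implication from a rank-1 induced POVM to the measure-and-prepare form, the paper cites the general fact from Heinosaari--Wolf that any instrument with a rank-1 induced POVM is measure-and-prepare and then invokes indecomposability to get purity of the prepared states, whereas you give a short self-contained argument: $K_x^*K_x$ of rank one forces $K_x=\sqrt{\lambda_x}\,\kb{\psi_x}{\phi_x}$ by the singular value decomposition, from which the measure-and-prepare form with pure outputs is immediate. For the converse, the paper argues directly that the identity $K_x\varrho K_x^*=\tr{K_x^*K_x\varrho}\,\xi_x$ forces the range of $K_x$ to be one-dimensional, so $K_x=\kb{\psi_x}{\eta_x}$ and $\A^\I(x)=\no{\psi_x}^2\kb{\eta_x}{\eta_x}$ is rank one; you instead compute the Kraus rank of the map $\varrho\mapsto\tr{\A^\I(x)\varrho}\,\xi_x$ as $\mathrm{rank}(\A^\I(x))\cdot\mathrm{rank}(\xi_x)$ (equivalently, the rank of its Choi operator $\A^\I(x)^T\otimes\xi_x$) and conclude that both factors must equal one. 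Your route is slightly longer but buys two things: it avoids the external citation for the rank-1-implies-measure-and-prepare step, and the forward direction delivers purity of $\xi_x$ directly rather than only through the reverse direction. The one shared loose end, inherited from the paper's conventions, is the degenerate case of outcomes with $\I_x=0$, for which neither argument (nor the statement of rank-1-ness) strictly applies; the paper's definition of indecomposability quantifies only over nonzero operations, so this is cosmetic.
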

\begin{proof}
Let $\I \in \ins(\Omega, \hi, \hik)$ be an indecomposable instrument so that it can be written as $\I_x(\varrho) = K_x \varrho K^*_x$ for some Kraus operators $K_x: \hi \to \hik$ for all $x \in \Omega$. Now clearly $\A^\I(x) = K_x^* K_x  $ for all $x \in \Omega$. Let first $\A^\I$ be rank-1. In general it is known (see \cite{HeinosaariWolf10}) that instruments with a rank-1 induced POVMs are measure-and-prepare instruments and since $\I$ is indecomposable, the prepared states must be pure states.

On the other hand, let now $\I$ be measure-and-prepare. Thus, we must have that there exists a family of states $\{\xi_x\}_{x \in \Omega} \subset \shik$ such that $\I_x(\varrho) = K_x \varrho K^*_x = \tr{K^*_x K_x \varrho} \xi_x$ for all $x \in \Omega$ and $\varrho \in \sh$. This implies that the range of $K_x$ is $ \complex \psi_x$, that is, $K_x = \ket{\psi_x}\bra{\eta_x}$ for some vector $\eta_x \in \hi$ for all $x \in \Omega$. Now $\A^\I (x) = K^*_x K_x =||\psi_x ||^2 \ket{\eta_x}\bra{\eta_x}$ for all $x \in \Omega$ so that $\A^\I$ is rank-1.
\end{proof}

By using the above lemma, together with Cor. \ref{cor:ind-comp-m-a-p} we can show that two indecomposable instruments are compatible only when the induced POVMs are essentially the same rank-1 POVM in which case the instruments must actually be measure-and-prepare instruments.

\begin{proposition}
Two indecomposable instruments are compatible if and only if their induced POVMs are postprocessing equivalent rank-1 POVMs. Furthermore, in this case both of the instruments are measure-and-prepare instruments which prepare pure states.
\end{proposition}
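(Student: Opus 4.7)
The plan is to combine Cor.~\ref{cor:ind-comp-m-a-p} with a rank-counting argument on Choi operators for the forward direction, and to invoke Prop.~\ref{prop:m-a-p-comp1} for the converse.

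For the forward direction I would assume $\I \comp \J$ with both instruments indecomposable. Applying Cor.~\ref{cor:ind-comp-m-a-p} with the indecomposable $\I$ in the first slot yields the representation
\begin{equation*}
\J_y(\varrho) = \sum_{x \in \Omega} \tr{\nu_{xy} \A^\I(x) \varrho} \xi_{xy}
\end{equation*}
for some stochastic postprocessing matrix $\nu$ and states $\xi_{xy} \in \shv$, and this already gives $\A^\J \preceq \A^\I$ (as noted after that corollary). Swapping the roles of $\I$ and $\J$ and now using indecomposability of $\J$ symmetrically yields $\A^\I \preceq \A^\J$, so the two induced POVMs are postprocessing equivalent.

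The crucial technical step is to upgrade these to rank-1 POVMs. Since $\J$ is indecomposable, each $\J_y$ has Kraus rank one and hence Choi rank one; but the Choi operator of the representation above is the sum of positive semidefinite operators $\sum_{x} \nu_{xy} (\A^\I(x))^T \otimes \xi_{xy}$. A rank-one sum of PSD operators forces every nonzero summand to lie in the one-dimensional range of the total, hence to be rank-one itself and a scalar multiple of it. Combined with the multiplicativity $\mathrm{rank}(A \otimes B) = \mathrm{rank}(A)\mathrm{rank}(B)$ for tensor products, this forces $\A^\I(x)$ to have rank at most one for every pair with $\nu_{xy} > 0$, and the corresponding $\xi_{xy}$ to be pure. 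Stochasticity of $\nu$ guarantees that for each $x$ there is at least one $y$ with $\nu_{xy} > 0$, so $\A^\I$ is rank-1; the symmetric argument gives the same for $\A^\J$. Lemma~\ref{lemma:ind-map} then identifies both $\I$ and $\J$ as measure-and-prepare instruments with pure output states, completing this direction.

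For the converse I would assume $\A^\I$ and $\A^\J$ are postprocessing equivalent rank-1 POVMs. Lemma~\ref{lemma:ind-map} applied to the indecomposable $\I$ and analogously to $\J$ makes both instruments measure-and-prepare. Postprocessing equivalence gives $\A^\J \preceq \A^\I$, so $\A^\I$ itself is a common upper bound of the two POVMs in the postprocessing order and thus $\A^\I \comp \A^\J$; Prop.~\ref{prop:m-a-p-comp1} then upgrades this to $\I \comp \J$. The main obstacle I anticipate is the Choi rank-one step in the forward direction: one must carefully justify that rank-one-ness propagates from a sum of PSD operators to each summand, and correctly handle the cases of vanishing $\nu_{xy}$ or vanishing $\A^\I(x)$ so that no information about the support of $\A^\I$ is lost. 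Once this step is in place, the rest of the proof is a direct chain of applications of Cor.~\ref{cor:ind-comp-m-a-p}, Lemma~\ref{lemma:ind-map}, and Prop.~\ref{prop:m-a-p-comp1}.
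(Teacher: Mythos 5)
Your proof is correct, and the converse direction coincides with the paper's. The forward direction, however, takes a genuinely different route at the key step. The paper applies Cor.~\ref{cor:ind-comp-m-a-p} only once (to get $\A^\J \preceq \A^\I$ and the representation of $\J$), extracts from the indecomposability of $\J$ that the $\xi_{xy}$ are independent of $x$ and that $\A^\J$ is rank-1, and then reaches postprocessing equivalence by invoking the external fact that rank-1 POVMs are postprocessing maximal (Martens--de~Muynck), together with the observation that the equivalence class of a rank-1 POVM contains only rank-1 POVMs, to transfer rank-1-ness to $\A^\I$. You instead exploit the symmetry of compatibility to apply Cor.~\ref{cor:ind-comp-m-a-p} in both directions, which gives the two postprocessing relations (hence equivalence) immediately, and you obtain rank-1-ness of \emph{each} induced POVM directly from a Choi-rank argument: since $J(\J_y)=\sum_x \nu_{xy}\,(\A^\I(x))^T\otimes\xi_{xy}$ is a rank-one sum of positive semidefinite operators, every nonzero summand is rank one, and rank multiplicativity of the tensor product forces $\A^\I(x)$ to be rank one whenever $\nu_{xy}>0$; stochasticity of $\nu$ then covers every $x$ with $\A^\I(x)\neq O$. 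This is sound (the PSD step is the standard fact that the range of each summand lies in the range of the sum), and your handling of vanishing $\nu_{xy}$ and $\A^\I(x)=O$ is exactly the care needed. What your approach buys is self-containedness: it avoids the appeal to postprocessing maximality of rank-1 POVMs and to the structure of their equivalence classes. What the paper's approach buys is that it exhibits the measure-and-prepare form of $\J$ explicitly along the way (identifying $\xi_{xy}=\xi_y$), rather than recovering it only at the end through Lemma~\ref{lemma:ind-map}; in substance the paper's claim that $\A^\J$ is rank-1 ``as otherwise one could decompose $\J_y$'' is the same Kraus/Choi-rank observation you make, just applied on the other side of the postprocessing relation.
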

\begin{proof}
    Let $\I \in \ins(\Omega, \hi, \hik)$ and $\J \in \ins(\Lambda, \hi, \hv)$ be two indecomposable instruments. Let first $\A^\I$ and $\A^\J$ be postprocessing equivalent rank-1 POVMs. From Lemma \ref{lemma:ind-map} we see that $\I$ and $\J$ are measure-and-prepare instruments. Moreover, the induced POVMs $\A^\I$ and $\A^\J$ are compatible since they are postprocessing equivalent. Thus, by Prop. \ref{prop:m-a-p-comp1} instruments $\I$ and $\J$ are compatible.

    On the other hand, let now $\I$ and $\J$ be compatible. From Cor. \ref{cor:ind-comp-m-a-p} we have that $\J$ must be of the form given by Eq. \eqref{eq:ind-m-a-p}. Given the details of the proof of Cor. \ref{cor:ind-comp-m-a-p}, from the indecomposability of $\J$ it follows that $\xi_{xy} = \xi_{x'y} =: \xi_y$ for all $x, x' \in \Omega$ and $y \in \Lambda$ so that $\J$ is a measure-and-prepare instrument explicitly given as
    \begin{equation}
        \J_y(\varrho) = \tr{\sum_{x \in \Omega} \nu_{xy} \A^\I(x) \varrho} \xi_y
    \end{equation}
    for all $y \in \Lambda$ and $\varrho \in \sh$. Furthermore, the indecomposability of $\J$ also implies that the prepared states $\{\xi_y\}_{y \in \Lambda} $ must be rank-1, i.e., pure states, and that the measured POVM $\A^\J$ must also be rank-1 (as otherwise one could decompose the operations $\J_y$ into a nontrivial sum of other operations). As noted before, $\A^\J$ is a postprocessing of $\A^\I$, and since rank-1 POVMs are postprocessing maximal (or postprocessing clean) \cite{MartensMuynck90}, meaning that all POVMs which can be used to postprocess a postprocessing maximal POVM from are postprocessing equivalent with the maximal POVM.  It follows that actually $\A^\J$ must be postprocessing equivalent with $\A^\I$, i.e., $\A^\I \preceq \A^\J$ and $\A^\J \preceq \A^\I$. We note that two POVMs are postprocessing equivalent if and only if all of the effects of one of them are proportional to the effects of the other one (see e.g. \cite{LeppajarviSedlak21}). Thus, it follows that the postprocessing equivalence class of any rank-1 POVM only contains rank-1 POVMs so that in particular $\A^\I$ must also be rank-1. From Lemma \ref{lemma:ind-map} it then follows that $\I$ must be a measure-and-prepare instrument, which prepares only pure state and the claim follows.
\end{proof}

\section{Conclusions}\label{sec:7}
Here we would like to sketch two directions in which we hope our results could be extended further. Let us first return to channel compatibility. Joint channel for a pair of compatible channels can be seen as an object, which redistributes quantum information from its input into its two outputs and possibly also erases part of it. Complementary channel to a given channel can be thus understood as a specification of that part of quantum information, which can be retained, while creating the given channel. Similar view can be adopted also for POVMs. Namely, any joint POVM for a pair of compatible POVMs can be realized as a channel that distributes quantum state on the input into bipartite quantum system, whose parts are measured by a pair of independent measurements, each of them determining one of the outcomes of the joint POVM. 

Finally, from the proof of Proposition \ref{prop:IncompAndPostproc} one can conclude that also for compatible instruments there always exists such a realization of the joint instrument, where first step is a quantum channel distributing quantum information into two-partite output followed by individual instruments each producing one of the outcomes of the joint instrument. From this point of view compatibility is an interplay of what features of quantum states can be simultaneously distributed (broadcasted) to the multiple outputs. We remind that such distributing quantum channel might be partly (or even fully) a measure-and-prepare channel, which means that some "measured" features can be distributed simultaneously to all outputs. Morever, several distributing channels might lead to the realization of the same joint device. This is why it seems more practical than the above approach to characterize pairwise compatibility using complementary instruments, which we introduced in this manuscript. 

For a quantum channel its complementary instrument coincides with the notion of complementary channel. For a POVM one of the complementary instruments is the Lüders instrument defined by the same POVM.  Our Proposition \ref{prop:IncompAndPostproc} claims that if we want to find all devices that are compatible with a fixed device (POVM, channel, instrument) it suffice to characterize all devices resulting from the (instrument) postprocessing of its complementary quantum instrument. We believe that this approach can bring more intuitive understanding even into the compatibility of two POVMs. More precise statement in this direction is presented  in our Corollary \ref{cor:povm-povm-comp} and it essentially states that each element of a compatible POVM is a sum of (positive-semidefinite) pieces into which, the original POVM elements were chopped. While the above statement can be easily inferred directly from the definition of the joint measurement, Corollary \ref{cor:povm-povm-comp} gives one particular way of measuring the compatible POVMs in a sequential way by using the same Hilbert space unlike in the case of using the least disturbing channel as in \cite{HeinosaariMiyadera13}.

Second direction in which it might be interesting to investigate incompatibility of instruments is how to turn such incompatibility into a resource. Thus, one would look for some problem or suitably chosen optimization task in which a given  pair of incompatible instruments would perform strictly better than any compatible pair. This has been already considered in the general case of channels with mixed quantum-classical outputs in \cite{CarmeliHeinosaariMiyaderaToigo19} where it is shown that incompatibility gives a certain advantage in particular quantum state discrimination tasks. We believe that considering the same problem in the quantum instrument formulation could bring further intuition in identifying such tasks where an advantage emerges.

\section*{Acknowledgments}
We want to thank Teiko Heinosaari for the fruitful discussions around the topic. We also thank the anonymous referees for their suggestions and comments which greatly helped to improve the presentation of the manuscript. L.L. acknowledges that his part of the project has received funding from the European Union's Horizon 2020 Research and Innovation Programme under the Programme SASPRO 2 COFUND Marie Sklodowska-Curie grant agreement No. 945478. M.S. and L.L. were supported by projects APVV22-0570 (DeQHOST) and VEGA 2/0183/21 (DESCOM). M.S. was further supported by funding from QuantERA, an ERA-Net cofund in Quantum Technologies, under the project eDICT.

\end{document}